\newtheorem{fact}{Fact}
\newtheorem{lemma}{Lemma}
\newtheorem{proposition}{Proposition}
\newtheorem{theorem}{Theorem}
\theoremstyle{definition}
\newtheorem{remark}{Remark}
\newtheorem{strategy}{Strategy}
\newcommand{\1}{\mathbb{1}}
\newcommand{\aff}{\mathrm{aff}}
\newcommand{\id}{\mathrm{id}}
\newcommand{\rk}{\mathrm{rank}}
\newcommand{\spn}{\mathrm{span}}
\newcommand{\tr}{\mathrm{Tr}}
\newcommand{\wang}{\textnormal{\begin{CJK}{UTF8}{bsmi}王\end{CJK}}}
\newcommand{\abb}[1]{{\textnormal{#1}}} 
\newcommand{\ch}[1]{{\mathcal{#1}}} 
\newcommand{\g}[1]{{\widehat{#1}}} 
\newcommand{\gd}[1]{{\mathbf{#1}}} 
\newcommand{\s}[1]{{\mathscr{#1}}} 
\newcommand{\spa}[1]{{\mathds{#1}}} 
\newcommand{\bra}[1]{{\langle#1\rvert}}
\newcommand{\ket}[1]{{\lvert#1\rangle}}
\newcommand{\op}[2]{{\ket{#1}\!\bra{#2}}}
\newcommand{\mclose}{\mathclose{}}
\newcommand{\fleft}{\mathopen{}\left}
\newcommand{\fright}{\aftergroup\mclose\right}
\definecolor{darkblue}{rgb}{0,0,0.6}
\definecolor{darkgreen}{rgb}{0,0.45,0.1}
\definecolor{darkred}{rgb}{0.5,0,0}
\begin{document}


\title{Retrocausal Capacity of a Quantum Channel: \\ Communicating Through Noisy Closed Timelike Curves}

\author{Kaiyuan Ji}
\email{kj264@cornell.edu}
\affiliation{School of Electrical and Computer Engineering, Cornell University, Ithaca, New York 14850, USA}
\affiliation{Department of Mechanical Engineering, Massachusetts Institute of Technology, 77 Massachusetts Avenue, Cambridge, Massachusetts 02139, USA}

\author{Seth Lloyd}
\email{slloyd@mit.edu}
\affiliation{Department of Mechanical Engineering, Massachusetts Institute of Technology, 77 Massachusetts Avenue, Cambridge, Massachusetts 02139, USA}

\author{Mark M. Wilde}
\email{wilde@cornell.edu}
\affiliation{School of Electrical and Computer Engineering, Cornell University, Ithaca, New York 14850, USA}

\date{\today}


\begin{abstract}
We study the capacity of a quantum channel for retrocausal communication, where messages are transmitted backward in time, from a sender in the future to a receiver in the past, through a noisy postselected closed timelike curve mathematically represented by the channel.  We completely characterize the one-shot retrocausal quantum and classical capacities, and we show that the corresponding asymptotic capacities are equal to the average and sum, respectively, of the channel's max-information and its regularized Doeblin information.  This endows these information measures with a novel operational interpretation.  Furthermore, our characterization can be generalized beyond quantum channels to all completely positive maps.  This imposes information-theoretic limits on transmitting messages via postselected-teleportation-like mechanisms with arbitrary initial- and final-state boundary conditions, including those considered in various black-hole final-state models.
\end{abstract}



\maketitle
\let\oldaddcontentsline\addcontentsline
\renewcommand{\addcontentsline}[3]{}


\textbf{\textit{Introduction}}---The theory of general relativity admits the existence of closed timelike curves (CTCs)~\cite{vanstockum1938GravitationalFieldDistribution,godel1949ExampleNewType,bonnor1980RigidlyRotatingRelativistic,gott1991ClosedTimelikeCurves}.  Such a curve sends a ``chronology-violating'' system backward in time, which can then interact with surroundings of its own past.  The existence of CTCs could lead to perplexing consequences such as violation of causality and escape from a black hole~\cite{horowitz2004BlackHoleFinal,gottesman2004CommentBlackHole,lloyd2006AlmostCertainEscape,lloyd2014UnitarityBlackHole}, and it is challenged by grandfatherlike paradoxes.  To formalize the discussion of these consequences and mitigate paradoxes, abstract quantum-mechanical models of CTCs have been developed, with two notable examples being the Deutschian model~\cite{deutsch1991QuantumMechanicsClosed} based on a self-consistency condition and the postselected model~\cite{bennett2002LectureNotes,lloyd2011ClosedTimelikeCurves,lloyd2011QuantumMechanicsTime,svetlichny2009EffectiveQuantumTime} based on a postselected version of quantum teleportation~\cite{bennett1993TeleportingUnknownQuantum}.  Access to a noiseless CTC of one kind or the other has been shown to unleash stunning information-processing power, such as the ability to distinguish nonorthogonal states perfectly~\cite{brun2009LocalizedClosedTimelike,brun2012PerfectStateDistinguishability}, to clone~\cite{brun2013QuantumStateCloning}, and to speed up computations~\cite{aaronson2005QuantumComputingPostselection,aaronson2009ClosedTimelikeCurves,lloyd2011ClosedTimelikeCurves} (although its interpretation has been the subject of debate~\cite{bennett2009CanClosedTimelike,ralph2010InformationFlowQuantum,cavalcanti2012PreparationProblemNonlinear}).  Recent works have also demonstrated advantages brought by resources simulable using noiseless postselected CTCs (P-CTCs) in other tasks such as metrology~\cite{arvidsson-shukur2020QuantumAdvantagePostselected,arvidsson-shukur2023NonclassicalAdvantageMetrology,song2024AgnosticPhaseEstimation}, resource manipulation~\cite{regula2022ProbabilisticTransformationsQuantum}, hypothesis testing~\cite{regula2024PostselectedQuantumHypothesis}, and communication~\cite{ebler2018EnhancedCommunicationAssistance,chiribella2021IndefiniteCausalOrder,ji2024PostselectedCommunicationQuantum}.

In this paper, we investigate fundamental limits on the efficiency of communication~\cite{shannon1948MathematicalTheoryCommunication} through \emph{noisy} P-CTCs.  A noisy P-CTC can mathematically be represented by a quantum channel, in the sense that information traveling through the noisy P-CTC effectively evolves as if passing through the channel, except for ending up in the past.  Therefore, another way of phrasing the question is: how does the capacity of a quantum channel change, if it were to be used for transmitting messages backward in time (as opposed to forward in time as in conventional communication settings~\cite{shannon1948MathematicalTheoryCommunication,holevo1973BoundsQuantityInformation,schumacher1996SendingEntanglementNoisy,schumacher1997SendingClassicalInformation,lloyd1997CapacityNoisyQuantum,holevo1998CapacityQuantumChannel,bennett1999EntanglementassistedClassicalCapacity,shor2002QuantumChannelCapacity,devetak2005PrivateClassicalCapacity,cubitt2011ZeroerrorChannelCapacity,duan2016NosignallingassistedZeroerrorCapacity,khatri2024PrinciplesQuantumCommunication})?  We term this unconventional communication setting \emph{retrocausal communication}.  A distinctive feature of retrocausal communication is the possible formation of causal loops, which is counterintuitive, yet perfectly reasonable granted the existence of the noisy P-CTC.  In fact, the communication power of the noisy P-CTC can only be fully unlocked by exploiting such an exotic structure of spacetime.  Our main technical result is to establish the Shannon-theoretic limit of retrocausal communication, accompanied by an explicit construction of an optimal strategy in the one-shot regime.

Our use of the postselected model of CTCs is motivated by several considerations.  P-CTCs, based on postselected teleportation, can be simulated experimentally with reasonable overhead~\cite{lloyd2011ClosedTimelikeCurves}, and they have been shown to be consistent~\cite{lloyd2011QuantumMechanicsTime} with path-integral approaches~\cite{morris1988WormholesTimeMachines,kim1991VacuumFluctuationsPrevent,boulware1992QuantumFieldTheory,politzer1992SimpleQuantumSystems,politzer1994PathIntegralsDensity,hartle1994UnitarityCausalityGeneralized} to CTCs.  Most importantly, a noiseless P-CTC amounts to an identity channel from the future to the past, suggesting that the model is consistent with the formalism of quantum Shannon theory~\cite{wilde2017QuantumInformationTheory,hayashi2017QuantumInformationTheory,watrous2018TheoryQuantumInformation,khatri2024PrinciplesQuantumCommunication} and suitable for studying CTCs as media for communication.  This contrasts with the Deutschian model of CTCs~\cite{deutsch1991QuantumMechanicsClosed}, in which the chronology-violating system becomes decorrelated from the environment after traveling backward in time, so that partial information (such as one part of an entangled state) can never be faithfully transmitted from the future to the past.  Moreover, P-CTCs are closely related to the Horowitz-Maldacena final-state model of black-hole evaporation~\cite{horowitz2004BlackHoleFinal}, which was proposed to reconcile unitarity with Hawking's semiclassical arguments~\cite{hawking1976BreakdownPredictabilityGravitational} and inspired by a time-symmetric modification of quantum mechanics~\cite{aharonov1964TimeSymmetryQuantum,griffiths1984ConsistentHistoriesInterpretation,gell-mann1994TimeSymmetryAsymmetry}.  In variations of the model~\cite{gottesman2004CommentBlackHole,lloyd2006AlmostCertainEscape,lloyd2014UnitarityBlackHole} that allow for arbitrary initial- or final-state boundary conditions, nonunitary postselected-teleportation-like mechanisms resembling noisy P-CTCs could emerge.

\textbf{\textit{Noisy P-CTCs}}---A chronology-violating system traveling through a noiseless P-CTC~\cite{bennett2002LectureNotes,lloyd2011ClosedTimelikeCurves,lloyd2011QuantumMechanicsTime,svetlichny2009EffectiveQuantumTime} can be thought of as being teleported to its past under postselecting the event that the Bell measurement in the teleportation protocol~\cite{bennett1993TeleportingUnknownQuantum} yields the desirable outcome that requires no further action to complete the protocol.  This time-travel mechanism is ``noiseless,'' as the past and future versions of the system are identical.  Now consider a noisy mechanism of time travel, which turns a future system $A$ into a past system $B$ under the effect of a quantum channel $\ch{N}_{A\to B}$~\cite{genkina2012OptimalProbabilisticSimulation}.  A natural such mechanism is to let the future system travel through a noiseless P-CTC and then be acted upon ordinarily by the channel.  We refer to this mechanism as a ``noisy'' P-CTC and simply represent it by the channel $\ch{N}_{A\to B}$, with the implicit assumption that $B$ precedes $A$ in time.

P-CTCs can be combined with other ``chronology-respecting'' processes.  For instance, given a noisy P-CTC represented by $\ch{N}_{A\to B}$, one can feed $B$ into a bipartite channel $\ch{T}_{EB\to FA}$ as part of the latter's input and then feed the output $A$ of $\ch{T}_{EB\to FA}$ into the noisy P-CTC's entrance (see Fig.~\ref{fig:noisy-curve}).  This might result in a causal loop $B\to A\to B$, yet it is perfectly reasonable granted the existence of the noisy P-CTC because $B$ can influence $A$ via $\ch{T}_{EB\to FA}$ without violating chronology.  Potential paradoxes arising from such a loop can be addressed within the model of P-CTCs~\cite{lloyd2011ClosedTimelikeCurves}.  The resulting transformation from $E$ to $F$ is then represented by the following nonlinear map (see Fig.~\ref{fig:noisy-postselection})~\cite{lloyd2011ClosedTimelikeCurves,lloyd2011QuantumMechanicsTime}:
\begin{align}
	\rho_{RE}&\mapsto\frac{\tr_{AA'}\fleft[\Phi_{AA'}\ch{T}_{EB\to FA}\fleft[\rho_{RE}\otimes\ch{N}_{A\to B}\fleft[\Phi_{AA'}\fright]\fright]\fright]}{\tr\fleft[\Phi_{AA'}\ch{T}_{EB\to FA}\fleft[\rho_{RE}\otimes\ch{N}_{A\to B}\fleft[\Phi_{AA'}\fright]\fright]\fright]}, \label{eq:curve}
\end{align}
where $R$ is an arbitrary reference system and $\Phi_{AA'}\equiv(1/d_A)\sum_{i,j}\op{i}{j}_A\otimes\op{i}{j}_{A'}$ is the standard maximally entangled state between $A$ and $A'$.

\begin{figure}[t]
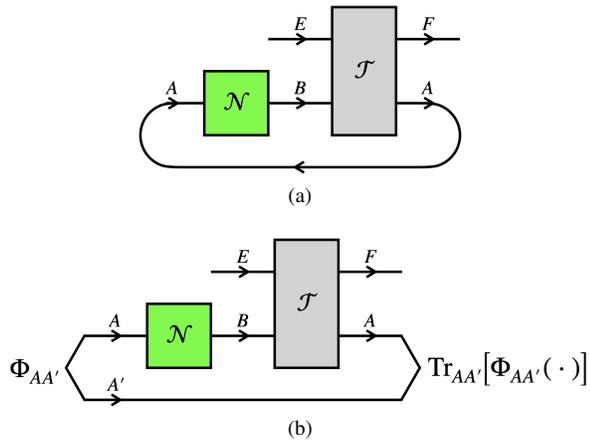

\subfloat[\label{fig:noisy-curve}]{\includegraphics[scale=0.24]{figures/noisy-curve}} \\
\subfloat[\label{fig:noisy-postselection}]{\includegraphics[scale=0.24]{figures/noisy-postselection}}
\caption{Enclosing a bipartite channel $\ch{T}_{EB\to FA}$ with a noisy P-CTC represented by a channel $\ch{N}_{A\to B}$ (Fig.~\ref{fig:noisy-curve}) is equivalent to preprocessing $\ch{T}_{EB\to FA}$ with $\ch{N}_{A\to B}$ and performing postselected teleportation from the output $A$ of $\ch{T}_{EB\to FA}$ to the input $A$ of $\ch{N}_{A\to B}$ (Fig.~\ref{fig:noisy-postselection} with renormalization).}
\label{fig:noisy}
\end{figure}

\textbf{\textit{Retrocausal communication}}---The setting of retrocausal communication can perhaps be best explained in reference to the film \textit{Interstellar}~\cite{thorne2014ScienceInterstellar} (see Fig.~\ref{fig:interstellar}).  In the film, a father in the future discovers that he can communicate to his young daughter in the past via some noisy mechanism.  Due to its chronology-violating nature, we model this mechanism as a noisy P-CTC represented by a quantum channel $\ch{N}_{A\to B}$.  We are concerned with the maximum possible efficiency at which messages can be reliably transmitted from the father to the daughter through the noisy P-CTC under an encoding and decoding of their choice.  This efficiency is measured by the \emph{retrocausal capacity} of $\ch{N}_{A\to B}$, which we will define shortly~\footnote{The retrocausal capacity of a channel is also referred to as its \emph{flux capacity}.}.  One may wonder how this exotic scenario differs from a conventional communication setting in which messages are transmitted forward in time.  The difference becomes manifest when taking into account the following possibility (as is also visible in the film~\cite{thorne2014ScienceInterstellar}): the father, who is in the future, may retrieve his memory of past events that he has witnessed, even including the daughter's decoding of the message which he is about to send!  It would thus not be surprising that he will consult his memory of the daughter's decoding when encoding his message, so as to maximize the efficiency of the communication.

\begin{figure}[t]
\includegraphics[width=\linewidth]{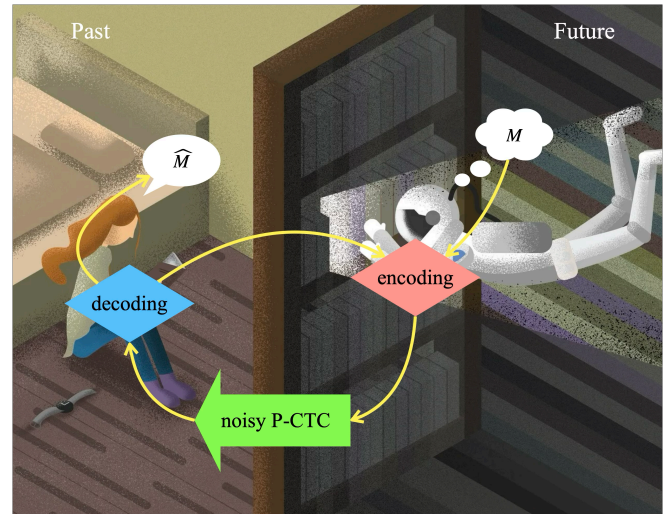}
\caption{In the film \textit{Interstellar}~\cite{thorne2014ScienceInterstellar}, a father, who is in the future, encodes his message to his daughter, who is in the past, in a system traveling through a noisy P-CTC.  The father can retrieve and consult his memory of past events that he has witnessed, so that his encoding could be influenced by what results from the daughter's decoding, forming a causal loop.  Background artwork: \textit{Bookshelf Scene} (2018), \copyright Alexey Nefed (motion.nalex), reproduced with permission~\cite{nefed2018BookshelfScene}.}
\label{fig:interstellar}
\end{figure}

Evidently, this allows what results from the daughter's decoding to influence the father's encoding, which in turn influences the daughter's decoding via the noisy P-CTC, thus forming a causal loop.  However, to stress again, granted the existence of the noisy P-CTC, the formation of this loop is physically justified: the father and the daughter operate in a perfectly chronology-respecting manner, including the father's retrieving and consulting his memory of past events; the only chronology-violating element here is the noisy P-CTC, which models a given mechanism beyond their control.  Note that, if the father is not allowed to retrieve or consult his memory, then the setting under consideration would not be different (at least mathematically) from conventional (forward-in-time) communication settings, since everything would anyway abide by a definite causal order and could thus be realized without necessarily violating chronology.

Following the intuition explained above, the most general form of strategy that the father and the daughter can adopt, without violating chronology (when not invoking the noisy P-CTC), can be described as follows (see Fig.~\ref{fig:communication-retrocausal}): (i) in the past, the daughter receives $B$ from the exit of the noisy P-CTC, decodes the message by applying a channel $\ch{D}_{B\to\g{M}L}$ with $\g{M}$ storing her recovery of the message, and leaves $L$ in a noiseless quantum memory for the father to retrieve in the future; (ii) in the future, the father retrieves $L$, encodes his message stored in $M$ by applying a channel $\ch{E}_{ML\to A}$, and feeds $A$ into the entrance of the noisy P-CTC.  The resulting transformation from the father's message system to the daughter's recovery system is represented by a nonlinear map $\ch{M}_{M\to\g{M}}$, whose action on an arbitrary pure state $\psi_{RM}$, with $R$ an arbitrary reference system, is given by
\begin{align}
	&\ch{M}_{M\to\g{M}}\fleft[\psi_{RM}\fright]\equiv \notag\\
	&\frac{\tr_{AA'}\fleft[\Phi_{AA'}\ch{E}_{ML\to A}\fleft[\psi_{RM}\otimes\ch{D}_{B\to L\g{M}}\circ\ch{N}_{A\to B}\fleft[\Phi_{AA'}\fright]\fright]\fright]}{\tr\fleft[\Phi_{AA'}\ch{E}_{ML\to A}\fleft[\psi_{RM}\otimes\ch{D}_{B\to L\g{M}}\circ\ch{N}_{A\to B}\fleft[\Phi_{AA'}\fright]\fright]\fright]}, \label{eq:communication}
\end{align}
following the machinery of Eq.~\eqref{eq:curve}.

\begin{figure}[t]
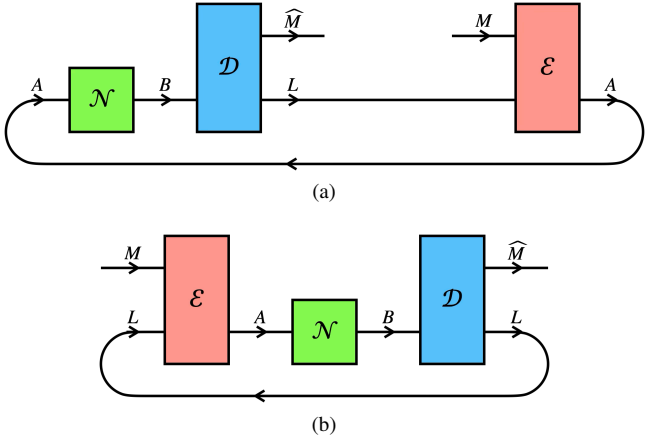

\subfloat[\label{fig:communication-retrocausal}]{\includegraphics[scale=0.24]{figures/communication-retrocausal}} \\
\subfloat[\label{fig:communication-assisted}]{\includegraphics[scale=0.24]{figures/communication-assisted}}
\caption{Retrocausal communication through a noisy P-CTC represented by a channel $\ch{N}_{A\to B}$ (Fig.~\ref{fig:communication-retrocausal}) is mathematically equivalent to forward-in-time communication through the channel $\ch{N}_{A\to B}$ with the assistance of a noiseless P-CTC (Fig.~\ref{fig:communication-assisted}).}
\label{fig:communication}
\end{figure}

Another observation is that, in the presence of a causal loop, the notion of past and future becomes elusive and can be treated flexibly.  For instance, if one imagines the ``first-hand experience'' of a message traveling through the noisy P-CTC, then the conclusion would be that the father resides in the ``past'' where it sets out and the daughter resides in the ``future'' where it arrives.  Judging from this experience, the noisy P-CTC would just appear as an ordinary channel, whereas the memory $L$ travels through a noiseless P-CTC.  This suggests that, mathematically, retrocausal communication (see Fig.~\ref{fig:communication-retrocausal}) is equivalent to forward-in-time communication assisted by a noiseless P-CTC (see Fig.~\ref{fig:communication-assisted}).  For coherence of presentation, we will continue with our original formulation of retrocausal communication (see Fig.~\ref{fig:communication-retrocausal}), where the father is in the future and the daughter is in the past.

\textbf{\textit{Retrocausal capacity}}---To quantify the quality of transmitting quantum messages via the nonlinear map $\ch{M}_{M\to\g{M}}$ [see Eq.~\eqref{eq:communication}], we define its worst-case quantum infidelity as
\begin{align}
	q_\abb{infid}\fleft(\ch{M}\fright)&\coloneq1-\inf_{\psi_{RM}}\tr\fleft[\psi_{R\g{M}}\ch{M}_{M\to\g{M}}\fleft[\psi_{RM}\fright]\fright],
\end{align}
where the infimization is over all pure states with $R$ of an arbitrary dimension.  We then define the \emph{one-shot retrocausal quantum capacity} of the quantum channel $\ch{N}_{A\to B}$ as the maximum number of qubits that can be transmitted within an infidelity $\varepsilon\in[0,1]$ through the noisy P-CTC that it represents:
\begin{align}
	Q_\abb{retro}^\varepsilon\fleft(\ch{N}\fright)&\coloneq\sup_{\ch{E}_{ML\to A},\ch{D}_{B\to\g{M}L}}\left\{\log_2d_M\colon q_\abb{infid}\fleft(\ch{M}\fright)\leq\varepsilon\right\},
\end{align}
where $\ch{M}_{M\to\g{M}}$ is given by Eq.~\eqref{eq:communication} and the supremization is over all choices of encoding and decoding (which are quantum channels) with $M$ and $L$ of arbitrary dimensions.  We define the \emph{asymptotic retrocausal quantum capacity} of $\ch{N}_{A\to B}$ as
\begin{align}
	Q_\abb{retro}\fleft(\ch{N}\fright)&\coloneq\inf_{\varepsilon\in(0,1)}\liminf_{n\to\infty}\frac{1}{n}Q_\abb{retro}^\varepsilon\fleft(\ch{N}^{\otimes n}\fright).
\end{align}

If the messages being transmitted are classical, then one can disregard the reference system $R$ and only consider all (perfectly distinguishable) symbols in an alphabet of size $d_M$.  The one-shot retrocausal classical capacity $C_\abb{retro}^\varepsilon\fleft(\ch{N}\fright)$ is then defined similarly to its quantum counterpart but with the worst-case classical error probability,
\begin{align}
	p_\abb{error}\fleft(\ch{M}\fright)&\coloneq1-\min_{m}\tr\fleft[\op{m}{m}_\g{M}\ch{M}_{M\to\g{M}}\fleft[\op{m}{m}_M\fright]\fright],
\end{align}
in place of $q_\abb{infid}(\ch{M})$, where the minimization is over all symbols in the alphabet.  The asymptotic retrocausal classical capacity $C_\abb{retro}(\ch{N})$ is defined accordingly.

To reiterate, the retrocausal capacities defined above assume that the channel $\ch{N}_{A\to B}$ represents a noisy P-CTC, with $B$ preceding $A$ in time.  They may not apply to other models of CTCs, nor do they suggest backward-in-time signaling in standard quantum theory, where chronology-violating resources are forbidden.

\textbf{\textit{Main result}}---For a quantum channel $\ch{N}_{A\to B}$ and a real number $\varepsilon\in(0,1)$, the one-shot retrocausal quantum and classical capacities are equal to, respectively,
\begin{align}
	Q_\abb{retro}^\varepsilon\fleft(\ch{N}\fright)&=\log_2\left\lfloor\sqrt{\frac{\varepsilon}{1-\varepsilon}2^{I_{\max}\fleft(\ch{N}\fright)+I_\abb{doe}\fleft(\ch{N}\fright)}+1}\right\rfloor, \label{eq:quantum}\\
	C_\abb{retro}^\varepsilon\fleft(\ch{N}\fright)&=\log_2\left\lfloor\frac{\varepsilon}{1-\varepsilon}2^{I_{\max}\fleft(\ch{N}\fright)+I_\abb{doe}\fleft(\ch{N}\fright)}+1\right\rfloor, \label{eq:classical}
\end{align}
where $\lfloor\cdot\rfloor$ denotes the floor function.  The max-information~\cite{konig2009OperationalMeaningMin} is defined as
\begin{align}
	I_{\max}\fleft(\ch{N}\fright)&\coloneq\inf_{\lambda,\sigma_B}\left\{\log_2\lambda\colon\ch{N}_{A\to B}\fleft[\Phi_{A'A}\fright]\leq\lambda\pi_{A'}\otimes\sigma_B\right\},
\end{align}
where $\pi_{A'}\equiv\1_{A'}/d_{A'}$ and the infimization is over all real numbers and all quantum states of $B$.  The Doeblin information~\cite{george2025QuantumDoeblinCoefficients} is defined as
\begin{align}
	I_\abb{doe}\fleft(\ch{N}\fright)&\coloneq\inf_{\lambda,\tau_B}\left\{\log_2\lambda\colon\pi_{A'}\otimes\tau_B\leq\lambda\ch{N}_{A\to B}\fleft[\Phi_{A'A}\fright]\right\},
\end{align}
where the infimization is over all real numbers and all unit-trace \emph{Hermitian} operators in $B$.

The asymptotic retrocausal quantum and classical capacities are given by
\begin{align}
	Q_\abb{retro}\fleft(\ch{N}\fright)&=\frac{1}{2}C_\abb{retro}\fleft(\ch{N}\fright)=\frac{1}{2}\left(I_{\max}\fleft(\ch{N}\fright)+I_\abb{doe}^\infty\fleft(\ch{N}\fright)\right), \label{eq:asymptotic}
\end{align}
where $I_\abb{doe}^\infty(\ch{N})\coloneq\lim_{n\to\infty}(1/n)I_\abb{doe}(\ch{N}^{\otimes n})$ is the regularized Doeblin information.

Several implications of our result are in order.  First, our result provides exact closed-form expressions for the one-shot retrocausal quantum and classical capacities.  This is notable considering that having a one-shot characterization up to this precision has been extremely rare in quantum Shannon theory.  Second, our result implies a complete characterization of the error exponent and the strong-converse exponent of retrocausal communication (see Supplemental Material~\cite[Secs.~\ref{sec:quantum-exponent} and \ref{sec:classical-exponent}]{Note2}\vphantom{\footnote{See Supplemental Material for technical background, detailed proofs of our results, numerical examples, and additional discussions, which includes Refs.~\cite{datta2009MinMaxrelativeEntropies,gupta2015MultiplicativityCompletelyBounded,diaz2018UsingReusingCoherence,nuradha2025MultivariateFidelities,ji2025BarycentricBoundsError,girardi2025QuantumUmlautInformation,fang2020QuantumChannelSimulation,wilde2020AmortizedChannelDivergence,chitambar2023CommunicationValueQuantum,chitambar2019QuantumResourceTheories,gour2025QuantumResourceTheories,vidal1999RobustnessEntanglement,harrow2003RobustnessQuantumGates,elitzur1992QuantumNonlocalityEach,lewenstein1998SeparabilityEntanglementComposite,datta2009MaxrelativeEntropyEntanglement,fekete1923UeberVerteilungWurzeln,chiribella2008TransformingQuantumOperations,burniston2020NecessarySufficientConditions,shor1995SchemeReducingDecoherence,schumacher1996QuantumDataProcessing,barnum1998InformationTransmissionNoisy,barnum2000QuantumFidelitiesChannel,bennett2002EntanglementassistedCapacityQuantum,holevo2002EntanglementassistedClassicalCapacity,matthews2012LinearProgramFinite,takagi2020ApplicationResourceTheory,gour2019ComparisonQuantumChannels,eggeling2002SemicausalOperationsAre,king2003CapacityQuantumDepolarizing,bennett1997CapacitiesQuantumErasure,giovannetti2005InformationcapacityDescriptionSpinchain}.}} for formal statements).  Third, our result provides a novel operational interpretation for the max- and Doeblin informations in the context of communication, contrasting, for instance, the previously known interpretation of the max-information in channel simulation (which is a setting reverse to communication)~\cite{duan2016NosignallingassistedZeroerrorCapacity} (see Supplemental Material~\cite[Sec.~\ref{sec:information}]{Note2} for more discussion).  Fourth, for several classes of channels (including measurement channels and certain covariant channels), the Doeblin information is additive~\cite{george2025QuantumDoeblinCoefficients}, implying that $I_\abb{doe}^\infty(\ch{N})=I_\abb{doe}(\ch{N})$.  This means that, for such channels, the asymptotic retrocausal capacities have a single-letter expression and are efficiently computable via semidefinite programs (SDPs) (see Supplemental Material~\cite[Sec.~\ref{sec:example}]{Note2} for numerical examples).  For general quantum channels, the Doeblin information is not always additive, and the efficient computability of its regularization is unknown.  In this case, we provide single-letter and SDP-computable upper and lower bounds on the asymptotic retrocausal capacities in Supplemental Material~\cite[Remark~\ref{rem:bounds}]{Note2}.

\textbf{\textit{Optimal strategy}}---To provide some intuition behind the derivation of our result, we outline a construction of an optimal strategy, which achieves the one-shot retrocausal quantum capacity precisely.  For a rigorous analysis of its performance and a proof of its optimality, see Supplemental Material~\cite[Sec.~\ref{sec:quantum}]{Note2}.  The strategy operates as follows (see Fig.~\ref{fig:strategy}).

\begin{figure}[t]
\includegraphics[width=\linewidth]{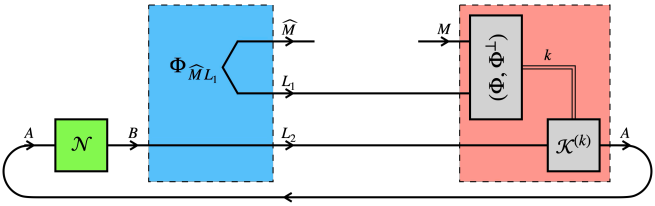}
\caption{Amplified probabilistic teleportation is an optimal strategy for retrocausal communication.  The blue and red boxes encapsulate the daughter's decoding and the father's encoding, respectively.}
\label{fig:strategy}
\end{figure}

In the past, the daughter receives $B$ from the exit of the noisy P-CTC.  She prepares the standard maximally entangled state between $\g{M}$ and a noiseless quantum memory $L_1$, keeps $B$ in another noiseless quantum memory $L_2$, and leaves both $L_1$ and $L_2$ for the father to retrieve in the future.  This concludes the daughter's decoding.  In the future, the father retrieves $L_1$ and $L_2$ and performs a Bell measurement on $M$ and $L_1$.  If the measurement ends up being a projection onto the standard maximally entangled state, then he applies a channel $\ch{K}_{L_2\to A}^{(0)}$ to $L_2$; otherwise he applies $\ch{K}_{L_2\to A}^{(1)}$.  Finally, he feeds $A$ into the entrance of the noisy P-CTC.  This concludes the father's encoding.

We call the above strategy \emph{amplified probabilistic teleportation}, and its underlying functioning can be understood as follows.  Instead of trying to code his message into $A$, which enters the noisy P-CTC, the father intends to ``teleport''~\cite{bennett1993TeleportingUnknownQuantum} the message probabilistically to the daughter via $L_1$.  Indeed, if the father's Bell measurement ends up being a projection onto the standard maximally entangled state, then what the daughter ends up with in $\g{M}$ is a perfect recovery of the message in $M$.  However, this ``desirable'' outcome occurs only with probability $1/d_M^2$, or at least so without invoking the noisy P-CTC.  Yet, as illustrated in Fig.~\ref{fig:strategy}, the remaining part of the strategy engineers a causal loop solely consisting of the noisy P-CTC, represented by $\ch{N}_{A\to B}$, and the channel $\ch{K}_{L_2\to A}^{(k)}$, conditioned on $k\in\{0,1\}$.  Due to the nonlinearity of such a loop (specifically, its capability of renormalizing the probabilities of certain events), the father can choose the channels $\ch{K}_{L_2\to A}^{(0)}$ and $\ch{K}_{L_2\to A}^{(1)}$ in such a way that the renormalized probability of the desirable outcome is maximally amplified.  If the renormalized probability of the desirable outcome can be amplified to at least $1-\varepsilon$ for a message system $M$ of dimension $d_M$, then $\log_2d_M$ qubits can indeed be ``teleported'' from $M$ to $\g{M}$ via $L_1$ subject to a failure probability at most $\varepsilon$, which upper bounds the overall infidelity of communication.  The max- and Doeblin informations of $\ch{N}_{A\to B}$ come to relevance in quantifying the noisy P-CTC's maximal amplification of the probability of the desirable outcome, due to their interpretations in terms of the maximum and minimum achievable singlet fractions, respectively~\cite{konig2009OperationalMeaningMin,george2025QuantumDoeblinCoefficients}.

\textbf{\textit{Discussion}}---We have proposed the setting of retrocausal communication based on the model of P-CTCs, demonstrating that exchanging the temporal locations of the sender and the receiver and ``reversing'' the chronological direction of the communication channel shifts the paradigm of quantum Shannon theory significantly.  In the one-shot regime, the fundamental limit on its efficiency can be determined precisely and achieved with a simple strategy contrasting with conventional coding schemes.  In the asymptotic regime, the limit can be incarnated as a resource inequality~\cite{devetak2008ResourceFrameworkQuantum} involving communication resources in both chronological directions:
\begin{align}
	\langle\overleftarrow{\ch{N}}\rangle+\infty[q\to q]&\geq\frac{1}{2}\left(I_{\max}\fleft(\ch{N}\fright)+I_\abb{doe}^\infty\fleft(\ch{N}\fright)\right)[q\leftarrow q],
\end{align}
where $\langle\overleftarrow{\ch{N}}\rangle$ denotes a noisy P-CTC represented by $\ch{N}_{A\to B}^{\otimes n}$ and $[q\to q]$ and $[q\leftarrow q]$ denote a noiseless quantum memory and a noiseless P-CTC, respectively, on $n$ qubits.  The noiseless quantum memory left by the receiver in the past and retrieved by the sender in the future plays an essential role in achieving this limit.  An interesting open question is how this limit is affected if the memory is noisy, say, represented by a channel $\ch{L}_{C\to D}$.  In other words, one may try to characterize the tradeoff between the rates $R$ and $Q$ in the following resource inequality:
\begin{align}
	\langle\overleftarrow{\ch{N}}\rangle+R\langle\overrightarrow{\ch{L}}\rangle&\geq Q[q\leftarrow q].
\end{align}

Another future direction is to explore the relation between our result and final-state models of black-hole evaporation~\cite{horowitz2004BlackHoleFinal,gottesman2004CommentBlackHole,lloyd2006AlmostCertainEscape,lloyd2014UnitarityBlackHole}.  The original final-state model proposed by Horowitz and Maldacena~\cite{horowitz2004BlackHoleFinal} postulates that black-hole evaporation can be described by postselected teleportation, which reconciles unitarity and Hawking's semiclassical arguments~\cite{hawking1976BreakdownPredictabilityGravitational}.  It was subsequently pointed out~\cite{gottesman2004CommentBlackHole,lloyd2006AlmostCertainEscape,lloyd2014UnitarityBlackHole} that, even assuming a final-state boundary condition, unitarity could still be spoiled if the initial- or final-state boundary condition can be arbitrary, and the evaporation would then be described by a postselected-teleportation-like mechanism represented by a completely positive map that is not necessarily trace preserving (see Supplemental Material~\cite[Sec.~\ref{sec:noisy}]{Note2} for details).  Notably, our characterization of the retrocausal capacity can be shown to apply to all completely positive maps and not just quantum channels (see Supplemental Material~\cite[Secs.~\ref{sec:quantum} and \ref{sec:classical}]{Note2} for proofs), and it is therefore general enough to account for final-state models with arbitrary initial- or final-state boundary conditions.  Further implications of our result for final-state models will be discussed in a future work~\cite{future}.

\begin{acknowledgements}
\textbf{\textit{Acknowledgments}}---We are grateful to Alexey Nefed (motion.nalex) for permission to reproduce his artwork \textit{Bookshelf Scene} (2018) as the background in Fig.~\ref{fig:interstellar}.  We thank Bartosz Regula, Eric Chitambar, and Theshani Nuradha for helpful discussions and comments.  K.J. and M.M.W. acknowledge support from the National Science Foundation under Grant No.~2329662 and from the Cornell School of Electrical and Computer Engineering.  This material is based upon work supported by, or in part by, the U. S. Army Research Laboratory and the U. S. Army Research Office under Contract/Grant No.~W911NF2310255 and by DoE under Contract No.~DE-SC0012704.
\end{acknowledgements}


\bibliographystyle{apsc}
\bibliography{Library}


\clearpage
\onecolumngrid
\setcounter{page}{1}
\setcounter{equation}{0}
\setcounter{figure}{0}
\renewcommand{\theequation}{S\arabic{equation}}
\renewcommand{\thefigure}{S\arabic{figure}}

\let\addcontentsline\oldaddcontentsline
\setcounter{secnumdepth}{3}

\makeatletter
\renewcommand\normalsize{\@setfontsize\normalsize{12}{14}}
\renewcommand\small{\@setfontsize\small{11}{13}}
\renewcommand\footnotesize{\@setfontsize\footnotesize{10}{12}}
\renewcommand\large{\@setfontsize\large{14}{17}}
\renewcommand\Large{\@setfontsize\Large{17}{20}}
\makeatother
\normalsize

\begin{center}
\textbf{\large Supplemental Material: Retrocausal Capacity of a Quantum Channel}
\end{center}

\tableofcontents

\section{Notation}
\label{sec:notation}

Let $\spa{R}$ denote the set of real numbers, $\spa{C}$ the set of complex numbers, and $\spa{N}$ the set of nonnegative integers.  For a nonnegative real number $x$, let $\lfloor x\rfloor\equiv\max_{d\in\spa{N}}\{d\colon d\leq x\}$ denote the largest integer not exceeding $x$.

Let $d_A\equiv\dim(\spa{H}_A)$ denote the dimension of a quantum system $A$ associated with a Hilbert space $\spa{H}_A$.  Let $\spa{B}_A$ denote the space of bounded linear operators acting on $\spa{H}_A$.  Let $\s{D}_A$ denote the set of quantum states of $A$ (i.e., the set of positive semidefinite unit-trace operators in $\spa{B}_A$), and let $\aff(\s{D}_A)$ denote the set of unit-trace Hermitian operators in $\spa{B}_A$ (i.e., the affine hull of $\s{D}_A$).  Let $\pi_A\equiv(1/d_A)\sum_{i=0}^{d_A-1}\op{i}{i}_A$ denote the uniform state of $A$.  Let $\Phi_{AA}\equiv(1/d_A)\sum_{i,j=0}^{d_A-1}\op{i}{j}_A\otimes\op{i}{j}_{A'}$ denote the standard maximally entangled state between two systems $A$ and $A'$ (with $A\cong A'$).  Let $\Upsilon_{M'M}\equiv(1/d_M)\sum_{m=0}^{d_M-1}\op{m}{m}_{M'}\otimes\op{m}{m}_M$ denote the standard classically maximally correlated state between two systems $M'$ and $M$ (with $M'\cong M$).  Let $\Omega_{M'\g{M}}\equiv\sum_{m=0}^{d_M-1}\op{m}{m}_{M'}\otimes\op{m}{m}_\g{M}$ denote the classical comparator between $M'$ and $\g{M}$ (with $M'\cong\g{M}$).

Let $\spa{L}_{A\to B}$ denote the space of linear maps from $\spa{B}_A$ to $\spa{B}_B$.  For a linear map $\ch{N}_{A\to B}\in\spa{L}_{A\to B}$, its Choi state is defined as $\Phi_{A'B}^\ch{N}\equiv\ch{N}_{A\to B}[\Phi_{A'A}]$.  Let $\s{C}_{A\to B}$ denote the set of quantum channels from $A$ to $B$ (i.e., the set of completely positive trace-preserving linear maps in $\spa{L}_{A\to B}$), and let $\s{L}_{A\to B}$ denote the set of completely positive maps in $\spa{L}_{A\to B}$.  For a unit-trace Hermitian operator $\sigma_B\in\aff(\s{D}_B)$, a corresponding replacement map $\ch{R}_{A\to B}^\sigma$ is defined by $\ch{R}_{A\to B}^\sigma[\cdot]\equiv\tr_A[\cdot]\sigma_B$, whose Choi state is given by $\Phi_{A'B}^{\ch{R}^\sigma}=\pi_{A'}\otimes\sigma_B$.

\section{Max-information and Doeblin information}
\label{sec:information}

The \emph{max-relative entropy} between two states $\rho_A,\sigma_A\in\s{D}_A$ is defined as~\cite{datta2009MinMaxrelativeEntropies}
\begin{align}
	D_{\max}\fleft(\rho\middle\|\sigma\fright)&\coloneq\inf_{\lambda\in\spa{R}}\left\{\log_2\lambda\colon\rho\leq\lambda\sigma\right\}.
\end{align}
We furthermore use the same definitions as above for the max-relative entropy when the arguments $\rho_A,\sigma_A\in\aff(\s{D}_A)$ are Hermitian operators that are not necessarily positive semidefinite.

The \emph{max-information} of a channel $\ch{N}_{A\to B}\in\s{C}_{A\to B}$ is defined as~\cite{konig2009OperationalMeaningMin,gupta2015MultiplicativityCompletelyBounded,diaz2018UsingReusingCoherence}
\begin{subequations}
\label{eq:max}
\begin{align}
	I_{\max}\fleft(\ch{N}\fright)&\coloneq\inf_{\sigma_B\in\s{D}_B}D_{\max}\fleft(\Phi^\ch{N}\middle\|\Phi^{\ch{R}^\sigma}\fright) \\
	&=\inf_{\substack{\lambda\in\spa{R}, \\ \sigma_B\in\s{D}_B}}\left\{\log_2\lambda\colon\Phi_{A'B}^\ch{N}\leq\lambda\pi_{A'}\otimes\sigma_B\right\}. \label{eq:max-SDP}
\end{align}
\end{subequations}
The \emph{Doeblin information} of a channel $\ch{N}_{A\to B}\in\s{C}_{A\to B}$ is defined as~\cite{george2025QuantumDoeblinCoefficients}
\begin{subequations}
\label{eq:doeblin}
\begin{align}
	I_\abb{doe}\fleft(\ch{N}\fright)&\coloneq\inf_{\tau_B\in\aff\fleft(\s{D}_B\fright)}D_{\max}\fleft(\Phi^{\ch{R}^\tau}\middle\|\Phi^\ch{N}\fright) \\
	&=\inf_{\substack{\lambda\in\spa{R}, \\ \tau_B\in\aff\fleft(\s{D}_B\fright)}}\left\{\log_2\lambda\colon\pi_{A'}\otimes\tau_B\leq\lambda\Phi_{A'B}^\ch{N}\right\}. \label{eq:doeblin-SDP}
\end{align}
\end{subequations}
Note that the infimization in $\tau_B$ above is over all unit-trace Hermitian operators.  We furthermore use the same definitions as above, for the max- and Doeblin informations, when the argument $\ch{N}_{A\to B}\in\s{L}_{A\to B}$ is a completely positive map that is not necessarily trace preserving.

\begin{remark}
For a function $\gd{D}\colon\bigcup_A\aff(\s{D}_A)\times\aff(\s{D}_A)\to\spa{R}\cup\{\infty\}$ satisfying the data-processing inequality (DPI) on pairs of unit-trace Hermitian operators, one can define a corresponding mutual information $\gd{I}$ of a channel $\ch{N}_{A\to B}\in\s{C}_{A\to B}$ as follows (see, e.g., Ref.~\cite[Definition~7.8.5]{khatri2024PrinciplesQuantumCommunication} for when $\gd{D}$ satisfies the DPI on pairs of states):
\begin{subequations}
\label{eq:mutual}
\begin{align}
	\gd{I}\fleft(\ch{N}\fright)&\coloneq\sup_{\rho_{RA}\in\s{D}_{RA}}\inf_{\sigma_B\in\aff\fleft(\s{D}_B\fright)}\gd{D}\fleft(\ch{N}_{A\to B}\fleft[\rho_{RA}\fright]\middle\|\rho_R\otimes\sigma_B\fright) \\
	&=\sup_{\rho_{A'}\in\s{D}_{A'}}\inf_{\sigma_B\in\aff\fleft(\s{D}_B\fright)}\gd{D}\fleft(\rho_{A'}^\frac{1}{2}\Phi_{A'B}^\ch{N}\rho_{A'}^\frac{1}{2}\middle\|\rho_{A'}^\frac{1}{2}\Phi_{A'B}^{\ch{R}^\sigma}\rho_{A'}^\frac{1}{2}\fright). \label{eq:mutual-1}
\end{align}
\end{subequations}
One can also define a corresponding ``reverse'' mutual information $\gd{I}^\abb{rev}$ of a channel as the mutual information corresponding to the reverse function $\gd{D}^\abb{rev}\colon(\rho,\sigma)\mapsto\gd{D}(\sigma\|\rho)$ (see, e.g., Refs.~\cite{nuradha2025MultivariateFidelities,ji2025BarycentricBoundsError,girardi2025QuantumUmlautInformation} for specific instances):
\begin{subequations}
\label{eq:reverse}
\begin{align}
	\gd{I}^\abb{rev}\fleft(\ch{N}\fright)&\coloneq\sup_{\rho_{RA}\in\s{D}_{RA}}\inf_{\tau_B\in\aff\fleft(\s{D}_B\fright)}\gd{D}\fleft(\rho_R\otimes\tau_B\middle\|\ch{N}_{A\to B}\fleft[\rho_{RA}\fright]\fright) \\
	&=\sup_{\rho_{A'}\in\s{D}_{A'}}\inf_{\tau_B\in\aff\fleft(\s{D}_B\fright)}\gd{D}\fleft(\rho_{A'}^\frac{1}{2}\Phi_{A'B}^{\ch{R}^\tau}\rho_{A'}^\frac{1}{2}\middle\|\rho_{A'}^\frac{1}{2}\Phi_{A'B}^\ch{N}\rho_{A'}^\frac{1}{2}\fright), \label{eq:rev-1}
\end{align}
\end{subequations}
where Eq.~\eqref{eq:rev-1} follows from applying Eq.~\eqref{eq:mutual-1} to $\gd{D}^\abb{rev}$.  Then following a similar argument to Ref.~\cite[Remark~2]{fang2020QuantumChannelSimulation} (also see Ref.~\cite[Lemma~12]{wilde2020AmortizedChannelDivergence}), it can be inferred that the max- and Doeblin informations in Eqs.~\eqref{eq:max} and \eqref{eq:doeblin} are equivalent, respectively, to the mutual and reverse mutual information in Eqs.~\eqref{eq:mutual} and \eqref{eq:reverse} corresponding to the max-relative entropy defined on pairs of unit-trace Hermitian operators.

The max-information has several operational interpretations in quantum Shannon theory, most notably as the one-shot zero-error nonsignaling-assisted classical simulation cost~\cite{duan2016NosignallingassistedZeroerrorCapacity} and as the entanglement-assisted logarithmic communication value~\cite{chitambar2023CommunicationValueQuantum}.  The Doeblin information has found applications in bounding contraction coefficients and consequently a variety of other contexts such as quantum machine learning, quantum error mitigation, and sample complexity of noisy quantum hypothesis testing~\cite{george2025QuantumDoeblinCoefficients}.  From a resource-theoretic perspective~\cite{chitambar2019QuantumResourceTheories,gour2025QuantumResourceTheories}, the max- and Doeblin informations can also be interpreted, respectively, as the logarithmic resource robustness~\cite{vidal1999RobustnessEntanglement,harrow2003RobustnessQuantumGates} and the logarithmic resource weight~\cite{elitzur1992QuantumNonlocalityEach,lewenstein1998SeparabilityEntanglementComposite} against the affine hull of replacement channels~\cite{datta2009MaxrelativeEntropyEntanglement,george2025QuantumDoeblinCoefficients}.
\end{remark}

For our goal of characterizing the retrocausal capacity of a channel, it is of particular relevance that the max- and Doeblin informations have operational interpretations in terms of the maximum and minimum achievable singlet fractions, respectively~\cite{konig2009OperationalMeaningMin,george2025QuantumDoeblinCoefficients}, as reviewed in the following lemma.

\begin{lemma}[\cite{konig2009OperationalMeaningMin,george2025QuantumDoeblinCoefficients}]
\label{lem:fraction}
The max- and Doeblin informations can be interpreted in terms of the maximum and minimum achievable singlet fractions, respectively: for a completely positive map $\ch{N}_{A\to B}\in\s{L}_{A\to B}$,
\begin{align}
	I_{\max}\fleft(\ch{N}\fright)&=2\log_2d_A+\log_2\max_{\ch{K}_{B\to A}\in\s{C}_{B\to A}}\tr\fleft[\Phi_{AA'}\left(\ch{K}_{B\to A}\circ\ch{N}_{A\to B}\right)\fleft[\Phi_{AA'}\fright]\fright], \\
	I_\abb{doe}\fleft(\ch{N}\fright)&=-2\log_2d_A-\log_2\min_{\ch{K}_{B\to A}\in\s{C}_{B\to A}}\tr\fleft[\Phi_{AA'}\left(\ch{K}_{B\to A}\circ\ch{N}_{A\to B}\right)\fleft[\Phi_{AA'}\fright]\fright].
\end{align}
\end{lemma}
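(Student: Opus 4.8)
The plan is to obtain both identities by applying semidefinite programming (SDP) duality to the Choi-state formulations in Eqs.~\eqref{eq:max-SDP} and~\eqref{eq:doeblin-SDP} and then reparametrizing the dual variable as the Choi operator of a channel $\ch{K}_{B\to A}$; this runs in the spirit of the arguments of Refs.~\cite{konig2009OperationalMeaningMin,george2025QuantumDoeblinCoefficients}, here carried out uniformly for completely positive maps. For the max-information, substituting $Y_B\coloneq\lambda\sigma_B$ in Eq.~\eqref{eq:max-SDP} gives $2^{I_{\max}(\ch{N})}=\min\{\tr[Y_B]\colon\Phi_{A'B}^\ch{N}\leq(1/d_A)I_{A'}\otimes Y_B\}$, with $Y_B\geq 0$ automatic because $\Phi_{A'B}^\ch{N}\geq 0$. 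A routine Lagrangian calculation produces the dual $\max\{\tr[W_{A'B}\Phi_{A'B}^\ch{N}]\colon W_{A'B}\geq 0,\ \tr_{A'}[W_{A'B}]=d_A I_B\}$, and strong duality together with attainment of both optima follows from Slater's condition: the primal is strictly feasible for $Y_B=tI_B$ with $t$ large, and the dual is strictly feasible for $W_{A'B}=I_{A'B}$. Since $\tr_{A'}[W_{A'B}/d_A]=I_B$ characterizes $W_{A'B}/d_A$ as the (unnormalized) Choi operator $C^\ch{K}_{A'B}$ of a channel $\ch{K}\in\s{C}_{B\to A}$ (up to an immaterial complex conjugation of Kraus operators), one gets $2^{I_{\max}(\ch{N})}=d_A\max_{\ch{K}}\tr[C^\ch{K}_{A'B}\Phi_{A'B}^\ch{N}]$.

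It then remains to establish the ``transpose-trick'' identity $\tr[C^\ch{K}_{A'B}\Phi_{A'B}^\ch{N}]=d_A\,\tr[\Phi_{AA'}(\ch{K}_{B\to A}\circ\ch{N}_{A\to B})[\Phi_{AA'}]]$, which I would prove by expanding both sides in Kraus operators $\{N_j\}$ of $\ch{N}$ and $\{K_k\}$ of $\ch{K}$ and checking that both reduce to $(1/d_A)\sum_{j,k}|\tr[K_kN_j]|^2$. Combined with the previous display, this yields $2^{I_{\max}(\ch{N})}=d_A^2\max_{\ch{K}}\tr[\Phi_{AA'}(\ch{K}\circ\ch{N})[\Phi_{AA'}]]$, which is the first claim after taking logarithms; the maximum is attained since the objective is linear and continuous in the Choi operator of $\ch{K}$ over the compact set $\s{C}_{B\to A}$.

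The Doeblin case runs in parallel, with two bookkeeping changes: the operator inequality in Eq.~\eqref{eq:doeblin-SDP} is reversed, and $\tau_B$ ranges over $\aff(\s{D}_B)$ rather than $\s{D}_B$. Writing $Z_B\coloneq\tau_B/\lambda$ and maximizing $\tr[Z_B]$ gives $2^{-I_\abb{doe}(\ch{N})}=\max\{\tr[Z_B]\colon\pi_{A'}\otimes Z_B\leq\Phi_{A'B}^\ch{N}\}$ over free Hermitian $Z_B$ --- feasibility of $Z_B=0$ makes the value nonnegative, so the affine nature of the $\tau_B$ constraint is exactly what lets us drop the sign constraint while still handling an honest SDP. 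Its dual is $\min\{\tr[Y_{A'B}\Phi_{A'B}^\ch{N}]\colon Y_{A'B}\geq 0,\ \tr_{A'}[Y_{A'B}]=d_A I_B\}$, with strong duality again by Slater ($Z_B=-tI_B$, $t>0$, is strictly primal-feasible), and the same Choi reparametrization plus the transpose-trick identity give $2^{-I_\abb{doe}(\ch{N})}=d_A^2\min_{\ch{K}}\tr[\Phi_{AA'}(\ch{K}\circ\ch{N})[\Phi_{AA'}]]$, i.e.\ the second claim. None of these steps uses trace preservation of $\ch{N}$ --- only $\Phi_{A'B}^\ch{N}\geq 0$ and its boundedness --- so the argument extends verbatim to an arbitrary completely positive map $\ch{N}\in\s{L}_{A\to B}$.

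I expect the main difficulty to be bookkeeping rather than conceptual: getting the SDP duals in the right normalization, confirming strong duality \emph{with} attainment (so that the $\max$ and $\min$ in the statement are literally achieved), and pinning down the transpose-trick identity through the various Choi-operator conventions, where a stray factor of $d_A$ or a misplaced transpose is easy to slip in. The reversed inequality and affine constraint in the Doeblin case --- the source of the minimization over channels and of the minus signs --- is the one place worth extra care.
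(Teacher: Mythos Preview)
The paper does not supply its own proof of this lemma: it is stated as a cited result from Refs.~\cite{konig2009OperationalMeaningMin,george2025QuantumDoeblinCoefficients} and used as a black box in the achievability proofs (Propositions~\ref{prop:quantum-achievability} and~\ref{prop:classical-achievability}). There is therefore nothing in the paper to compare your argument against line by line.

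That said, your SDP-duality route is the standard one and is essentially how the cited references establish the result. The reduction of Eq.~\eqref{eq:max-SDP} to a minimization over $Y_B\geq 0$, the Lagrangian dual with constraint $\tr_{A'}[W_{A'B}]=d_A I_B$, the identification of the dual variable with the Choi operator of a channel $\ch{K}_{B\to A}$, and the verification of the transpose-trick identity via the Kraus expansion $(1/d_A)\sum_{j,k}|\tr[K_kN_j]|^2$ are all correct. Your handling of the Doeblin case is also right: the affine constraint on $\tau_B$ is exactly what lets you drop positivity on $Z_B$ and obtain an equality constraint in the dual, yielding a minimization (rather than maximization) over channels. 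Your remark about the complex conjugation being immaterial---because one is optimizing over the full set of channels, which is closed under Kraus conjugation---is the clean way to dispose of the transpose ambiguity in the Choi conventions. The observation that nothing uses trace preservation of $\ch{N}$, only $\Phi_{A'B}^\ch{N}\geq 0$, is precisely why the lemma extends to all of $\s{L}_{A\to B}$, which the paper needs but does not spell out.
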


The max-information is additive in the sense that $I_{\max}(\ch{N}_1\otimes\ch{N}_2)=I_{\max}(\ch{N}_1)+I_{\max}(\ch{N}_2)$ for two arbitrary completely positive maps $\ch{N}_1$ and $\ch{N}_2$~\cite[p.~4342]{konig2009OperationalMeaningMin} (also see Ref.~\cite[Lemma~6]{gupta2015MultiplicativityCompletelyBounded}).  This implies that, for every $n\in\spa{N}$,
\begin{align}
	\frac{1}{n}I_{\max}\fleft(\ch{N}^{\otimes n}\fright)&=I_{\max}\fleft(\ch{N}\fright). \label{eq:max-regularized}
\end{align}
On the other hand, the Doeblin information is nonadditive in general, despite being superadditive~\cite[Proposition~20]{george2025QuantumDoeblinCoefficients}: $I_\abb{doe}(\ch{N}_1\otimes\ch{N}_2)\geq I_\abb{doe}(\ch{N}_1)+I_\abb{doe}(\ch{N}_2)$.  By Fekete's lemma~\cite{fekete1923UeberVerteilungWurzeln}, the following limit exists, which we term the \emph{regularized Doeblin information} of $\ch{N}_{A\to B}$:
\begin{align}
	I_\abb{doe}^\infty\fleft(\ch{N}\fright)&\coloneq\lim_{n\to\infty}\frac{1}{n}I_\abb{doe}\fleft(\ch{N}^{\otimes n}\fright). \label{eq:doeblin-regularized}
\end{align}
A slight generalization of Ref.~\cite[Theorems~3 and 4]{george2025QuantumDoeblinCoefficients}, shows that, for quantum-to-classical completely positive maps (which include measurement channels) and mictodiactic covariant completely positive maps (which include depolarizing channels), the Doeblin information is additive and thus equal to its regularization.  For general completely positive maps, the Doeblin information is upper bounded by the following quantity~\cite{george2025QuantumDoeblinCoefficients}:
\begin{align}
	I_\abb{doe}\fleft(\ch{N}\fright)&\leq\inf_{\substack{\lambda\in\spa{R}, \\ \tau_B\in\aff\fleft(\s{D}_B\fright)}}\left\{\log_2\lambda\colon-\lambda\Phi_{A'B}^\ch{N}\leq\pi_{A'}\otimes\tau_B\leq\lambda\Phi_{A'B}^\ch{N}\right\} \\
	&\eqcolon I_\wang\fleft(\ch{N}\fright), \label{eq:wang}
\end{align}
which can be shown to be additive by slightly generalizing Ref.~\cite[Proposition~24]{george2025QuantumDoeblinCoefficients}.  Therefore, since the Doeblin information is superadditive, we can bound the regularized Doeblin information from both sides in terms of single-letter quantities that are efficiently computable via semidefinite programs (SDPs):
\begin{align}
	I_\abb{doe}\fleft(\ch{N}\fright)&\leq I_\abb{doe}^\infty\fleft(\ch{N}\fright)\leq I_\wang\fleft(\ch{N}\fright). \label{eq:bounds}
\end{align}

\section{Postselected closed timelike curves (P-CTC\lowercase{s})}
\label{sec:curves}

\subsection{Noiseless P-CTCs}
\label{sec:noiseless}

As proposed in Ref.~\cite{bennett2002LectureNotes} and formalized in Refs.~\cite{lloyd2011ClosedTimelikeCurves,lloyd2011QuantumMechanicsTime,svetlichny2009EffectiveQuantumTime}, enclosing a linear map $\ch{T}_{EA\to FA}\in\spa{L}_{EA\to FA}$ with a noiseless P-CTC on the system $A$ can be simulated, up to renormalization, by:
\begin{enumerate}
	\item preparing a standard maximally entangled state $\Phi_{AA'}$ between $A$ and $A'$;
	\item letting the map $\ch{T}_{EA\to FA}$ act on $A$ along with $E$, receiving $A$ back along with $F$;
	\item projecting $A$ and $A'$ onto $\Phi_{AA'}$ and discarding them.
\end{enumerate}
The above simulation can be represented by a linear supermap (i.e., a linear map on linear maps~\cite{chiribella2008TransformingQuantumOperations}) $\Gamma_A$ from $\spa{L}_{A\to A}$ to $\spa{C}$, which we term the \emph{loop supermap}, defined as follows:
\begin{subequations}
\label{eq:loop}
\begin{align}
	\Gamma_A\colon\ch{T}_{EA\to FA}&\mapsto\Gamma_A\fleft\{\ch{T}_{EA\to FA}\fright\}, \\
	\Gamma_A\fleft\{\ch{T}_{EA\to FA}\fright\}\colon\rho_{RE}&\mapsto\tr_{AA'}\fleft[\Phi_{AA'}\ch{T}_{EA\to FA}\fleft[\rho_{RE}\otimes\Phi_{AA'}\fright]\fright],
\end{align}
\end{subequations}
where we use the curly bracket to indicate the action of a supermap on a map.  Then the actual transformation from $E$ to $F$, resulting from enclosing $\ch{T}_{EA\to FA}$ with the noiseless P-CTC, is represented by the following nonlinear map from $\spa{B}_{RE}$ to $\spa{B}_{RF}$, with $R$ an arbitrary reference system, which renormalizes $\Gamma_A\{\ch{T}_{EA\to FA}\}$:
\begin{align}
	\rho_{RE}&\mapsto\frac{\Gamma_A\fleft\{\ch{T}_{EA\to FA}\fright\}\fleft[\rho_{RE}\fright]}{\tr\fleft[\Gamma_A\fleft\{\ch{T}_{EA\to FA}\fright\}\fleft[\rho_{RE}\fright]\fright]}. \label{eq:noiseless}
\end{align}
Note that, if $\ch{T}_{EA\to FA}$ is completely positive, then so is $\Gamma_A\{\ch{T}_{EA\to FA}\}$~\cite{burniston2020NecessarySufficientConditions}.  It is thus immediate that the nonlinear map in Eq.~\eqref{eq:noiseless} preserves the set of states in the presence of an arbitrary reference system $R$.

\begin{figure}[t]
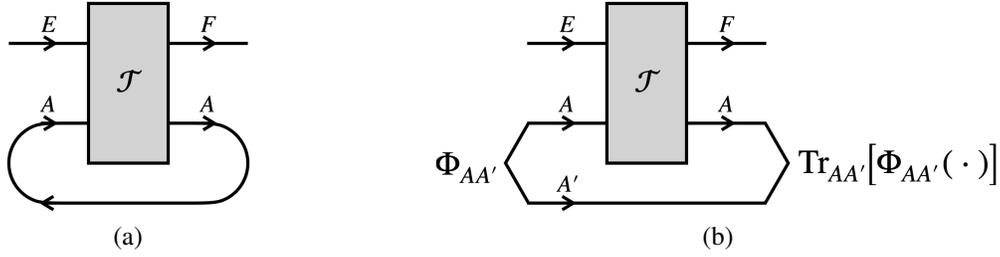

\subfloat[\label{fig:noiseless-curve}]{\includegraphics[scale=0.3]{figures/noiseless-curve}} \qquad\qquad\qquad
\subfloat[\label{fig:noiseless-postselection}]{\includegraphics[scale=0.3]{figures/noiseless-postselection}}
\caption{Enclosing a linear map $\ch{T}_{EA\to FA}$ with a noiseless P-CTC on the system $A$ (Fig.~\ref{fig:noiseless-curve}) is equivalent to performing postselected teleportation from its output $A$ to its input $A$ (Fig.~\ref{fig:noiseless-postselection} with renormalization).}
\label{fig:noiseless}
\end{figure}

\begin{fact}[Noiselessness]
\label{fact:noiselessness}
Surrounding a swap channel between two systems with a noiseless P-CTC on one system results in an identity channel on the other: for two systems $A$ and $B$ with $A\cong B$ and a linear operator $\rho_{RA}\in\spa{B}_{RA}$ with $R$ a reference system,
\begin{align}
	\frac{\Gamma_B\fleft\{\ch{P}_{AB\to AB}\fright\}\fleft[\rho_{RA}\fright]}{\tr\fleft[\Gamma_B\fleft\{\ch{P}_{AB\to AB}\fright\}\fleft[\rho_{RA}\fright]\fright]}&=\rho_{RA},
\end{align}
where $\ch{P}_{A_IB_I\to A_OB_O}\equiv\id_{A_I\to B_O}\otimes\id_{B_I\to A_O}$ is the swap channel between $A$ and $B$.
\end{fact}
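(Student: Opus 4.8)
The plan is to evaluate $\Gamma_B\fleft\{\ch{P}_{AB\to AB}\fright\}$ directly from the definition of the loop supermap in Eq.~\eqref{eq:loop} and to show that, as a linear map, it is simply $\rho_{RA}\mapsto\fleft(1/d_A^2\fright)\rho_{RA}$. Being a positive scalar multiple of the identity, its renormalization [Eq.~\eqref{eq:noiseless}] is the identity channel on $A$, which yields the stated equality.

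First I would push the prepared entanglement $\Phi_{BB'}$ through the swap channel. Because $\ch{P}_{A_IB_I\to A_OB_O}=\id_{A_I\to B_O}\otimes\id_{B_I\to A_O}$ merely relabels wires, one has $\ch{P}_{AB\to AB}\fleft[\rho_{RA}\otimes\Phi_{BB'}\fright]=\rho_{RB}\otimes\Phi_{AB'}$, where on the right-hand side the label $B$ now denotes the looped output port, carrying the content of the $A$ system of $\rho_{RA}$, and the label $A$ denotes the free output port, carrying the content of the $B$ system of $\Phi_{BB'}$.

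Second, I would complete the loop by applying the postselection $\tr_{BB'}\fleft[\Phi_{BB'}(\cdot)\fright]$. Since $\Phi_{BB'}=\op{\Phi}{\Phi}_{BB'}$ is a rank-one projector onto a unit maximally entangled vector, this step equals $\bra{\Phi}_{BB'}\fleft(\rho_{RB}\otimes\Phi_{AB'}\fright)\ket{\Phi}_{BB'}$, which is precisely a subnormalized postselected teleportation: contracting the maximally entangled vector on $BB'$ against the maximally entangled resource $\Phi_{AB'}$ transfers the content of $B$ onto $A$. Using the standard ``ricochet'' (transpose) identity for the maximally entangled vector, a one-line computation gives $\tr_{BB'}\fleft[\Phi_{BB'}\fleft(\rho_{RB}\otimes\Phi_{AB'}\fright)\fright]=\fleft(1/d_A^2\fright)\rho_{RA}$. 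Combining with the first step establishes $\Gamma_B\fleft\{\ch{P}_{AB\to AB}\fright\}\fleft[\rho_{RA}\fright]=\fleft(1/d_A^2\fright)\rho_{RA}$, so that $\tr\fleft[\Gamma_B\fleft\{\ch{P}_{AB\to AB}\fright\}\fleft[\rho_{RA}\fright]\fright]=\fleft(1/d_A^2\fright)\tr\fleft[\rho_{RA}\fright]$, and dividing by this trace proves the claim.

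The argument is routine; the only point demanding care is the bookkeeping of the swap channel's relabelings, since the swap deposits the reference-correlated content onto the looped port, and it is the subsequent postselected teleportation that ``undoes'' this swap and restores the identity. I do not anticipate any genuine obstacle: this Fact is essentially a restatement of the defining noiselessness of P-CTCs, namely that standard teleportation, postselected on the trivial Bell outcome and supplied with a maximally entangled resource, implements the identity channel.
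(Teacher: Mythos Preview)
Your proposal is correct and follows essentially the same route as the paper's proof: unfold the loop supermap, push the swap through to obtain $\rho_{RB}\otimes\Phi_{AB'}$, and then invoke probabilistic (postselected) teleportation to conclude $\Gamma_B\fleft\{\ch{P}_{AB\to AB}\fright\}\fleft[\rho_{RA}\fright]=\fleft(1/d_A^2\fright)\rho_{RA}$. The paper's own argument is essentially a three-line version of what you wrote.
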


\begin{proof}
It follows from the definition of the loop supermap [Eq.~\eqref{eq:loop}] that
\begin{align}
	\Gamma_B\fleft\{\ch{P}_{AB\to AB}\fright\}\fleft[\rho_{RA}\fright]&=\tr_{BB'}\fleft[\Phi_{BB'}\ch{P}_{AB\to AB}\fleft[\rho_{RA}\otimes\Phi_{BB'}\fright]\fright] \\
	&=\tr_{BB'}\fleft[\Phi_{BB'}\left(\rho_{RB}\otimes\Phi_{AB'}\right)\fright] \\
	&=\frac{1}{d_A^2}\rho_{RA}, \label{pf:noiselessness-1}
\end{align}
where Eq.~\eqref{pf:noiselessness-1} follows from probabilistic teleportation~\cite{bennett1993TeleportingUnknownQuantum}.  The desired statement is then immediate.
\end{proof}

\begin{fact}[Composition]
\label{fact:composition}
Parallel composition of noiseless P-CTCs is represented by tensor product: for two systems $A$ and $B$, we have that $\Gamma_{AB}=\Gamma_A\otimes\Gamma_B$.
\end{fact}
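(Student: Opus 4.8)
The plan is to establish the supermap identity $\Gamma_{AB}=\Gamma_A\otimes\Gamma_B$ by unfolding both sides directly from the definition of the loop supermap in Eq.~\eqref{eq:loop} and reducing everything to two elementary facts: (i) the standard maximally entangled state factorizes across a tensor product of systems, $\Phi_{(AB)(A'B')}=\Phi_{AA'}\otimes\Phi_{BB'}$, where the product basis on $AB$ and on $A'B'$ is used to define the composite entangled state; and (ii) a partial trace over a composite system decomposes into a composition of partial traces over its factors, $\tr_{(AB)(A'B')}=\tr_{AA'}\circ\tr_{BB'}=\tr_{BB'}\circ\tr_{AA'}$, since the factors are disjoint.

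First I would make precise what $\Gamma_A\otimes\Gamma_B$ means: given a linear map $\ch{T}_{EAB\to FAB}\in\spa{L}_{EAB\to FAB}$, one first applies $\Gamma_A$ to the $A$-register, regarding $E$, $F$, and $B$ (together with the reference) as passive systems carried along; the result is a linear map from $EB$ to $FB$ whose action sends $\rho\mapsto\tr_{AA'}\left[\Phi_{AA'}\,\ch{T}\left[\rho\otimes\Phi_{AA'}\right]\right]$. One then applies $\Gamma_B$ to this intermediate map, which introduces the ancilla $B'$, the state $\Phi_{BB'}$, and the trace over $BB'$. Chaining the two steps and using that $\Phi_{AA'}$ and $\Phi_{BB'}$ act on disjoint tensor factors — hence commute, with product equal to $\Phi_{(AB)(A'B')}$ by (i) — yields
\begin{align}
	(\Gamma_A\otimes\Gamma_B)\left\{\ch{T}\right\}\left[\rho_{RE}\right]&=\tr_{AA'BB'}\left[\Phi_{AA'}\Phi_{BB'}\,\ch{T}\left[\rho_{RE}\otimes\Phi_{AA'}\otimes\Phi_{BB'}\right]\right],
\end{align}
which, by (i) and (ii), is precisely $\Gamma_{AB}\left\{\ch{T}\right\}\left[\rho_{RE}\right]=\tr_{(AB)(A'B')}\left[\Phi_{(AB)(A'B')}\,\ch{T}\left[\rho_{RE}\otimes\Phi_{(AB)(A'B')}\right]\right]$.

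Finally I would observe that carrying out the same computation with $\Gamma_B$ applied before $\Gamma_A$ produces the identical expression, so the two partial loop supermaps commute; this is what makes the notation $\Gamma_A\otimes\Gamma_B$ unambiguous in the first place, and it completes the proof. I do not anticipate a deep obstacle here: the only real care needed — and hence the main (mild) obstacle — is the bookkeeping of the reference systems when nesting the two applications, i.e.\ spelling out which registers play the roles of ``$R$'', ``$E$'', and ``$F$'' at each stage, and the routine justification of interchanging the two partial traces and commuting the two entangled-state operators. It is also worth noting that the argument never invokes trace preservation or complete positivity, so the identity holds for all linear maps in $\spa{L}_{EAB\to FAB}$, not merely for channels.
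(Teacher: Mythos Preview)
Your proposal is correct and follows essentially the same approach as the paper: the paper's proof is a single sentence observing that the statement follows immediately from the factorization $\Phi_{ABA'B'}=\Phi_{AA'}\otimes\Phi_{BB'}$, which is precisely your ingredient (i), and your additional bookkeeping about iterated partial traces and commutation of the two applications simply makes explicit what the paper leaves implicit.
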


\begin{proof}
The desired statement follows immediately from the fact that $\Phi_{ABA'B'}=\Phi_{AA'}\otimes\Phi_{BB'}$.
\end{proof}

\begin{lemma}[Cyclicity]
\label{lem:cyclicity}
Noiseless P-CTCs have the following cyclicity property (see Fig.~\ref{fig:cyclicity}): for two linear maps $\ch{T}_{EB\to FA}\in\spa{L}_{EB\to FA}$ and $\ch{W}_{GA\to HB}\in\spa{L}_{GA\to HB}$ and a linear operator $\rho_{REG}\in\spa{B}_{REG}$ with $R$ a reference system, we have that
\begin{align}
	\frac{\Gamma_A\fleft\{\ch{T}_{EB\to FA}\circ\ch{W}_{GA\to HB}\fright\}\fleft[\rho_{REG}\fright]}{\tr\fleft[\Gamma_A\fleft\{\ch{T}_{EB\to FA}\circ\ch{W}_{GA\to HB}\fright\}\fleft[\rho_{REG}\fright]\fright]}&=\frac{\Gamma_B\fleft\{\ch{W}_{GA\to HB}\circ\ch{T}_{EB\to FA}\fright\}\fleft[\rho_{REG}\fright]}{\tr\fleft[\Gamma_B\fleft\{\ch{W}_{GA\to HB}\circ\ch{T}_{EB\to FA}\fright\}\fleft[\rho_{REG}\fright]\fright]}.
\end{align}
\end{lemma}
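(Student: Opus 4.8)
The plan is to show that the two \emph{unnormalized} maps on the left- and right-hand sides of the claimed identity---that is, $\Gamma_A\{\ch{T}_{EB\to FA}\circ\ch{W}_{GA\to HB}\}[\rho_{REG}]$ and $\Gamma_B\{\ch{W}_{GA\to HB}\circ\ch{T}_{EB\to FA}\}[\rho_{REG}]$---differ only by the positive scalar $d_B^2/d_A^2$, which then cancels after dividing by the respective traces. The only ingredient needed is the probabilistic-teleportation identity already used in the proof of Fact~\ref{fact:noiselessness}: for systems $X\cong Y\cong Z$ and an operator $\omega_{RX}$, one has $\tr_{XY}[\Phi_{XY}(\omega_{RX}\otimes\Phi_{YZ})]=(1/d_X^2)\,\omega_{RZ}$; equivalently, a maximally entangled resource together with a Bell projection implements a direct wire up to a factor $1/d_X^2$. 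No complete positivity of $\ch{T}$ or $\ch{W}$ is used anywhere, so the argument works for arbitrary linear maps.

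I would first unfold $\Gamma_A\{\ch{T}\circ\ch{W}\}[\rho_{REG}]=\tr_{AA'}[\Phi_{AA'}\,\ch{T}_{EB\to FA}[\ch{W}_{GA\to HB}[\rho_{REG}\otimes\Phi_{AA'}]]]$ and insert a teleportation on the $B$-wire that joins the output of $\ch{W}$ to the input of $\ch{T}$, using $\ch{T}_{EB\to FA}[\zeta_{\cdots EB}]=d_B^2\,\ch{T}_{EB''\to FA}[\tr_{BB'}[\Phi_{BB'}(\zeta_{\cdots EB}\otimes\Phi_{B'B''})]]$ with a fresh system $B''\cong B$. Since $\ch{T}_{EB''\to FA}$, the $(B,B')$-projection, and the $(A,A')$-projection act on mutually disjoint systems, they commute and the two projections can be collected, so the left-hand numerator takes the symmetric form
\[
	d_B^2\,\tr_{AA'BB'}\big[(\Phi_{AA'}\otimes\Phi_{BB'})\,\ch{T}_{EB''\to FA}[\ch{W}_{GA\to HB}[\rho_{REG}\otimes\Phi_{AA'}]\otimes\Phi_{B'B''}]\big],
\]
in which the $A$-link (realized by $\Phi_{AA'}$ paired with its Bell projection) and the $B$-link (realized by $\Phi_{B'B''}$ paired with the projection onto $\Phi_{BB'}$) now appear on an equal footing. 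Performing the analogous insertion of a teleportation on the $A$-wire of $\Gamma_B\{\ch{W}\circ\ch{T}\}[\rho_{REG}]$ yields the mirror-image expression $d_A^2\,\tr_{AA'BB'}[(\Phi_{AA'}\otimes\Phi_{BB'})\,\ch{W}_{GA''\to HB}[\ch{T}_{EB\to FA}[\rho_{REG}\otimes\Phi_{BB'}]\otimes\Phi_{A'A''}]]$ with a fresh $A''\cong A$.

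Applying the teleportation identity once more to each of the two links collapses each of these symmetric expressions to $1/(d_A^2 d_B^2)$ times one and the same linear map $\s{M}_0\colon\rho_{REG}\mapsto\rho_{RFH}$: namely the map obtained by directly wiring the output $A$ of $\ch{T}$ into the input $A$ of $\ch{W}$ and the output $B$ of $\ch{W}$ into the input $B$ of $\ch{T}$, leaving external inputs $E,G$ and external outputs $F,H$. The map $\s{M}_0$ is symmetric under interchanging the $A$- and $B$-wires and hence does not remember whether it was extracted from $\Gamma_A\{\ch{T}\circ\ch{W}\}$ or from $\Gamma_B\{\ch{W}\circ\ch{T}\}$. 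Therefore $\Gamma_A\{\ch{T}_{EB\to FA}\circ\ch{W}_{GA\to HB}\}[\rho_{REG}]=(d_B^2/d_A^2)\,\Gamma_B\{\ch{W}_{GA\to HB}\circ\ch{T}_{EB\to FA}\}[\rho_{REG}]$; taking traces shows the two denominators in the statement obey the same relation, the factor cancels, and the claimed equality follows.

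I expect the main obstacle to be purely a matter of bookkeeping: one must keep careful track of which copy of which system carries which index, so that the linear map $\s{M}_0$ extracted from the two sides is \emph{literally} the same map rather than merely ``the same up to an unnoticed transpose or relabeling.'' A clean way to settle this is to fix operator-sum representations $\ch{T}[\cdot]=\sum_s T_s\cdot\tilde T_s^\dagger$ and $\ch{W}[\cdot]=\sum_t W_t\cdot\tilde W_t^\dagger$ (every linear map admits one) and check at the level of matrix entries that closing the $A$-edge produces $\sum_{a,b}(W_t)^{hb}_{ga}(T_s)^{fa}_{eb}$ while closing the $B$-edge produces $\sum_{a,b}(T_s)^{fa}_{eb}(W_t)^{hb}_{ga}$, which are equal. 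Equivalently, one may first isolate the auxiliary identity $\Gamma_A\{\ch{M}_{XA\to YA}\}=(1/d_A^2)\,\widehat{\ch{M}}_{X\to Y}$, where $\widehat{\ch{M}}$ denotes the map obtained by contracting the output $A$ of $\ch{M}$ with its input $A$, and then observe that $\widehat{\ch{T}_{EB\to FA}\circ\ch{W}_{GA\to HB}}$ and $\widehat{\ch{W}_{GA\to HB}\circ\ch{T}_{EB\to FA}}$ both equal the full contraction of the two-node network $\{\ch{T},\ch{W}\}$ along the edges $A$ and $B$.
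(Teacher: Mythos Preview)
Your proposal is correct and mirrors the paper's own proof: both insert a teleportation on the intermediate wire (the paper phrases this as inserting a swap wrapped in a noiseless P-CTC and invoking Facts~\ref{fact:noiselessness} and~\ref{fact:composition}) so as to rewrite each side as a dimension-dependent multiple of one and the same symmetric double-loop expression $\Gamma_{AB'}\{\ch{T}_{EB'\to FA}\otimes\ch{W}_{GA\to HB'}\}$, from which the scalar $d_B^2/d_A^2$ relating the two unnormalized maps drops out upon normalization. Your closing remarks on index bookkeeping simply make explicit what the paper leaves implicit in the line ``Combining Eqs.~\eqref{pf:cyclicity-2} and~\eqref{pf:cyclicity-3} yields the desired statement.''
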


\begin{figure}[t]
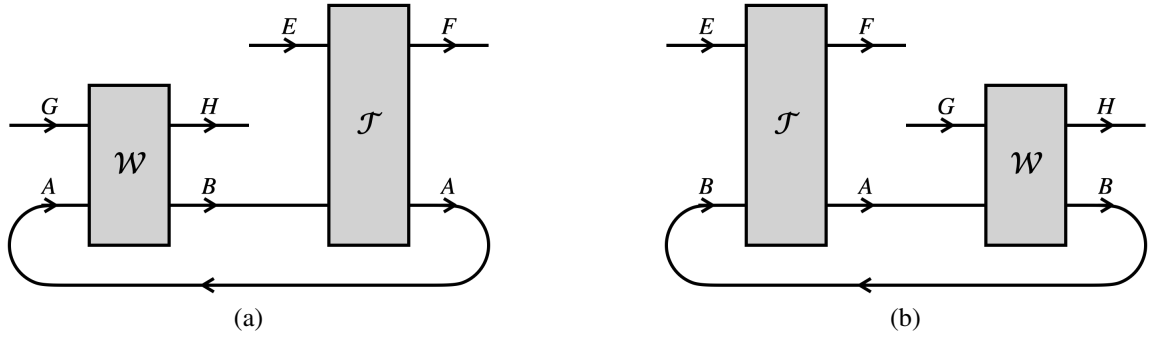

\subfloat[\label{fig:cyclicity-1}]{\includegraphics[scale=0.3]{figures/cyclicity-1}} \qquad\qquad\qquad
\subfloat[\label{fig:cyclicity-2}]{\includegraphics[scale=0.3]{figures/cyclicity-2}}
\caption{The cyclicity of noiseless P-CTCs dictates that Figs.~\ref{fig:cyclicity-1} and \ref{fig:cyclicity-2} are equivalent.}
\label{fig:cyclicity}
\end{figure}

\begin{proof}
Since Eq.~\eqref{pf:noiselessness-1} implies that $\Gamma_{B'}\{\ch{P}_{BB'\to BB'}\}=\frac{1}{d_B^2}\id_{B\to B}$ for $B'\cong B$, where $\ch{P}_{BB'\to BB'}$ is the swap channel between $B$ and $B'$, it follows that
\begin{align}
	\Gamma_A\fleft\{\ch{T}_{EB\to FA}\circ\ch{W}_{GA\to HB}\fright\}&=d_B^2\Gamma_A\fleft\{\ch{T}_{EB\to FA}\circ\Gamma_{B'}\fleft\{\ch{P}_{BB'}\fright\}\circ\ch{W}_{GA\to HB}\fright\} \\
	&=d_B^2\left(\Gamma_A\otimes\Gamma_{B'}\right)\fleft\{\ch{T}_{EB\to FA}\circ\ch{P}_{BB'\to BB'}\circ\ch{W}_{GA\to HB}\fright\} \\
	&=d_B^2\Gamma_{AB'}\fleft\{\ch{T}_{EB\to FA}\circ\ch{P}_{BB'\to BB'}\circ\ch{W}_{GA\to HB}\fright\}. \label{pf:cyclicity-1}
\end{align}
Considering that
\begin{align}
	\ch{T}_{EB_O\to FA_O}\circ\ch{P}_{B_IB_I'\to B_OB_O'}\circ\ch{W}_{GA_I\to HB_I}&=\ch{T}_{EB_O\to FA_O}\circ\left(\id_{B_I\to B_O'}\otimes\id_{B_I'\to B_O}\right)\circ\ch{W}_{GA_I\to HB_I} \\
	&=\ch{T}_{EB_I'\to FA_O}\circ\ch{W}_{GA_I\to HB_O'} \\
	&=\ch{T}_{EB_I'\to FA_O}\otimes\ch{W}_{GA_I\to HB_O'},
\end{align}
it follows from Eq.~\eqref{pf:cyclicity-1} that
\begin{align}
	\Gamma_A\fleft\{\ch{T}_{EB\to FA}\circ\ch{W}_{GA\to HB}\fright\}&=d_B^2\Gamma_{AB'}\fleft\{\ch{T}_{EB'\to FA}\otimes\ch{W}_{GA\to HB'}\fright\}. \label{pf:cyclicity-2}
\end{align}
Following the same reasoning, we can also show for $A'\cong A$ that
\begin{align}
	\Gamma_B\fleft\{\ch{W}_{GA\to HB}\circ\ch{T}_{EB\to FA}\fright\}&=d_A^2\Gamma_{A'B}\fleft\{\ch{W}_{GA'\to HB}\otimes\ch{T}_{EB\to FA'}\fright\}. \label{pf:cyclicity-3}
\end{align}
Combining Eqs.~\eqref{pf:cyclicity-2} and \eqref{pf:cyclicity-3} yields the desired statement.
\end{proof}

\subsection{Noisy P-CTCs}
\label{sec:noisy}

Compared to the main text, here we consider a slightly more generalized notion of a noisy P-CTC, which can be represented by a completely positive (but not necessarily trace-preserving) map.  As illustrated in Fig.~\ref{fig:noisy-curve-SM}, a noisy P-CTC from $A$ to $B$, represented by a completely positive map $\ch{N}_{A\to B}\in\s{L}_{A\to B}$, can be understood as the concatenation of a noiseless P-CTC on $A$ and the map $\ch{N}_{A\to B}$.  In other words, enclosing a linear map $\ch{T}_{EB\to FA}\in\spa{L}_{EB\to FA}$ with the noisy P-CTC amounts to preprocessing $\ch{T}_{EB\to FA}$ with $\ch{N}_{A\to B}$ and enclosing $\ch{T}_{EB\to FA}\circ\ch{N}_{A\to B}$ with a noiseless P-CTC on $A$.  Formally, the resulting transformation from $E$ to $F$ is represented by the following nonlinear map from $\spa{B}_{RE}$ to $\spa{B}_{RF}$, with $R$ an arbitrary reference system:
\begin{align}
	\rho_{RE}&\mapsto\frac{\Gamma_A\fleft\{\ch{T}_{EB\to FA}\circ\ch{N}_{A\to B}\fright\}\fleft[\rho_{RE}\fright]}{\tr\fleft[\Gamma_A\fleft\{\ch{T}_{EB\to FA}\circ\ch{N}_{A\to B}\fright\}\fleft[\rho_{RE}\fright]\fright]}. \label{eq:noisy}
\end{align}
Note that the noisy P-CTC can equivalently be realized by postprocessing $\ch{T}_{EB\to FA}$ with $\ch{N}_{A\to B}$ and enclosing $\ch{N}_{A\to B}\circ\ch{T}_{EB\to FA}$ with a noiseless P-CTC on $B$, due to the cyclicity of noiseless P-CTCs (Lemma~\ref{lem:cyclicity}).

\begin{figure}[t]
\subfloat[\label{fig:noisy-curve-SM}]{\includegraphics[scale=0.3]{figures/noisy-curve}} \qquad\qquad\qquad
\subfloat[\label{fig:noisy-final}]{\includegraphics[scale=0.3]{figures/noisy-final}}
\caption{Enclosing a linear map $\ch{T}_{EB\to FA}$ with a noisy P-CTC represented by a completely positive map $\ch{N}_{A\to B}$ is equivalent to preprocessing $\ch{T}_{EB\to FA}$ with $\ch{N}_{A\to B}$ and enclosing $\ch{T}_{EB\to FA}\circ\ch{N}_{A\to B}$ with a noiseless P-CTC on the system $A$ (Fig.~\ref{fig:noisy-curve-SM}).  It is also equivalent to imposing on $\ch{T}_{EB\to FA}$  initial- and final-state boundary conditions satisfying Eq.~\eqref{eq:final}  (Fig.~\ref{fig:noisy-final} with renormalization).}
\label{fig:noisy-SM}
\end{figure}

A major reason for our interest in noisy P-CTCs represented by completely positive maps, and not just by channels, is that they naturally arise in a time-symmetric modification of quantum mechanics~\cite{aharonov1964TimeSymmetryQuantum,griffiths1984ConsistentHistoriesInterpretation,gell-mann1994TimeSymmetryAsymmetry}, where arbitrary initial- and final-state boundary conditions can be specified.  A general configuration to consider is one where a linear map $\ch{T}_{EB\to FA}\in\spa{L}_{EB\to FA}$ is subject to an initial state $\rho_{BC}^\abb{init}\in\s{D}_{BC}$ and  a final state $\rho_{AC}^\abb{final}\in\s{D}_{AC}$, where $C$ is a side quantum memory (see Fig.~\ref{fig:noisy-final}).  The resulting transformation from $E$ to $F$ is then represented by the following nonlinear map, with $R$ an arbitrary reference system:
\begin{align}
	\rho_{RE}&\mapsto\frac{\tr_{AC}\fleft[\rho_{AC}^\abb{final}\ch{T}_{EB\to FA}\fleft[\rho_{RE}\otimes\rho_{BC}^\abb{init}\fright]\fright]}{\tr\fleft[\rho_{AC}^\abb{final}\ch{T}_{EB\to FA}\fleft[\rho_{RE}\otimes\rho_{BC}^\abb{init}\fright]\fright]}. \label{eq:final}
\end{align}
It is easy to verify that Eq.~\eqref{eq:final} recovers Eq.~\eqref{eq:noisy} upon defining the following completely positive but not necessarily trace-preserving map:
\begin{align}
	\ch{N}_{A\to B}\fleft[\cdot\fright]&\coloneq\tr_{AC}\fleft[\rho_{AC}^\abb{final}\left(\left(\cdot\right)_A\otimes\rho_{BC}^\abb{init}\right)\fright],
\end{align}
which represents a noisy P-CTC from $A$ to $B$.  Conversely, every noisy P-CTC represented by a completely positive map can be realized with some initial- and final-state boundary condition.  To see this, for a completely positive map $\ch{N}_{A\to B}\in\s{L}_{A\to B}$, by setting
\begin{align}
	\rho_{BC}^\abb{init}&\coloneq\frac{\ch{N}_{A\to B}\fleft[\Phi_{AC}\fright]}{\tr\fleft[\ch{N}_{A\to B}\fleft[\Phi_{AC}\fright]\fright]}, \\
	\rho_{AC}^\abb{final}&\coloneq\Phi_{AC},
\end{align}
one realizes a noisy P-CTC represented by $\ch{N}_{A\to B}/\tr\fleft[\ch{N}_{A\to B}\fleft[\Phi_{AC}\fright]\fright]$, and this noisy P-CTC is equivalent to the one represented by $\ch{N}_{A\to B}$, due to renormalization.

\section{Retrocausal communication}
\label{sec:communication}

\subsection{General framework}
\label{sec:framework}

The setting of retrocausal communication involves two parties, namely a sender in the future and a receiver in the past.  Following the convention of the main text, we refer to the sender as a father and the receiver as his daughter.  The father is connected to the daughter by a noisy P-CTC represented by a given completely positive map $\ch{N}_{A\to B}\in\s{L}_{A\to B}$, and they aim to transmit quantum or classical messages backward in time from the father to the daughter through this noisy P-CTC.  They can adopt any strategy for communication as long as they do not violate chronology when not invoking the noisy P-CTC.

The freedom for the father and the daughter to adopt any strategy without violating chronology brings forth the following possibility.  Since the daughter precedes the father in time, she is naturally able to signal to him, most generally by storing certain quantum information in a noiseless quantum memory of an arbitrary dimension and leaving the memory for him to retrieve in the future.  Note that such a memory, which enables signaling from the daughter to the father, combined with the noisy P-CTC from the father to daughter, could incur a causal loop.  However, grandfather-type paradoxes can be effectively circumvented within the model of P-CTCs~\cite{lloyd2011ClosedTimelikeCurves}, guaranteeing that the setting we consider is well defined.  In fact, the potential formation of such a causal loop is what fundamentally distinguishes retrocausal communication from conventional (i.e., forward-in-time) communication settings (such as unassisted classical~\cite{shannon1948MathematicalTheoryCommunication,holevo1973BoundsQuantityInformation,holevo1998CapacityQuantumChannel,schumacher1997SendingClassicalInformation} and quantum~\cite{shor1995SchemeReducingDecoherence,schumacher1996SendingEntanglementNoisy,schumacher1996QuantumDataProcessing,lloyd1997CapacityNoisyQuantum,barnum1998InformationTransmissionNoisy,barnum2000QuantumFidelitiesChannel,shor2002QuantumChannelCapacity,devetak2005PrivateClassicalCapacity} communication, entanglement-assisted communication~\cite{bennett1999EntanglementassistedClassicalCapacity,bennett2002EntanglementassistedCapacityQuantum,holevo2002EntanglementassistedClassicalCapacity}, and nonsignaling-assisted communication~\cite{cubitt2011ZeroerrorChannelCapacity,matthews2012LinearProgramFinite,duan2016NosignallingassistedZeroerrorCapacity,takagi2020ApplicationResourceTheory}), and it is the reason why a channel's retrocausal capacity can exceed its capacity in any conventional communication setting.

We now put the discussion in formal terms.  As illustrated in Fig.~\ref{fig:retrocausal}, the father and the daughter's strategy is, in general, represented by a pair of channels, $(\ch{E}_{ML\to A},\ch{D}_{B\to\g{M}L})$ (with $M\cong\g{M}$), corresponding to the father's encoding and the daughter's decoding, respectively, where:
\begin{itemize}
	\item $M$ is the system storing the father's message;
	\item $\g{M}$ is the system storing the daughter's recovery of the message;
	\item $L$ is the noiseless quantum memory that the daughter leaves to the father.
\end{itemize}
The strategy operates as follows.
\begin{itemize}
	\item \textbf{Encoding.}  In the future, the father retrieves $L$ (which was left by the daughter in the past), encodes his message with the assistance of $L$ by applying the channel $\ch{E}_{ML\to A}$, and feeds $A$ into the entrance of the noisy P-CTC.
	\item \textbf{Decoding.}  In the past, the daughter receives $B$ from the exit of the noisy P-CTC, decodes the message by applying the channel $\ch{D}_{B\to\g{M}L}$, and leaves $L$ for the father to retrieve in the future.
\end{itemize}

\begin{figure}[t]
\includegraphics[scale=0.3]{figures/communication-retrocausal}
\caption{In retrocausal communication, a father in the future (i.e., the sender) is connected to  his daughter in the past (i.e., the receiver) by a noisy P-CTC represented by a completely positive map $\ch{N}_{A\to B}$ (green).  The father applies a channel $\ch{E}_{ML\to A}$ as his encoding (red), and the daughter applies a channel $\ch{D}_{B\to\g{M}L}$ as her decoding (blue).  As part of the strategy, the daughter can signal to the father via a noiseless quantum memory $L$.}
\label{fig:retrocausal}
\end{figure}

Combining the strategy $(\ch{E}_{ML\to A},\ch{D}_{B\to\g{M}L})$ as described above with the noisy P-CTC represented by the given completely positive map $\ch{N}_{A\to B}$, a transformation from $M$ to $\g{M}$ is realized, represented by a nonlinear map $\ch{M}_{M\to\g{M}}$ from $\spa{B}_{RM}$ to $\spa{B}_{R\g{M}}$, with $R$ an arbitrary reference system, defined as follows:
\begin{align}
	\ch{M}_{M\to\g{M}}&\colon\rho_{RM}\mapsto\frac{\Gamma_A\fleft\{\ch{E}_{ML\to A}\circ\ch{D}_{B\to\g{M}L}\circ\ch{N}_{A\to B}\fright\}\fleft[\rho_{RM}\fright]}{\tr\fleft[\Gamma_A\fleft\{\ch{E}_{ML\to A}\circ\ch{D}_{B\to\g{M}L}\circ\ch{N}_{A\to B}\fright\}\fleft[\rho_{RM}\fright]\fright]}, \label{eq:retrocausal}
\end{align}
where $\Gamma_A$ is the loop supermap defined in Eq.~\eqref{eq:loop}.

\begin{remark}[Forward-in-time communication assisted by a noiseless P-CTC]
\label{rem:assisted}
The setting of retrocausal communication, where communication is conducted \emph{backward} in time (from a sender in the future to a receiver in the past), is equivalent to the setting where communication is conducted \emph{forward} in time (from a sender in the past to a receiver in the future) but assisted by a noiseless P-CTC from the receiver to the sender (see Fig.~\ref{fig:assisted}).  To see this, it follows from the cyclicity of noiseless P-CTCs (Lemma~\ref{lem:cyclicity}) that, for every linear operator $\rho_{RM}\in\spa{B}_{RM}$ with $R$ a reference system,
\begin{align}
	\frac{\Gamma_A\fleft\{\ch{E}_{ML\to A}\circ\ch{D}_{B\to\g{M}L}\circ\ch{N}_{A\to B}\fright\}\fleft[\rho_{RM}\fright]}{\tr\fleft[\Gamma_A\fleft\{\ch{E}_{ML\to A}\circ\ch{D}_{B\to\g{M}L}\circ\ch{N}_{A\to B}\fright\}\fleft[\rho_{RM}\fright]\fright]}&=\frac{\Gamma_L\fleft\{\ch{D}_{B\to\g{M}L}\circ\ch{N}_{A\to B}\circ\ch{E}_{ML\to A}\fright\}\fleft[\rho_{RM}\fright]}{\tr\fleft[\Gamma_L\fleft\{\ch{D}_{B\to\g{M}L}\circ\ch{N}_{A\to B}\circ\ch{E}_{ML\to A}\fright\}\fleft[\rho_{RM}\fright]\fright]}. \label{eq:assisted}
\end{align}
Note that the map from $\rho_{RM}$ to the right-hand side of Eq.~\eqref{eq:assisted} precisely represents the transformation from $M$ to $\g{M}$ resulting from adopting the strategy $(\ch{E}_{ML\to A},\ch{D}_{B\to\g{M}L})$ in a setting where communication is conducted forward in time through $\ch{N}_{A\to B}$ and assisted by a noiseless P-CTC on $L$.  Due to this equivalence, it can clearly be recognized that, mathematically, retrocausal communication reduces to entanglement-assisted communication~\cite{bennett1999EntanglementassistedClassicalCapacity,bennett2002EntanglementassistedCapacityQuantum,holevo2002EntanglementassistedClassicalCapacity} when $L$ is only used for entanglement distribution, and it reduces to unassisted communication~\cite{shannon1948MathematicalTheoryCommunication,holevo1973BoundsQuantityInformation,holevo1998CapacityQuantumChannel,schumacher1997SendingClassicalInformation,shor1995SchemeReducingDecoherence,schumacher1996SendingEntanglementNoisy,schumacher1996QuantumDataProcessing,lloyd1997CapacityNoisyQuantum,barnum1998InformationTransmissionNoisy,barnum2000QuantumFidelitiesChannel,shor2002QuantumChannelCapacity,devetak2005PrivateClassicalCapacity} when $L$ is not utilized at all.
\end{remark}

\begin{figure}[t]
\includegraphics[scale=0.3]{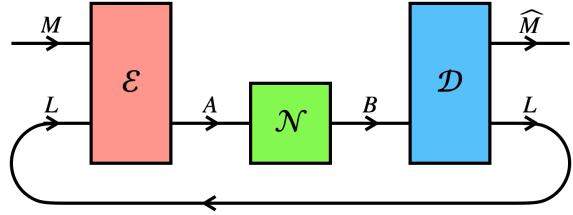}
\caption{In forward-in-time communication assisted by a noiseless P-CTC, the sender in the past is connected to the receiver in the future by a completely positive map $\ch{N}_{A\to B}$, and the receiver is connected to the sender by a noiseless P-CTC.  This setting is mathematically equivalent to retrocausal communication (see Fig.~\ref{fig:retrocausal}).}
\label{fig:assisted}
\end{figure}

\begin{remark}[Axiomatic interpretation]
\label{rem:axiomatic}
The setting of retrocausal communication can also be understood from an axiomatic perspective.  Specifically, in any communication setting, the sender and the receiver's strategy can be interpreted as a bipartite channel $\Theta_{MB\to A\g{M}}$, where $M$ and $A$ are held by the sender and $B$ and $\g{M}$ are held by the receiver (see, e.g., Ref.~\cite[Fig.~1]{duan2016NosignallingassistedZeroerrorCapacity} for an illustration).  Communication through $\ch{N}_{A\to B}$ by adopting this strategy can then be understood as connecting the output $A$ to the input $B$ of $\Theta_{MB\to A\g{M}}$ with $\ch{N}_{A\to B}$, resulting in a transformation from the sender's message system $M$ to the receiver's recovery system $\g{M}$.  Then the following can be said.
\begin{itemize}
	\item In standard (i.e., chronology-respecting) quantum theory, to avoid the formation of a causal loop when combining $\Theta_{MB\to A\g{M}}$ with $\ch{N}_{A\to B}$ for all $\ch{N}_{A\to B}$, a minimal constraint on the bipartite channel $\Theta_{MB\to A\g{M}}$ is that it should be nonsignaling from $B$ to $A$.  Under this constraint, combining $\Theta_{MB\to A\g{M}}$ with $\ch{N}_{A\to B}$ amounts to applying a superchannel to $\ch{N}_{A\to B}$~\cite{chiribella2008TransformingQuantumOperations,gour2019ComparisonQuantumChannels}.
	\item In addition to the above, in conventional communication settings, to ensure that the strategy itself does not introduce any communication resource from the sender to the receiver, it is also necessary to require that the bipartite channel $\Theta_{MB\to A\g{M}}$ be nonsignaling from $M$ to $\g{M}$.  This renders $\Theta_{MB\to A\g{M}}$ a two-way nonsignaling correlation and, if without further constraints, corresponds to the setting of nonsignaling-assisted communication~\cite{cubitt2011ZeroerrorChannelCapacity,matthews2012LinearProgramFinite,duan2016NosignallingassistedZeroerrorCapacity,takagi2020ApplicationResourceTheory}.  As more restrictive settings than nonsignaling-assisted communication, entanglement-assisted communication~\cite{bennett1999EntanglementassistedClassicalCapacity,bennett2002EntanglementassistedCapacityQuantum,holevo2002EntanglementassistedClassicalCapacity} corresponds to when $\Theta_{MB\to A\g{M}}$ is furthermore required to be realized by local operations and shared entanglement, and unassisted communication~\cite{shannon1948MathematicalTheoryCommunication,holevo1973BoundsQuantityInformation,holevo1998CapacityQuantumChannel,schumacher1997SendingClassicalInformation,shor1995SchemeReducingDecoherence,schumacher1996SendingEntanglementNoisy,schumacher1996QuantumDataProcessing,lloyd1997CapacityNoisyQuantum,barnum1998InformationTransmissionNoisy,barnum2000QuantumFidelitiesChannel,shor2002QuantumChannelCapacity,devetak2005PrivateClassicalCapacity} corresponds to when $\Theta_{MB\to A\g{M}}$ is realized by just local operations.
	\item In the setting of retrocausal communication (or equivalently, communication assisted by a noiseless P-CTC; see Remark~\ref{rem:assisted} and Fig.~\ref{fig:assisted}) that we consider, the nonsignaling constraint from $B$ to $A$ on $\Theta_{MB\to A\g{M}}$ is released, which enables the possible formation of a causal loop and renders the strategy indescribable by a superchannel; however, the nonsignaling constraint from $M$ to $\g{M}$ must be kept, so that the strategy, despite its inclusiveness, does not introduce any communication resource from the sender to the receiver.  Note that the sole requirement of $\Theta_{MB\to A\g{M}}$ being nonsignaling from $M$ to $\g{M}$ is equivalent to our assumption that the strategy can be decomposed into a pair of channels $(\ch{E}_{ML\to A},\ch{D}_{B\to\g{M}L})$ with $L$ a quantum memory, due to Ref.~\cite{eggeling2002SemicausalOperationsAre}.
\end{itemize}
\end{remark}

\subsection{Retrocausal quantum communication}
\label{sec:communication-quantum}

We first consider the case where the messages being transmitted are quantum.  A quantum message can be represented by a pure state $\psi_{RM}$, where $R$ is an inaccessible reference system that purifies the message system $M$.  For a strategy $(\ch{E}_{ML\to A},\ch{D}_{B\to\g{M}L})$, the fidelity of the daughter's recovery of the message is given by the overlap between $\psi_{R\g{M}}$ and the actual state of $R\g{M}$ that she recovers, which by Eq.~\eqref{eq:retrocausal} equals
\begin{align}
	F\fleft(\psi_{R\g{M}}\middle|\psi_{RM};\fleft(\ch{E}_{ML\to A},\ch{D}_{B\to\g{M}L}\fright),\ch{N}_{A\to B}\fright)&=\frac{\tr\fleft[\psi_{R\g{M}}\Gamma_A\fleft\{\ch{E}_{ML\to A}\circ\ch{D}_{B\to\g{M}L}\circ\ch{N}_{A\to B}\fright\}\fleft[\psi_{RM}\fright]\fright]}{\tr\fleft[\Gamma_A\fleft\{\ch{E}_{ML\to A}\circ\ch{D}_{B\to\g{M}L}\circ\ch{N}_{A\to B}\fright\}\fleft[\psi_{RM}\fright]\fright]}. \label{eq:fidelity}
\end{align}
The \emph{worst-case quantum infidelity} associated with the strategy is then given by
\begin{align}
	q_\abb{infid}\fleft(\fleft(\ch{E}_{ML\to A},\ch{D}_{B\to\g{M}L}\fright),\ch{N}_{A\to B}\fright)&\coloneq1-\inf_{\substack{d_R\in\spa{N}, \\
	\psi_{RM}\in\s{D}_{RM}\colon \\ \rk\fleft(\psi_{RM}\fright)=1}}F\fleft(\psi_{R\g{M}}\middle|\psi_{RM};\fleft(\ch{E}_{ML\to A},\ch{D}_{B\to\g{M}L}\fright),\ch{N}_{A\to B}\fright), \label{eq:infidelity}
\end{align}
which upper bounds the infidelity for all possible quantum messages.

The \emph{one-shot retrocausal quantum capacity} of a completely positive map $\ch{N}_{A\to B}\in\s{L}_{A\to B}$ within infidelity $\varepsilon\in[0,1]$ is defined as
\begin{align}
	Q_\abb{retro}^\varepsilon\fleft(\ch{N}\fright)&\coloneq\sup_{\substack{d_M,d_L\in\spa{N}, \\
	\ch{E}_{ML\to A}\in\s{C}_{ML\to A}, \\ \ch{D}_{B\to\g{M}L}\in\s{C}_{B\to\g{M}L}}}\left\{\log_2d_M\colon q_\abb{infid}\fleft(\fleft(\ch{E}_{ML\to A},\ch{D}_{B\to\g{M}L}\fright),\ch{N}_{A\to B}\fright)\leq\varepsilon\right\}, \label{eq:quantum-oneshot}
\end{align}
which quantifies the maximum number of qubits that can be transmitted within infidelity $\varepsilon$ from the father to the daughter through the noisy P-CTC represented by $\ch{N}_{A\to B}$ under an encoding and decoding of their choice.

The \emph{asymptotic retrocausal quantum capacity} of a completely positive map $\ch{N}_{A\to B}\in\s{L}_{A\to B}$ is defined as
\begin{align}
	Q_\abb{retro}\fleft(\ch{N}\fright)&\coloneq\inf_{\varepsilon\in(0,1)}\liminf_{n\to\infty}\frac{1}{n}Q_\abb{retro}^\varepsilon\fleft(\ch{N}^{\otimes n}\fright), \label{eq:quantum-asymptotic}
\end{align}
which quantifies the maximum rate at which the number of reliably transmitted qubits increases with respect to the number of parallel uses of the noisy P-CTC.

The \emph{error exponent of retrocausal quantum communication} for a completely positive map $\ch{N}_{A\to B}\in\s{L}_{A\to B}$ at a transmission rate $r\in[0,\infty)$ is defined as
\begin{align}
	E_\abb{retro,Q}^r\fleft(\ch{N}\fright)&\coloneq\liminf_{n\to\infty}\sup_{\substack{d_M,d_L\in\spa{N}, \\
	\ch{E}_{ML\to A^n}\in\s{C}_{ML\to A^n}, \\ \ch{D}_{B^n\to\g{M}L}\in\s{C}_{B^n\to\g{M}L}}}\left\{-\frac{1}{n}\log_2q_\abb{infid}\fleft(\fleft(\ch{E}_{ML\to A^n},\ch{D}_{B^n\to\g{M}L}\fright),\fleft(\ch{N}^{\otimes n}\fright)_{A^n\to B^n}\fright)\colon d_M\geq 2^{nr}\right\}, \label{eq:quantum-error}
\end{align}
which quantifies the maximum rate at which the infidelity exponentially decays as the number of transmitted qubits increases at rate $r$.

The \emph{strong-converse exponent of retrocausal quantum communication} for a completely positive map $\ch{N}_{A\to B}\in\s{L}_{A\to B}$ at a transmission rate $r\in[0,\infty)$ is defined as
\begin{align}
	&S_\abb{retro,Q}^r\fleft(\ch{N}\fright) \notag\\
	&\coloneq\limsup_{n\to\infty}\inf_{\substack{d_M,d_L\in\spa{N}, \\
	\ch{E}_{ML\to A^n}\in\s{C}_{ML\to A^n} \\ \ch{D}_{B^n\to\g{M}L}\in\s{C}_{B^n\to\g{M}L}}}\left\{-\frac{1}{n}\log_2\left(1-q_\abb{infid}\fleft(\fleft(\ch{E}_{ML\to A^n},\ch{D}_{B^n\to\g{M}L}\fright),\fleft(\ch{N}^{\otimes n}\fright)_{A^n\to B^n}\fright)\right)\colon d_M\geq 2^{nr}\right\}, \label{eq:quantum-strong}
\end{align}
which quantifies the minimum rate at which the infidelity exponentially approaches $1$ as the number of transmitted qubits increases at rate $r$.

\subsection{Retrocausal classical communication}
\label{sec:communication-classical}

We now consider the case where the messages being transmitted are classical.  A classical message $m$ from an alphabet of size $d_M$ is represented by the classical state $\op{m}{m}_M$, with $\{\ket{m}\}_{m=0}^{d_M-1}$ an orthonormal basis of $\spa{H}_M$.  For a strategy $(\ch{E}_{ML\to A},\ch{D}_{B\to\g{M}L})$, the probability of the daughter correctly recovering the message is given by the overlap between $\op{m}{m}_\g{M}$ and the actual state of $\g{M}$ that she recovers, which by Eq.~\eqref{eq:retrocausal} equals
\begin{align}
	\Pr\fleft\{m\middle|m;\fleft(\ch{E}_{ML\to A},\ch{D}_{B\to\g{M}L}\fright),\ch{N}_{A\to B}\fright\}&=\frac{\tr\fleft[\op{m}{m}_\g{M}\Gamma_A\fleft\{\ch{E}_{ML\to A}\circ\ch{D}_{B\to\g{M}L}\circ\ch{N}_{A\to B}\fright\}\fleft[\op{m}{m}_M\fright]\fright]}{\tr\fleft[\Gamma_A\fleft\{\ch{E}_{ML\to A}\circ\ch{D}_{B\to\g{M}L}\circ\ch{N}_{A\to B}\fright\}\fleft[\op{m}{m}_M\fright]\fright]}. \label{eq:probability}
\end{align}
The \emph{worst-case classical error probability} associated with the strategy is then given by
\begin{align}
	p_\abb{error}\fleft(\fleft(\ch{E}_{ML\to A},\ch{D}_{B\to\g{M}L}\fright),\ch{N}_{A\to B}\fright)&\coloneq1-\min_{m}\Pr\fleft\{m\middle|m;\fleft(\ch{E}_{ML\to A},\ch{D}_{B\to\g{M}L}\fright),\ch{N}_{A\to B}\fright\}, \label{eq:error}
\end{align}
where the minimization is over all symbols in the alphabet.  It upper bounds the error probability for all possible classical messages.

The \emph{one-shot retrocausal classical capacity} of a completely positive map $\ch{N}_{A\to B}\in\s{L}_{A\to B}$ within error $\varepsilon\in[0,1]$ is defined as
\begin{align}
	C_\abb{retro}^\varepsilon\fleft(\ch{N}\fright)&\coloneq\sup_{\substack{d_M,d_L\in\spa{N}, \\
	\ch{E}_{ML\to A}\in\s{C}_{ML\to A}, \\ \ch{D}_{B\to\g{M}L}\in\s{C}_{B\to\g{M}L}}}\left\{\log_2d_M\colon p_\abb{error}\fleft(\fleft(\ch{E}_{ML\to A},\ch{D}_{B\to\g{M}L}\fright),\ch{N}_{A\to B}\fright)\leq\varepsilon\right\}, \label{eq:classical-oneshot}
\end{align}
which quantifies the maximum number of qubits that can be transmitted within error $\varepsilon$ from the father to the daughter through the noisy P-CTC represented by $\ch{N}_{A\to B}$ under an encoding and decoding of their choice.

The \emph{asymptotic retrocausal classical capacity} of a completely positive map $\ch{N}_{A\to B}\in\s{L}_{A\to B}$ is defined as
\begin{align}
	C_\abb{retro}\fleft(\ch{N}\fright)&\coloneq\inf_{\varepsilon\in(0,1)}\liminf_{n\to\infty}\frac{1}{n}C_\abb{retro}^\varepsilon\fleft(\ch{N}^{\otimes n}\fright), \label{eq:classical-asymptotic}
\end{align}
which quantifies the maximum rate at which the number of reliably transmitted qubits increases with respect to the number of parallel uses of the noisy P-CTC.
It quantifies the maximum rate at which the number of reliably transmitted classical bits increases with respect to the number of parallel uses of the noisy P-CTC.

The \emph{error exponent of retrocausal classical communication} for a completely positive map $\ch{N}_{A\to B}\in\s{L}_{A\to B}$ at a transmission rate $r\in[0,\infty)$ is defined as
\begin{align}
	E_\abb{retro,C}^r\fleft(\ch{N}\fright)&\coloneq\liminf_{n\to\infty}\sup_{\substack{d_M,d_L\in\spa{N}, \\
	\ch{E}_{ML\to A^n}\in\s{C}_{ML\to A^n}, \\ \ch{D}_{B^n\to\g{M}L}\in\s{C}_{B^n\to\g{M}L}}}\left\{-\frac{1}{n}\log_2p_\abb{error}\fleft(\fleft(\ch{E}_{ML\to A^n},\ch{D}_{B^n\to\g{M}L}\fright),\fleft(\ch{N}^{\otimes n}\fright)_{A^n\to B^n}\fright)\colon d_M\geq 2^{nr}\right\}, \label{eq:classical-error}
\end{align}
which quantifies the maximum rate at which the error probability exponentially decays as the number of transmitted classical bits increases at rate $r$.

The \emph{strong-converse exponent of retrocausal classical communication} for a completely positive map $\ch{N}_{A\to B}\in\s{L}_{A\to B}$ at a transmission rate $r\in[0,\infty)$ is defined as
\begin{align}
	&S_\abb{retro,C}^r\fleft(\ch{N}\fright) \notag\\
	&\coloneq\limsup_{n\to\infty}\inf_{\substack{d_M,d_L\in\spa{N}, \\
	\ch{E}_{ML\to A^n}\in\s{C}_{ML\to A^n}, \\ \ch{D}_{B^n\to\g{M}L}\in\s{C}_{B^n\to\g{M}L}}}\left\{-\frac{1}{n}\log_2\left(1-p_\abb{error}\fleft(\fleft(\ch{E}_{ML\to A^n},\ch{D}_{B^n\to\g{M}L}\fright),\fleft(\ch{N}^{\otimes n}\fright)_{A^n\to B^n}\fright)\right)\colon d_M\geq 2^{nr}\right\}, \label{eq:classical-strong}
\end{align}
which quantifies the minimum rate at which the error probability exponentially approaches $1$ as the number of transmitted classical bits increases at rate $r$.

\section{Retrocausal quantum capacity}
\label{sec:quantum}

In this section, we precisely determine the one-shot retrocausal quantum capacity.

\begin{theorem}[One-shot retrocausal quantum capacity]
\label{thm:quantum-oneshot}
The one-shot retrocausal quantum capacity has a closed-form expression in terms of the max- and Doeblin informations: for a completely positive map $\ch{N}_{A\to B}\in\s{L}_{A\to B}$ and a real number $\varepsilon\in(0,1)$,
\begin{align}
	Q_\abb{retro}^\varepsilon\fleft(\ch{N}\fright)&=\log_2\left\lfloor\sqrt{\frac{\varepsilon}{1-\varepsilon}2^{I_{\max}\fleft(\ch{N}\fright)+I_\abb{doe}\fleft(\ch{N}\fright)}+1}\right\rfloor,
\end{align}
where the max- and Doeblin informations are defined in Eqs.~\eqref{eq:max} and \eqref{eq:doeblin}, respectively.
\end{theorem}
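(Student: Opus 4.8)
The plan is to establish the equality by matching achievability and converse bounds, both resting on the singlet-fraction characterizations of Lemma~\ref{lem:fraction}. Write $\phi(\ch{K})\coloneq\tr[\Phi_{AA'}(\ch{K}_{B\to A}\circ\ch{N}_{A\to B})[\Phi_{AA'}]]$ for a channel $\ch{K}_{B\to A}\in\s{C}_{B\to A}$, and let $f_{\max}$ and $f_{\min}$ be the maximum and minimum of $\phi$ over $\s{C}_{B\to A}$; Lemma~\ref{lem:fraction} then gives $f_{\max}/f_{\min}=2^{I_{\max}(\ch{N})+I_\abb{doe}(\ch{N})}$. (The degenerate cases $\ch{N}=0$ and $f_{\min}=0$, in which both sides of the claimed identity equal $+\infty$, would be handled separately.)

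\emph{Achievability.} I would analyze the amplified probabilistic teleportation strategy of the main text: the daughter's decoding $\ch{D}_{B\to\g{M}L}$ prepares $\Phi_{\g{M}L_1}$ and copies $B$ into a register $L_2$ (with $L=L_1L_2$), while the father's encoding $\ch{E}_{ML\to A}$ performs a Bell measurement on $ML_1$ and, depending on whether its outcome is the standard maximally entangled state, applies $\ch{K}_{L_2\to A}^{(0)}$ or $\ch{K}_{L_2\to A}^{(1)}$ before feeding $A$ into the noisy P-CTC. Expanding the unnormalized output of Eq.~\eqref{eq:retrocausal} and using that the Choi operator $\ch{N}_{A\to B}[\Phi_{AA'}]$ held in $L_2$ is correlated only with the loop-closing leg $A'$, one finds that each of the $d_M^2$ Bell outcomes $(x,z)$ contributes a term $d_M^{-2}\phi(\ch{K}^{(k)})(U_{xz})_{\g{M}}\psi_{R\g{M}}(U_{xz}^\dagger)_{\g{M}}$, where $U_{xz}$ is the teleportation correction for outcome $(x,z)$ (so $U_{00}$ is the identity) and $k=0$ iff $(x,z)=(0,0)$. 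Since the cross terms produced by the nontrivial corrections are traceless against $\psi_{R\g{M}}$ exactly when $\psi_{RM}=\Phi_{RM}$, the worst-case quantum infidelity equals $(d_M^2-1)\phi(\ch{K}^{(1)})/[\phi(\ch{K}^{(0)})+(d_M^2-1)\phi(\ch{K}^{(1)})]$, which is minimized to $(d_M^2-1)f_{\min}/[f_{\max}+(d_M^2-1)f_{\min}]$ by taking $\ch{K}^{(0)}$ and $\ch{K}^{(1)}$ to be the singlet-fraction maximizer and minimizer. This is at most $\varepsilon$ precisely when $d_M^2\le\tfrac{\varepsilon}{1-\varepsilon}\tfrac{f_{\max}}{f_{\min}}+1$, so maximizing $d_M$ over the positive integers gives $Q_\abb{retro}^\varepsilon(\ch{N})\ge\log_2\lfloor\sqrt{\tfrac{\varepsilon}{1-\varepsilon}2^{I_{\max}(\ch{N})+I_\abb{doe}(\ch{N})}+1}\rfloor$.

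\emph{Converse.} For an arbitrary strategy $(\ch{E}_{ML\to A},\ch{D}_{B\to\g{M}L})$ I would lower-bound the worst-case quantum infidelity by evaluating the fidelity at the single admissible test input $\psi_{RM}=\Phi_{RM}$ with $R\cong M$. Setting $N_\Phi\coloneq\Gamma_A\{\ch{E}_{ML\to A}\circ\ch{D}_{B\to\g{M}L}\circ\ch{N}_{A\to B}\}[\Phi_{RM}]$ and expanding it in the Bell basis $\{\Phi_{R\g{M}}^{xz}\}$ of $R\g{M}$ (which resolves the identity), the heart of the argument is to show that every overlap $\tr[\Phi_{R\g{M}}^{xz}N_\Phi]$ equals one common positive constant times $\phi(\ch{K}^{xz})$ for some channel $\ch{K}^{xz}_{B\to A}\in\s{C}_{B\to A}$ built from $\ch{D}$, $\ch{E}$, and the correction $U_{xz}$; intuitively, closing the loop through $\ch{N}$ on the test state $\Phi_{RM}$ turns the father-and-daughter strategy into a family of effective recovery channels, one per Bell sector. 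I expect this channel-extraction step --- verifying complete positivity and trace preservation of $\ch{K}^{xz}$, the sector-independent normalization, and uniformity in the case of completely positive maps that need not be trace preserving --- to be the main obstacle; it may be streamlined by first recasting $N_\Phi$ via the cyclicity of noiseless P-CTCs (Lemma~\ref{lem:cyclicity}) or the bipartite-channel viewpoint of Remark~\ref{rem:axiomatic}. Granting this, Lemma~\ref{lem:fraction} bounds the $(0,0)$-sector overlap above by (the common constant times) $f_{\max}$ and each of the other $d_M^2-1$ overlaps below by (the common constant times) $f_{\min}$, so that $F(\Phi_{R\g{M}}|\Phi_{RM};\cdots)\le f_{\max}/[f_{\max}+(d_M^2-1)f_{\min}]$ and the worst-case quantum infidelity is at least $(d_M^2-1)f_{\min}/[f_{\max}+(d_M^2-1)f_{\min}]$. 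Requiring this to be at most $\varepsilon$ forces $d_M^2\le\tfrac{\varepsilon}{1-\varepsilon}\tfrac{f_{\max}}{f_{\min}}+1$, hence $Q_\abb{retro}^\varepsilon(\ch{N})\le\log_2\lfloor\sqrt{\tfrac{\varepsilon}{1-\varepsilon}2^{I_{\max}(\ch{N})+I_\abb{doe}(\ch{N})}+1}\rfloor$, which with the achievability bound yields the claimed equality.
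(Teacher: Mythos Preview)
Your achievability argument coincides with the paper's (Proposition~\ref{prop:quantum-achievability}): the amplified probabilistic teleportation strategy, analyzed exactly as you describe, attains the stated bound.

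Your converse, however, departs from the paper's route and carries the gap you yourself flag. The ``channel-extraction'' claim --- that each Bell-sector overlap $\tr[\Phi_{R\g{M}}^{xz}N_\Phi]$ equals $c\cdot\phi(\ch{K}^{xz})$ for some channel $\ch{K}^{xz}_{B\to A}$ with a \emph{common} constant $c$ --- is left unproven. It is in fact true with $c=1/d_M^2$: defining $P^{xz}_{A'B}$ by $\tr[P^{xz}X]=\tr[(\Phi_{R\g{M}}^{xz}\otimes\Phi_{AA'})(\ch{E}\circ\ch{D})[\Phi_{RM}\otimes X_{A'B}]]$ for all $X$, complete positivity of $\ch{E}\circ\ch{D}$ gives $P^{xz}\geq0$, and a direct computation using only that $\ch{E},\ch{D}$ are trace preserving and that $\tr_R[\Phi_{R\g{M}}^{xz}\,\pi_R]=\pi_\g{M}/d_M$ yields $\tr_{A'}[P^{xz}]=\tfrac{1}{d_Ad_M^2}\1_B$ independently of $(x,z)$; hence $d_M^2P^{xz}=(\ch{K}^{xz})^\dagger[\Phi_{AA'}]$ for a unique channel $\ch{K}^{xz}$. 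So your approach can be completed, but this extraction lemma is genuinely needed and is not supplied by cyclicity or Remark~\ref{rem:axiomatic} alone.

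The paper's converse (Proposition~\ref{prop:quantum-converse}) avoids channel extraction altogether. It works directly from the operator-inequality forms of $I_{\max}$ and $I_\abb{doe}$: from $\Phi^\ch{N}\leq\lambda\,\pi_{A'}\otimes\sigma_B$ and $\pi_{A'}\otimes\tau_B\leq\mu\,\Phi^\ch{N}$ one pushes these inequalities through the loop supermap and invokes the structural Lemma~\ref{lem:useless} (a replacement map inside the loop yields a replacement map outside), obtaining $\tr[\Phi_{M'\g{M}}N_\Phi]\leq\lambda/(d_M^2d_A^2)$ and $\tr[(\1-\Phi_{M'\g{M}})N_\Phi]\geq(d_M^2-1)/(\mu d_M^2 d_A^2)$; the ratio then gives $d_M^2\leq\tfrac{\varepsilon}{1-\varepsilon}\lambda\mu+1$ directly. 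This is shorter and makes transparent why only the sum $I_{\max}+I_\abb{doe}$ enters, whereas your Bell-decomposition gives finer per-sector control at the cost of the additional extraction argument.
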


\begin{remark}[Efficient computability]
\label{rem:computable}
Since the max- and Doeblin informations are both defined via semidefinite programs (SDPs) [see Eqs.~\eqref{eq:max-SDP} and \eqref{eq:doeblin-SDP}], the one-shot retrocausal quantum capacity is efficiently computable as a closed-form formula thereof.
\end{remark}

In what follows, we prove Theorem~\ref{thm:quantum-oneshot} by first establishing the desired converse bound on the one-shot retrocausal quantum capacity and then showing that it is achievable by construction.  The basic idea of the proof is similar to that of the results in Ref.~\cite{ji2024PostselectedCommunicationQuantum}.

\subsection{Converse bound}
\label{sec:quantum-converse}

As shown in the following lemma, if the father and the daughter were connected by a noisy P-CTC represented by a replacement channel, then the resulting transformation from $M$ to $\g{M}$ is necessarily also a replacement channel and thus useless for communication, regardless of the strategy they adopt.  A similar statement can be made for affine combinations of replacement channels as well.

\begin{lemma}[Useless noisy P-CTC]
\label{lem:useless}
For a strategy $(\ch{E}_{ML\to A},\ch{D}_{B\to\g{M}L})$ and a unit-trace Hermitian operator $\tau_B\in\aff(\s{D}_B)$, there exists another unit-trace Hermitian operator $\eta_\g{M}\in\aff(\s{D}_\g{M})$ such that
\begin{align}
	\Gamma_A\fleft\{\ch{E}_{ML\to A}\circ\ch{D}_{B\to\g{M}L}\circ\ch{R}_{A\to B}^\tau\fright\}&=\frac{1}{d_A^2}\ch{R}_{M\to\g{M}}^\eta.
\end{align}
Furthermore, if $\tau_B$ is a state, then so is $\eta_\g{M}$.
\end{lemma}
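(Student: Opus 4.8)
The plan is to unfold the definition of the loop supermap $\Gamma_A$ in Eq.~\eqref{eq:loop} applied to the composite $\ch{E}_{ML\to A}\circ\ch{D}_{B\to\g{M}L}\circ\ch{R}_{A\to B}^\tau$, and to track how an arbitrary input $\rho_{RM}$ (with $R$ an arbitrary reference system) propagates through it. The key structural observation is that the replacement map $\ch{R}_{A\to B}^\tau$ erases whatever enters it on the loop register $A$, so the noisy P-CTC feeds no information back into the strategy; the composite therefore degenerates into a product, and the loop supermap merely rescales it by $1/d_A^2$.

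Concretely, I would feed $\rho_{RM}\otimes\Phi_{AA'}$ into the composite. Applying $\ch{R}_{A\to B}^\tau$ replaces the loop half $A$ of $\Phi_{AA'}$ by $\tau_B$, leaving its reduced state $\pi_{A'}$ uncorrelated with everything else, so the global operator becomes $\rho_{RM}\otimes\pi_{A'}\otimes\tau_B$. Applying $\ch{D}_{B\to\g{M}L}$ turns this into $\rho_{RM}\otimes\pi_{A'}\otimes\omega_{\g{M}L}$, where $\omega_{\g{M}L}\coloneq\ch{D}_{B\to\g{M}L}\fleft[\tau_B\fright]$ is a unit-trace Hermitian operator (and a state when $\tau_B$ is, since $\ch{D}_{B\to\g{M}L}$ is completely positive and trace preserving). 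Applying $\ch{E}_{ML\to A}$ to the $M$ and $L$ registers then gives $\xi_{R\g{M}A}\otimes\pi_{A'}$ with $\xi_{R\g{M}A}\coloneq\ch{E}_{ML\to A}\fleft[\rho_{RM}\otimes\omega_{\g{M}L}\fright]$. Finally, the loop supermap projects $AA'$ onto $\Phi_{AA'}$ and traces them out; since the $A'$ register carries the state $\pi_{A'}$ uncorrelated with the rest, this postselection-and-trace operation acts on $\xi_{R\g{M}A}\otimes\pi_{A'}$ simply as $(1/d_A^2)\tr_A$ on the $A$ register (a consequence of the transpose property of $\Phi_{AA'}$), yielding $(1/d_A^2)\tr_A\fleft[\xi_{R\g{M}A}\fright]$. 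Because $\ch{E}_{ML\to A}$ is trace preserving, tracing out its output $A$ is the same as tracing out its inputs $M$ and $L$, so $\tr_A\fleft[\xi_{R\g{M}A}\fright]=\rho_R\otimes\tr_L\fleft[\omega_{\g{M}L}\fright]$. Setting $\eta_\g{M}\coloneq\tr_L\circ\ch{D}_{B\to\g{M}L}\fleft[\tau_B\fright]$, one reads off
\[
	\Gamma_A\fleft\{\ch{E}_{ML\to A}\circ\ch{D}_{B\to\g{M}L}\circ\ch{R}_{A\to B}^\tau\fright\}\fleft[\rho_{RM}\fright]=\frac{1}{d_A^2}\,\rho_R\otimes\eta_\g{M}=\frac{1}{d_A^2}\,\ch{R}_{M\to\g{M}}^\eta\fleft[\rho_{RM}\fright],
\]
and, since this holds for every $\rho_{RM}$ and every reference $R$, it is an identity of linear maps, which is the claimed equality.

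It then remains only to observe that $\eta_\g{M}$ is Hermitian and has unit trace, since $\tau_B$ does and both $\ch{D}_{B\to\g{M}L}$ and the partial trace preserve Hermiticity and trace; and that $\eta_\g{M}$ is positive semidefinite, hence a quantum state, whenever $\tau_B$ is, by complete positivity of $\ch{D}_{B\to\g{M}L}$ --- so positivity of $\tau_B$ enters only the ``furthermore'' clause, in line with the way the lemma is stated. I do not anticipate a genuine obstacle in this argument. The only step requiring a little care is the collapse of the loop supermap in the final stage, i.e.\ verifying that feeding a maximally mixed, uncorrelated $A'$ register into the postselection-and-trace step reproduces $(1/d_A^2)\tr_A$ acting on the returning $A$ register. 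This is precisely the standard transpose (``ricochet'') property of the maximally entangled state specialized to $\pi_{A'}$, and it is what lets the replacement structure survive the closed timelike loop.
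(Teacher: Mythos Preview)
Your proof is correct and follows essentially the same route as the paper's: both unfold $\Gamma_A$ on an arbitrary $\rho_{RM}$, use that $\ch{R}_{A\to B}^\tau$ factors the state as $\rho_{RM}\otimes\pi_{A'}\otimes\tau_B$, then exploit trace preservation of $\ch{E}_{ML\to A}$ to collapse the result to $\frac{1}{d_A^2}\,\rho_R\otimes\eta_\g{M}$ with $\eta_\g{M}=\tr_L\circ\ch{D}_{B\to\g{M}L}[\tau_B]$. The only cosmetic difference is that you name the intermediate operators $\omega_{\g{M}L}$ and $\xi_{R\g{M}A}$ explicitly, whereas the paper keeps the composite inline.
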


\begin{proof}
Define a unit-trace Hermitian operator $\eta_\g{M}\coloneq\ch{D}_{B\to\g{M}}[\tau_B]$.  Note that $\eta_\g{M}$ is a state if $\tau_B$ is.  Then it follows from the definition of the loop supermap [Eq.~\eqref{eq:loop}] that, for every linear operator $\rho_{RM}\in\spa{B}_{RM}$,
\begin{align}
	\Gamma_A\fleft\{\ch{E}_{ML\to A}\circ\ch{D}_{B\to\g{M}L}\circ\ch{R}_{A\to B}^\tau\fright\}\fleft[\rho_{RM}\fright]&=\tr_{AA'}\fleft[\Phi_{AA'}\left(\ch{E}_{ML\to A}\circ\ch{D}_{B\to\g{M}L}\circ\ch{R}_{A\to B}^\tau\right)\fleft[\rho_{RM}\otimes\Phi_{AA'}\fright]\fright] \\
	&=\tr_{AA'}\fleft[\Phi_{AA'}\left(\ch{E}_{ML\to A}\circ\ch{D}_{B\to\g{M}L}\right)\fleft[\rho_{RM}\otimes\pi_{A'}\otimes\tau_B\fright]\fright] \\
	&=\frac{1}{d_A}\tr_A\fleft[\pi_A\left(\ch{E}_{ML\to A}\circ\ch{D}_{B\to\g{M}L}\right)\fleft[\rho_{RM}\otimes\tau_B\fright]\fright] \\
	&=\frac{1}{d_A^2}\tr_{ML}\fleft[\ch{D}_{B\to\g{M}L}\fleft[\rho_{RM}\otimes\tau_B\fright]\fright] \\
	&=\frac{1}{d_A^2}\rho_R\otimes\ch{D}_{B\to\g{M}}\fleft[\tau_B\fright] \\
	&=\frac{1}{d_A^2}\ch{R}_{M\to\g{M}}^\eta\fleft[\rho_{RM}\fright],
\end{align}
which implies the desired statement.
\end{proof}

We now prove the desired converse bound on the one-shot retrocausal quantum capacity.

\begin{proposition}[Converse bound on retrocausal quantum capacity]
\label{prop:quantum-converse}
For a completely positive map $\ch{N}_{A\to B}\in\s{L}_{A\to B}$ and a real number $\varepsilon\in(0,1)$,
\begin{align}
	Q_\abb{retro}^\varepsilon\fleft(\ch{N}\fright)&\leq\log_2\left\lfloor\sqrt{\frac{\varepsilon}{1-\varepsilon}2^{I_{\max}\fleft(\ch{N}\fright)+I_\abb{doe}\fleft(\ch{N}\fright)}+1}\right\rfloor.
\end{align}
\end{proposition}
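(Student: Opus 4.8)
The plan is to test an arbitrary strategy against the maximally entangled message state and to sandwich the Choi operator of $\ch{N}$ between two rescaled replacement maps, so that Lemma~\ref{lem:useless} collapses the nonlinear map $\ch{M}_{M\to\g{M}}$ into an elementary scalar optimization. Fix a strategy $(\ch{E}_{ML\to A},\ch{D}_{B\to\g{M}L})$ with $q_\abb{infid}\le\varepsilon$; we may assume $d_M\ge 2$, since otherwise $\log_2 d_M=0$ and the bound holds trivially. Take the pure test state $\psi_{RM}=\Phi_{RM}$ with $R\cong M$. By the definition of worst-case infidelity [Eq.~\eqref{eq:infidelity}], $q_\abb{infid}\le\varepsilon$ forces $F(\Phi_{R\g{M}}\,|\,\Phi_{RM};(\ch{E}_{ML\to A},\ch{D}_{B\to\g{M}L}),\ch{N}_{A\to B})\ge 1-\varepsilon$. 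Writing $\omega_{R\g{M}}\coloneq\Gamma_A\fleft\{\ch{E}_{ML\to A}\circ\ch{D}_{B\to\g{M}L}\circ\ch{N}_{A\to B}\fright\}\fleft[\Phi_{RM}\fright]$, which is positive semidefinite since $\ch{N}$, $\ch{E}$, $\ch{D}$ are completely positive, $\Gamma_A$ preserves complete positivity, and $\Phi_{RM}\ge 0$, this fidelity [Eq.~\eqref{eq:fidelity}] equals $x/(x+y)$ with $x\coloneq\langle\Phi|\omega|\Phi\rangle_{R\g{M}}$ and $y\coloneq\tr[(\1_{R\g{M}}-\Phi_{R\g{M}})\omega]\ge 0$.

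The technical heart is the two-sided bound on $\omega$. For any $\lambda_1\in\spa{R}$ and $\sigma_B\in\s{D}_B$ feasible for $I_{\max}(\ch{N})$, i.e., $\Phi_{A'B}^\ch{N}\le\lambda_1\pi_{A'}\otimes\sigma_B$ [cf.\ Eq.~\eqref{eq:max-SDP}], the difference $\lambda_1\ch{R}_{A\to B}^\sigma-\ch{N}_{A\to B}$ has Choi operator $\lambda_1\pi_{A'}\otimes\sigma_B-\Phi_{A'B}^\ch{N}\ge 0$ and is thus completely positive; composing with the channels $\ch{D}$ and $\ch{E}$ and then applying $\Gamma_A$, each of which preserves complete positivity, and evaluating the result on $\Phi_{RM}\ge 0$, we get $\omega\le\lambda_1\,\Gamma_A\fleft\{\ch{E}\circ\ch{D}\circ\ch{R}^\sigma\fright\}\fleft[\Phi_{RM}\fright]=(\lambda_1/d_A^2)\,\pi_R\otimes\eta_{1,\g{M}}$, where by Lemma~\ref{lem:useless} $\eta_1=\ch{D}_{B\to\g{M}}[\sigma_B]\in\s{D}_\g{M}$; hence $x\le(\lambda_1/d_A^2)\langle\Phi|(\pi_R\otimes\eta_1)|\Phi\rangle=\lambda_1/(d_A^2 d_M^2)$, using $\tr[\eta_1]=1$. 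Symmetrically, for any $\lambda_2\in\spa{R}$ and $\tau_B\in\aff(\s{D}_B)$ feasible for $I_\abb{doe}(\ch{N})$, i.e., $\pi_{A'}\otimes\tau_B\le\lambda_2\Phi_{A'B}^\ch{N}$ [cf.\ Eq.~\eqref{eq:doeblin-SDP}], the map $\ch{N}_{A\to B}-(1/\lambda_2)\ch{R}_{A\to B}^\tau$ is completely positive, so $\omega\ge(1/(\lambda_2 d_A^2))\,\pi_R\otimes\eta_{2,\g{M}}$ with $\eta_2=\ch{D}_{B\to\g{M}}[\tau_B]\in\aff(\s{D}_\g{M})$ by Lemma~\ref{lem:useless}; pairing this operator inequality with the positive operator $\1_{R\g{M}}-\Phi_{R\g{M}}$ and using $\tr[\eta_2]=1$ yields $y\ge(1/(\lambda_2 d_A^2))\tr[(\1_{R\g{M}}-\Phi_{R\g{M}})(\pi_R\otimes\eta_2)]=(d_M^2-1)/(\lambda_2 d_A^2 d_M^2)$.

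Finally, since $t\mapsto t/(t+c)$ is nondecreasing for $c>0$ and $x/(x+y)$ is nonincreasing in $y$, the bounds $x\le\lambda_1/(d_A^2 d_M^2)$ and $y\ge(d_M^2-1)/(\lambda_2 d_A^2 d_M^2)>0$ combine to $1-\varepsilon\le x/(x+y)\le\lambda_1\lambda_2/(\lambda_1\lambda_2+d_M^2-1)$, which rearranges to $d_M^2-1\le\frac{\varepsilon}{1-\varepsilon}\lambda_1\lambda_2$. Infimizing the right-hand side over all feasible $(\lambda_1,\sigma_B)$ and $(\lambda_2,\tau_B)$ independently replaces $\lambda_1\lambda_2$ by $2^{I_{\max}(\ch{N})}2^{I_\abb{doe}(\ch{N})}$, giving $d_M\le\sqrt{\frac{\varepsilon}{1-\varepsilon}2^{I_{\max}(\ch{N})+I_\abb{doe}(\ch{N})}+1}$; as $d_M$ is a positive integer this gives $\log_2 d_M\le\log_2\lfloor\sqrt{\frac{\varepsilon}{1-\varepsilon}2^{I_{\max}(\ch{N})+I_\abb{doe}(\ch{N})}+1}\rfloor$, and taking the supremum over strategies [Eq.~\eqref{eq:quantum-oneshot}] yields the claim. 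The main obstacle I anticipate is the bookkeeping in the sandwich step: verifying at the level of Choi operators that the two differences of maps are completely positive, confirming that $\Gamma_A$ composed with the channels $\ch{E}$ and $\ch{D}$ transports these operator inequalities to $\omega$, and checking that Lemma~\ref{lem:useless} applies unchanged (with its constant $1/d_A^2$ and the marginal channel $\ch{D}_{B\to\g{M}}$) when $\ch{N}$ is replaced by a rescaled replacement map; everything else is elementary scalar manipulation.
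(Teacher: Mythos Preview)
Your proof is correct and follows essentially the same approach as the paper's: both test the strategy on the maximally entangled input, sandwich $\ch{N}$ between rescaled replacement maps using the feasibility conditions for $I_{\max}$ and $I_\abb{doe}$, invoke Lemma~\ref{lem:useless} to collapse the replacement-map contributions to $\pi_R\otimes\eta$, and combine the resulting scalar bounds to get $d_M^2\le\frac{\varepsilon}{1-\varepsilon}\lambda\mu+1$. The only cosmetic difference is that you bound $x$ and $y$ separately and appeal to monotonicity of $x/(x+y)$, whereas the paper keeps the common denominator $\tr[\omega]$ and divides two inequalities; these are equivalent reorganizations of the same argument.
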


\begin{proof}
Let $(\ch{E}_{ML\to A},\ch{D}_{B\to\g{M}L})$ be a strategy such that
\begin{align}
	q_\abb{infid}\fleft(\fleft(\ch{E}_{ML\to A},\ch{D}_{B\to\g{M}L}\fright),\ch{N}_{A\to B}\fright)&\leq\varepsilon. \label{pf:quantum-converse-1}
\end{align}
Let $\lambda,\mu\in\spa{R}$ be two real numbers, $\sigma_B\in\s{D}_B$ a state, and $\tau_B\in\aff(\s{D}_B)$ a unit-trace Hermitian operator, such that
\begin{align}
	\Phi_{A'B}^\ch{N}&\leq\lambda\Phi_{A'B}^{\ch{R}^\sigma}, \label{pf:quantum-converse-2}\\
	\Phi_{A'B}^{\ch{R}^\tau}&\leq\mu\Phi_{A'B}^\ch{N}. \label{pf:quantum-converse-3}
\end{align}
Equation~\eqref{pf:quantum-converse-2} means that $\lambda\ch{R}_{A\to B}^\sigma-\ch{N}_{A\to B}$ is a completely positive map, and therefore so is $\lambda\Gamma_A\fleft\{\ch{E}_{ML\to A}\circ\ch{D}_{B\to\g{M}L}\circ\ch{R}_{A\to B}^\sigma\fright\}-\Gamma_A\fleft\{\ch{E}_{ML\to A}\circ\ch{D}_{B\to\g{M}L}\circ\ch{N}_{A\to B}\fright\}$.  This implies that
\begin{align}
	\Gamma_A\fleft\{\ch{E}_{ML\to A}\circ\ch{D}_{B\to\g{M}L}\circ\ch{N}_{A\to B}\fright\}\fleft[\Phi_{M'M}\fright]&\leq\lambda\Gamma_A\fleft\{\ch{E}_{ML\to A}\circ\ch{D}_{B\to\g{M}L}\circ\ch{R}_{A\to B}^\sigma\fright\}\fleft[\Phi_{M'M}\fright] \\
	&=\frac{\lambda}{d_A^2}\ch{R}_{M\to\g{M}}^\omega\fleft[\Phi_{M'M}\fright] \\
	&=\frac{\lambda}{d_A^2}\pi_{M'}\otimes\omega_\g{M}, \label{pf:quantum-converse-4}
\end{align}
where $\omega_\g{M}\in\s{D}_\g{M}$ is a state, whose existence is guaranteed by Lemma~\ref{lem:useless}.  Likewise, it follows from Eq.~\eqref{pf:quantum-converse-3} that
\begin{align}
	\mu\Gamma_A\fleft\{\ch{E}_{ML\to A}\circ\ch{D}_{B\to\g{M}L}\circ\ch{N}_{A\to B}\fright\}\fleft[\Phi_{M'M}\fright]&\geq\Gamma_A\fleft\{\ch{E}_{ML\to A}\circ\ch{D}_{B\to\g{M}L}\circ\ch{R}_{A\to B}^\tau\fright\}\fleft[\Phi_{M'M}\fright] \\
	&=\frac{1}{d_A^2}\ch{R}_{M\to\g{M}}^\eta\fleft[\Phi_{M'M}\fright] \\
	&=\frac{1}{d_A^2}\pi_{M'}\otimes\eta_\g{M}, \label{pf:quantum-converse-5}
\end{align}
where $\eta_\g{M}\in\aff(\s{D}_\g{M})$ is a unit-trace Hermitian operator, whose existence is guaranteed by Lemma~\ref{lem:useless}.  Then it follows from Eq.~\eqref{pf:quantum-converse-1} that
\begin{align}
	1-\varepsilon&\leq1-q_\abb{infid}\fleft(\fleft(\ch{E}_{ML\to A},\ch{D}_{B\to\g{M}L}\fright),\ch{N}_{A\to B}\fright) \label{pf:quantum-converse-6}\\
	&\leq\frac{\tr\fleft[\Phi_{M'\g{M}}\Gamma_A\fleft\{\ch{E}_{ML\to A}\circ\ch{D}_{B\to\g{M}L}\circ\ch{N}_{A\to B}\fright\}\fleft[\Phi_{M'M}\fright]\fright]}{\tr\fleft[\Gamma_A\fleft\{\ch{E}_{ML\to A}\circ\ch{D}_{B\to\g{M}L}\circ\ch{N}_{A\to B}\fright\}\fleft[\Phi_{M'M}\fright]\fright]} \label{pf:quantum-converse-7}\\
	&\leq\frac{\lambda\tr\fleft[\Phi_{M'\g{M}}\left(\pi_{M'}\otimes\omega_\g{M}\right)\fright]}{d_A^2\tr\fleft[\Gamma_A\fleft\{\ch{E}_{ML\to A}\circ\ch{D}_{B\to\g{M}L}\circ\ch{N}_{A\to B}\fright\}\fleft[\Phi_{M'M}\fright]\fright]} \label{pf:quantum-converse-8}\\
	&=\frac{\lambda}{d_M^2d_A^2\tr\fleft[\Gamma_A\fleft\{\ch{E}_{ML\to A}\circ\ch{D}_{B\to\g{M}L}\circ\ch{N}_{A\to B}\fright\}\fleft[\Phi_{M'M}\fright]\fright]}, \label{pf:quantum-converse-9}
\end{align}
where Eq.~\eqref{pf:quantum-converse-7} follows from Eqs.~\eqref{eq:infidelity} and \eqref{eq:fidelity} and choosing $R=M'$ and $\psi_{RM}=\Phi_{M'M}$; Eq.~\eqref{pf:quantum-converse-8} follows from Eq.~\eqref{pf:quantum-converse-4}.  Similarly,
\begin{align}
	\mu\varepsilon&\geq\mu q_\abb{infid}\fleft(\fleft(\ch{E}_{ML\to A},\ch{D}_{B\to\g{M}L}\fright),\ch{N}_{A\to B}\fright) \label{pf:quantum-converse-10}\\
	&\geq\frac{\mu\tr\fleft[\left(\1_{M'\g{M}}-\Phi_{M'\g{M}}\right)\Gamma_A\fleft\{\ch{E}_{ML\to A}\circ\ch{D}_{B\to\g{M}L}\circ\ch{N}_{A\to B}\fright\}\fleft[\Phi_{M'M}\fright]\fright]}{\tr\fleft[\Gamma_A\fleft\{\ch{E}_{ML\to A}\circ\ch{D}_{B\to\g{M}L}\circ\ch{N}_{A\to B}\fright\}\fleft[\Phi_{M'M}\fright]\fright]} \\
	&\geq\frac{\tr\fleft[\left(\1_{M'\g{M}}-\Phi_{M'\g{M}}\right)\left(\pi_{M'}\otimes\eta_\g{M}\right)\fright]}{d_A^2\tr\fleft[\Gamma_A\fleft\{\ch{E}_{ML\to A}\circ\ch{D}_{B\to\g{M}L}\circ\ch{N}_{A\to B}\fright\}\fleft[\Phi_{M'M}\fright]\fright]} \label{pf:quantum-converse-11}\\
	&=\left(1-\frac{1}{d_M^2}\right)\frac{1}{d_A^2\tr\fleft[\Gamma_A\fleft\{\ch{E}_{ML\to A}\circ\ch{D}_{B\to\g{M}L}\circ\ch{N}_{A\to B}\fright\}\fleft[\Phi_{M'M}\fright]\fright]}, \label{pf:quantum-converse-12}
\end{align}
where Eq.~\eqref{pf:quantum-converse-11} follows from Eq.~\eqref{pf:quantum-converse-5}.  Note that combining Eqs.~\eqref{pf:quantum-converse-9} and \eqref{pf:quantum-converse-12} implies that
\begin{align}
	d_M^2\leq\frac{\varepsilon\lambda\mu}{1-\varepsilon}+1,
\end{align}
which, due to $d_M$ being an integer, further implies that
\begin{align}
	\log_2d_M\leq\log_2\left\lfloor\sqrt{\frac{\varepsilon\lambda\mu}{1-\varepsilon}+1}\right\rfloor.
\end{align}
Supremizing the left-hand side over every $d_M,d_L\in\spa{N}$ and every strategy $(\ch{E}_{ML\to A},\ch{D}_{B\to\g{M}L})$ satisfying Eq.~\eqref{pf:quantum-converse-1}, and infimizing the right-hand side over every $\lambda,\mu\in\spa{R}$, $\sigma_B\in\s{D}_B$, and $\tau_B\in\aff(\s{D}_B)$ satisfying Eqs.~\eqref{pf:quantum-converse-2} and \eqref{pf:quantum-converse-3}, the desired statement thus follows from the definitions of the one-shot retrocausal quantum capacity and the max- and Doeblin informations [Eqs.~\eqref{eq:quantum-oneshot}, \eqref{eq:max}, and \eqref{eq:doeblin}].
\end{proof}

\subsection{Achievability}
\label{sec:quantum-achievability}

We now explicitly construct a strategy that saturates the converse bound derived in Proposition~\ref{prop:quantum-converse}.

\begin{strategy}[Amplified probabilistic quantum teleportation]
\label{str:quantum}
The strategy operates as follows (see Fig.~\ref{fig:quantum-achievability}).
\begin{itemize}
	\item \textbf{Encoding.}  In the future, the father retrieves two quantum memories $L_1$ and $L_2$ (which were left by Bella in the past) with $L_1\cong M$ and $L_2\cong B$.  He performs a measurement on $M$ and $L_1$ according to the POVM $(\Phi_{ML_1},\1_{ML_1}-\Phi_{ML_1})$, obtaining one of the two possible outcomes, $k\in\{0,1\}$.  Conditioned on $k$, he applies a channel $\ch{K}_{L_2\to A}^{(k)}\in\s{C}_{L_2\to A}$ to $L_2$.  He then feeds $A$ into the entrance of the noisy P-CTC.
	\item \textbf{Decoding.}  In the past, the daughter receives $B$ from the exit of the noisy P-CTC.  She prepares a standard maximally entangled state $\Phi_{\g{M}L_1}$ between $\g{M}$ and $L_1$, renames $B$ as $L_2$, and leaves both $L_1$ and $L_2$ for the father to retrieve in the future.  She then declares that $\g{M}$ stores her recovery of the message.
\end{itemize}
\end{strategy}

\begin{figure}[t]
\includegraphics[scale=0.3]{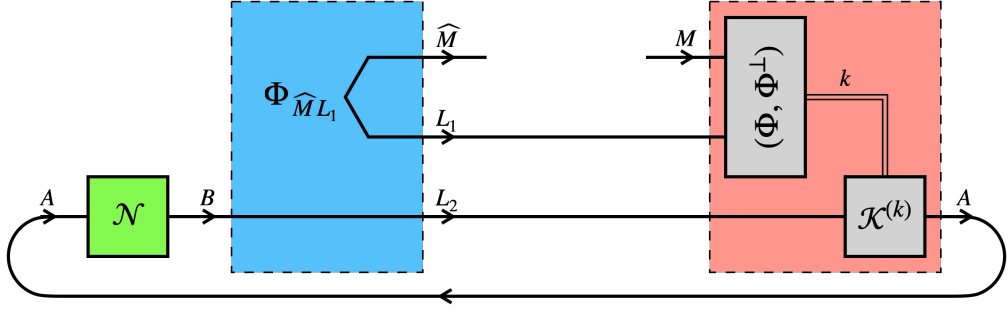}
\caption{The basic idea of amplified probabilistic quantum teleportation is to first perform probabilistic teleportation from the father to the daughter, and then engineer a causal loop involving the noisy P-CTC such that it maximally amplifies the renormalized probability of the teleportation being successful.  The red and blue boxes encapsulate the father's encoding and the daughter's decoding, respectively.}
\label{fig:quantum-achievability}
\end{figure}

The intuition behind Strategy~\ref{str:quantum} is as follows.  Instead of trying to code his message into $A$ which enters the noisy P-CTC, the father intends to ``teleport''~\cite{bennett1993TeleportingUnknownQuantum} the message probabilistically to the daughter via $L_1$.  Indeed, by probababilistic teleportation, whenever the father obtains the outcome $k=0$ after performing the measurement on $ML_1$, what the daughter ends up with in $\g{M}$ is a perfect recovery of the message in $M$.  However, the outcome $k=0$ is not guaranteed to occur; in fact, it occurs only with probability $1/d_M^2$, or at least so without invoking the noisy P-CTC.  Yet, as illustrated in Fig.~\ref{fig:quantum-achievability}, the remaining part of the strategy engineers a causal loop solely consisting of the noisy P-CTC, represented by $\ch{N}_{A\to B}$, and the channel $\ch{K}_{L_2\to A}^{(k)}$, conditioned on $k\in\{0,1\}$.  Due to the nonlinearity of such a loop (specifically, its capability of renormalizing the probabilities of certain events), the father can choose the two channels $\ch{K}_{L_2\to A}^{(0)}$ and $\ch{K}_{L_2\to A}^{(1)}$ in such a way that the renormalized probability of the event $k=0$ is maximally amplified (or equivalently, the renormalized probability of $k=1$ is maximally diminished).  If the renormalized probability of $k=0$ can be amplified to at least $1-\varepsilon$ for a message system $M$ of dimension $d_M$, then $\log_2d_M$ qubits can indeed be ``teleported'' from $M$ to $\g{M}$ via $L_1$ subject to a failure probability at most $\varepsilon$, which upper bounds the overall infidelity of communication.

We now formalize this analysis and show that Strategy~\ref{str:quantum} can indeed achieve the maximum possible efficiency allowed by the converse bound established in Proposition~\ref{prop:quantum-converse}.

\begin{proposition}[Achievability of retrocausal quantum capacity]
\label{prop:quantum-achievability}
For a completely positive map $\ch{N}_{A\to B}\in\s{L}_{A\to B}$ and a real number $\varepsilon\in(0,1)$,
\begin{align}
	Q_\abb{retro}^\varepsilon\fleft(\ch{N}\fright)&\geq\log_2\left\lfloor\sqrt{\frac{\varepsilon}{1-\varepsilon}2^{I_{\max}\fleft(\ch{N}\fright)+I_\abb{doe}\fleft(\ch{N}\fright)}+1}\right\rfloor. \label{eq:quantum-achievability}
\end{align}
Furthermore, the efficiency on the right-hand side of Eq.~\eqref{eq:quantum-achievability} is achievable by Strategy~\ref{str:quantum} for some channels $\ch{K}_{L_2\to A}^{(0)}$ and $\ch{K}_{L_2\to A}^{(1)}$.
\end{proposition}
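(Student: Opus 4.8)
The plan is to analyze the performance of Strategy~\ref{str:quantum} directly: reduce the induced nonlinear map $\ch{M}_{M\to\g{M}}$ to a closed form parametrized by two scalars, optimize those scalars over the choice of $\ch{K}^{(0)}_{L_2\to A}$ and $\ch{K}^{(1)}_{L_2\to A}$, and check that the optimum coincides with the converse bound of Proposition~\ref{prop:quantum-converse}; this simultaneously proves Eq.~\eqref{eq:quantum-achievability} and the claim that it is attained by the stated strategy.

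First I would write the decoding and encoding as explicit channels: $\ch{D}_{B\to\g{M}L_1L_2}$ prepares $\Phi_{\g{M}L_1}$ and relabels $B$ as $L_2$, while $\ch{E}_{ML_1L_2\to A}[\cdot]=\sum_{k\in\{0,1\}}\ch{K}^{(k)}_{L_2\to A}\bigl[\tr_{ML_1}[P^{(k)}_{ML_1}(\cdot)]\bigr]$ with $P^{(0)}_{ML_1}=\Phi_{ML_1}$ and $P^{(1)}_{ML_1}=\1_{ML_1}-\Phi_{ML_1}$; both are readily checked to be quantum channels. The heart of the argument is to evaluate $\Gamma_A\{\ch{E}_{ML_1L_2\to A}\circ\ch{D}_{B\to\g{M}L_1L_2}\circ\ch{N}_{A\to B}\}[\psi_{RM}]$ by propagating $\psi_{RM}\otimes\Phi_{AA'}$ through the composition. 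Two elementary observations do all the work. The teleportation identity (as used in the proof of Fact~\ref{fact:noiselessness}) gives $\tr_{ML_1}[\Phi_{ML_1}(\psi_{RM}\otimes\Phi_{\g{M}L_1})]=d_M^{-2}\psi_{R\g{M}}$, so the $k=0$ branch transfers the message perfectly, whereas the $k=1$ branch contributes $\psi_R\otimes\pi_\g{M}-d_M^{-2}\psi_{R\g{M}}$. And since the daughter's relabeled system $L_2=B$ is fed back into $A$ only through $\ch{K}^{(k)}$, the residual loop factors out as the scalar $f_k\coloneq\tr[\Phi_{AA'}(\ch{K}^{(k)}_{B\to A}\circ\ch{N}_{A\to B})[\Phi_{AA'}]]$, which is precisely the achievable singlet fraction appearing in Lemma~\ref{lem:fraction}. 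Combining these should yield
\begin{align}
	\ch{M}_{M\to\g{M}}\fleft[\psi_{RM}\fright]&=\frac{(f_0-f_1)\psi_{R\g{M}}+d_M^2 f_1\,\psi_R\otimes\pi_\g{M}}{f_0+(d_M^2-1)f_1}.
\end{align}

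Next I would bound the worst-case infidelity. The fidelity $\tr[\psi_{R\g{M}}\ch{M}_{M\to\g{M}}[\psi_{RM}]]$ depends on $\psi_{RM}$ only through $\tr[\psi_R^2]$ and (using $f_0\geq f_1\geq 0$) is minimized by the maximally entangled input, giving $q_\abb{infid}=(d_M^2-1)f_1/(f_0+(d_M^2-1)f_1)$; the requirement $q_\abb{infid}\leq\varepsilon$ is then equivalent to $d_M^2\leq\tfrac{\varepsilon}{1-\varepsilon}(f_0/f_1)+1$. Finally, choosing $\ch{K}^{(0)}$ to maximize $f_0$ and $\ch{K}^{(1)}$ to minimize $f_1$, Lemma~\ref{lem:fraction} gives $f_0/f_1=2^{I_{\max}(\ch{N})+I_\abb{doe}(\ch{N})}$, so every integer $d_M$ up to $\lfloor\sqrt{\tfrac{\varepsilon}{1-\varepsilon}2^{I_{\max}(\ch{N})+I_\abb{doe}(\ch{N})}+1}\rfloor$ is admissible, establishing Eq.~\eqref{eq:quantum-achievability}.

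I expect the main obstacle to be the bookkeeping in evaluating $\Gamma_A\{\ch{E}\circ\ch{D}\circ\ch{N}\}[\psi_{RM}]$: one must keep the loop system $A$, the teleportation pair $\g{M}L_1$, and the recycled output $L_2=B$ cleanly separated so that the loop supermap really does collapse to a scalar multiple of the teleportation output on each measurement branch, and one should confirm that the output is a state (the coefficients of $\psi_{R\g{M}}$ and $\psi_R\otimes\pi_\g{M}$ are nonnegative and sum to one). A secondary issue is the degenerate regime $I_\abb{doe}(\ch{N})=+\infty$ (equivalently $\min_{\ch{K}}f_k=0$), which forces $q_\abb{infid}=0$ for every $d_M$ and hence an infinite capacity consistent with the formula, together with the trivial $\ch{N}=0$ case; one also needs that the relevant optima over $\s{C}_{B\to A}$ are attained, which follows by compactness, so that the extremizing channels $\ch{K}^{(0)},\ch{K}^{(1)}$ in the statement indeed exist.
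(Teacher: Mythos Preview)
Your proposal is correct and follows essentially the same route as the paper: write out $\ch{E}$ and $\ch{D}$ explicitly, evaluate $\Gamma_A\{\ch{E}\circ\ch{D}\circ\ch{N}\}[\psi_{RM}]$ via the teleportation identity to isolate the scalar singlet fractions $f_k=\tr[\Phi_{AA'}(\ch{K}^{(k)}\circ\ch{N})[\Phi_{AA'}]]$, bound the worst-case infidelity by $(d_M^2-1)f_1/(f_0+(d_M^2-1)f_1)$, and then invoke Lemma~\ref{lem:fraction} after optimizing $\ch{K}^{(0)},\ch{K}^{(1)}$. The only cosmetic difference is that the paper obtains the infidelity bound by dropping the nonnegative contribution of $d_M^2\psi_R\otimes\pi_{\g{M}}-\psi_{R\g{M}}$ to the numerator, whereas you compute the fidelity exactly and minimize over $\tr[\psi_R^2]$; both arrive at the same expression, and your observation that only $f_1\ge 0$ (not $f_0\ge f_1$) is needed for that minimization is worth keeping in mind.
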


\begin{proof}
The encoding and decoding of Strategy~\ref{str:quantum} are represented by the following channels, respectively:
\begin{align}
	\ch{E}_{ML_1L_2\to A}\fleft[\cdot\fright]&\coloneq\tr_{ML_1}\fleft[\Phi_{ML_1}\ch{K}_{L_2\to A}^{(0)}\fleft[\cdot\fright]\fright]+\tr_{ML_1}\fleft[\left(\1_{ML_1}-\Phi_{ML_1}\right)\ch{K}_{L_2\to A}^{(1)}\fleft[\cdot\fright]\fright], \label{pf:quantum-achievability-1}\\
	\ch{D}_{B\to\g{M}L_1L_2}\fleft[\cdot\fright]&\coloneq\Phi_{\g{M}L_1}\otimes\id_{B\to L_2}\fleft[\cdot\fright]. \label{pf:quantum-achievability-2}
\end{align}
Let $\psi_{RM}$ be a pure state.  Then it follows from the definition of the loop supermap [Eq.~\eqref{eq:loop}] that
\begin{align}
	&\Gamma_A\fleft\{\ch{E}_{ML_1L_2\to A}\circ\ch{D}_{B\to\g{M}L_1L_2}\circ\ch{N}_{A\to B}\fright\}\fleft[\psi_{RM}\fright] \notag\\
	&=\tr_{AA'}\fleft[\Phi_{AA'}\left(\ch{E}_{ML_1L_2\to A}\circ\ch{D}_{B\to\g{M}L_1L_2}\circ\ch{N}_{A\to B}\right)\fleft[\psi_{RM}\otimes\Phi_{AA'}\fright]\fright] \\
	&=\tr_{ML_1AA'}\fleft[\left(\Phi_{ML_1}\otimes\Phi_{AA'}\right)\left(\ch{K}_{B\to A}^{(0)}\circ\ch{N}_{A\to B}\right)\fleft[\psi_{RM}\otimes\Phi_{\g{M}L_1}\otimes\Phi_{AA'}\fright]\fright] \notag\\
	&\quad+\tr_{ML_1AA'}\fleft[\left(\left(\1_{ML_1}-\Phi_{ML_1}\right)\otimes\Phi_{AA'}\right)\left(\ch{K}_{B\to A}^{(1)}\circ\ch{N}_{A\to B}\right)\fleft[\psi_{RM}\otimes\Phi_{\g{M}L_1}\otimes\Phi_{AA'}\fright]\fright] \\
	&=\frac{1}{d_M^2}\tr\fleft[\Phi_{AA'}\left(\ch{K}_{B\to A}^{(0)}\circ\ch{N}_{A\to B}\right)\fleft[\Phi_{AA'}\fright]\fright]\psi_{R\g{M}} \notag\\
	&\quad+\frac{1}{d_M^2}\tr\fleft[\Phi_{AA'}\left(\ch{K}_{B\to A}^{(1)}\circ\ch{N}_{A\to B}\right)\fleft[\Phi_{AA'}\fright]\fright]\left(d_M^2\tr_M\fleft[\psi_{RM}\fright]\otimes\pi_\g{M}-\psi_{R\g{M}}\right). \label{pf:quantum-achievability-3}
\end{align}
Considering the fact that $d_M^2\tr_M[\psi_{RM}]\otimes\pi_\g{M}-\psi_{R\g{M}}\geq0$, it follows from Eq.~\eqref{eq:fidelity} that
\begin{align}
	&1-F\fleft(\psi_{R\g{M}}\middle|\psi_{RM};\fleft(\ch{E}_{ML_1L_2\to A},\ch{D}_{B\to\g{M}L_1L_2}\fright),\ch{N}_{A\to B}\fright) \notag\\
	&=1-\frac{\tr\fleft[\psi_{R\g{M}}\Gamma_A\fleft\{\ch{E}_{ML_1L_2\to A}\circ\ch{D}_{B\to\g{M}L_1L_2}\circ\ch{N}_{A\to B}\fright\}\fleft[\psi_{RM}\fright]\fright]}{\tr\fleft[\Gamma_A\fleft\{\ch{E}_{ML_1L_2\to A}\circ\ch{D}_{B\to\g{M}L_1L_2}\circ\ch{N}_{A\to B}\fright\}\fleft[\psi_{RM}\fright]\fright]} \\
	&\leq1-\frac{\frac{1}{d_M^2}\tr\fleft[\Phi_{AA'}\left(\ch{K}_{B\to A}^{(0)}\circ\ch{N}_{A\to B}\right)\fleft[\Phi_{AA'}\fright]\fright]}{\frac{1}{d_M^2}\tr\fleft[\Phi_{AA'}\left(\ch{K}_{B\to A}^{(0)}\circ\ch{N}_{A\to B}\right)\fleft[\Phi_{AA'}\fright]\fright]+\frac{d_M^2-1}{d_M^2}\tr\fleft[\Phi_{AA'}\left(\ch{K}_{B\to A}^{(1)}\circ\ch{N}_{A\to B}\right)\fleft[\Phi_{AA'}\fright]\fright]} \label{pf:quantum-achievability-4}\\
	&=\left(\frac{\tr\fleft[\Phi_{AA'}\left(\ch{K}_{B\to A}^{(0)}\circ\ch{N}_{A\to B}\right)\fleft[\Phi_{AA'}\fright]\fright]}{\left(d_M^2-1\right)\tr\fleft[\Phi_{AA'}\left(\ch{K}_{B\to A}^{(1)}\circ\ch{N}_{A\to B}\right)\fleft[\Phi_{AA'}\fright]\fright]}+1\right)^{-1}, \label{pf:quantum-achievability-5}
\end{align}
where Eq.~\eqref{pf:quantum-achievability-4} follows from Eq.~\eqref{pf:quantum-achievability-3}.  Since this holds for every pure state $\psi_{RM}$, it follows from Eq.~\eqref{eq:infidelity} that, as long as
\begin{align}
	\left(\frac{\tr\fleft[\Phi_{AA'}\left(\ch{K}_{B\to A}^{(0)}\circ\ch{N}_{A\to B}\right)\fleft[\Phi_{AA'}\fright]\fright]}{\left(d_M^2-1\right)\tr\fleft[\Phi_{AA'}\left(\ch{K}_{B\to A}^{(1)}\circ\ch{N}_{A\to B}\right)\fleft[\Phi_{AA'}\fright]\fright]}+1\right)^{-1}&\leq\varepsilon, \label{pf:quantum-achievability-6}
\end{align}
we have that
\begin{align}
	q_\abb{infid}\fleft(\fleft(\ch{E}_{ML_1L_2\to A},\ch{D}_{B\to\g{M}L_1L_2}\fright),\ch{N}_{A\to B}\fright)&=1-\inf_{\substack{\psi_{RM}\in\s{D}_{RM}\colon \\ \rk\fleft(\psi_{RM}\fright)=1}}F\fleft(\psi_{R\g{M}}\middle|\psi_{RM};\fleft(\ch{E}_{ML_1L_2\to A},\ch{D}_{B\to\g{M}L_1L_2}\fright),\ch{N}_{A\to B}\fright) \\
	&\leq\varepsilon. \label{pf:quantum-achievability-7}
\end{align}
Since Eq.~\eqref{pf:quantum-achievability-6} is equivalent to
\begin{align}
	d_M^2&\leq\frac{\varepsilon\tr\fleft[\Phi_{AA'}\left(\ch{K}_{B\to A}^{(0)}\circ\ch{N}_{A\to B}\right)\fleft[\Phi_{AA'}\fright]\fright]}{\left(1-\varepsilon\right)\tr\fleft[\Phi_{AA'}\left(\ch{K}_{B\to A}^{(1)}\circ\ch{N}_{A\to B}\right)\fleft[\Phi_{AA'}\fright]\fright]}+1, \label{pf:quantum-achievability-8}
\end{align}
by choosing
\begin{align}
	\ch{K}_{B\to A}^{(0)}&\coloneq\arg\max_{\ch{K}_{B\to A}\in\s{C}_{B\to A}}\tr\fleft[\Phi_{AA'}\left(\ch{K}_{B\to A}\circ\ch{N}_{A\to B}\right)\fleft[\Phi_{AA'}\fright]\fright], \label{pf:quantum-achievability-9}\\
	\ch{K}_{B\to A}^{(1)}&\coloneq\arg\min_{\ch{K}_{B\to A}\in\s{C}_{B\to A}}\tr\fleft[\Phi_{AA'}\left(\ch{K}_{B\to A}\circ\ch{N}_{A\to B}\right)\fleft[\Phi_{AA'}\fright]\fright], \label{pf:quantum-achievability-10}
\end{align}
which maximize the right-hand side of Eq.~\eqref{pf:quantum-achievability-8}, the following efficiency can be achieved within infidelity $\varepsilon$ (i.e., subject to Eq.~\eqref{pf:quantum-achievability-7}) and is thus a lower bound on the one-shot retrocausal quantum capacity [see Eq.~\eqref{eq:quantum-oneshot}]:
\begin{align}
	\log_2d_M&=\log_2\left\lfloor\sqrt{\frac{\varepsilon\tr\fleft[\Phi_{AA'}\left(\ch{K}_{B\to A}^{(0)}\circ\ch{N}_{A\to B}\right)\fleft[\Phi_{AA'}\fright]\fright]}{\left(1-\varepsilon\right)\tr\fleft[\Phi_{AA'}\left(\ch{K}_{B\to A}^{(1)}\circ\ch{N}_{A\to B}\right)\fleft[\Phi_{AA'}\fright]\fright]}+1}\right\rfloor \\
	&=\log_2\left\lfloor\sqrt{\frac{\varepsilon\max_{\ch{K}_{B\to A}\in\s{C}_{B\to A}}\tr\fleft[\Phi_{AA'}\left(\ch{K}_{B\to A}\circ\ch{N}_{A\to B}\right)\fleft[\Phi_{AA'}\fright]\fright]}{\left(1-\varepsilon\right)\min_{\ch{K}_{B\to A}\in\s{C}_{B\to A}}\tr\fleft[\Phi_{AA'}\left(\ch{K}_{B\to A}\circ\ch{N}_{A\to B}\right)\fleft[\Phi_{AA'}\fright]\fright]}+1}\right\rfloor \\
	&=\log_2\left\lfloor\sqrt{\frac{\varepsilon}{1-\varepsilon}2^{I_{\max}\fleft(\ch{N}\fright)+I_\abb{doe}\fleft(\ch{N}\fright)}+1}\right\rfloor, \label{pf:quantum-achievability-11}
\end{align}
where Eq.~\eqref{pf:quantum-achievability-11} follows from the interpretation of the max- and Doeblin informations in terms of the maximum and minimum achievable singlet fractions, respectively (Lemma~\ref{lem:fraction}), implying the desired statement.
\end{proof}

\begin{remark}[On infinite retrocausal quantum capacity]
\label{rem:infinite}
While the max-information of a completely positive map is always finite, the Doeblin information can be infinite.  As can be inferred from the proof of Proposition~\ref{prop:quantum-achievability}, for a completely positive map whose Doeblin information is equal to infinity, its one-shot retrocausal quantum capacity is equal to infinity for all $\varepsilon\in[0,1]$, including the zero-error case.
\end{remark}

Combining Propositions~\ref{prop:quantum-converse} and \ref{prop:quantum-achievability} completes the proof of Theorem~\ref{thm:quantum-oneshot}.

\subsection{Asymptotics}
\label{sec:quantum-asymptotics}

With the one-shot retrocausal quantum capacity determined, the asymptotic retrocausal quantum capacity immediately follows.

\begin{theorem}[Asymptotic retrocausal quantum capacity]
\label{thm:quantum-asymptotic}
The asymptotic retrocausal quantum capacity is equal to one half of the sum of the max- and regularized Doeblin informations: for a completely positive map $\ch{N}_{A\to B}\in\s{L}_{A\to B}$ and a real number $\varepsilon\in(0,1)$,
\begin{align}
	Q_\abb{retro}\fleft(\ch{N}\fright)&=\lim_{n\to\infty}\frac{1}{n}Q_\abb{retro}^\varepsilon\fleft(\ch{N}^{\otimes n}\fright)=\frac{1}{2}\left(I_{\max}\fleft(\ch{N}\fright)+I_\abb{doe}^\infty\fleft(\ch{N}\fright)\right),
\end{align}
where the regularized Doeblin information is defined in Eq.~\eqref{eq:doeblin-regularized}.
\end{theorem}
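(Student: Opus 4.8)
The plan is to read off the asymptotics directly from the one-shot formula of Theorem~\ref{thm:quantum-oneshot} together with the additivity of the max-information. First I would apply Theorem~\ref{thm:quantum-oneshot} to $\ch{N}^{\otimes n}$ and use $I_{\max}(\ch{N}^{\otimes n})=nI_{\max}(\ch{N})$ [Eq.~\eqref{eq:max-regularized}] to write
\begin{align}
	Q_\abb{retro}^\varepsilon\fleft(\ch{N}^{\otimes n}\fright)&=\log_2\left\lfloor\sqrt{a_n+1}\right\rfloor,\qquad a_n\coloneq\frac{\varepsilon}{1-\varepsilon}\,2^{nI_{\max}\fleft(\ch{N}\fright)+I_\abb{doe}\fleft(\ch{N}^{\otimes n}\fright)}.
\end{align}
By Lemma~\ref{lem:fraction}, the quantity $I_{\max}(\ch{M})+I_\abb{doe}(\ch{M})$ equals $\log_2$ of the ratio of the maximum to the minimum achievable singlet fraction of any completely positive map $\ch{M}$ and is thus nonnegative; applied to $\ch{M}=\ch{N}^{\otimes n}$ and combined with additivity of $I_{\max}$, this shows the exponent of $a_n$ is nonnegative, so $a_n\geq\varepsilon/(1-\varepsilon)>0$ and the floor is well defined and at least $1$.

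Next I would strip off the floor and the additive constant using $x-1<\lfloor x\rfloor\leq x$, which yields
\begin{align}
	\frac{1}{n}\log_2\fleft(\sqrt{a_n+1}-1\fright)&<\frac{1}{n}Q_\abb{retro}^\varepsilon\fleft(\ch{N}^{\otimes n}\fright)\leq\frac{1}{2n}\log_2\fleft(a_n+1\fright).
\end{align}
When $r\coloneq I_{\max}(\ch{N})+I_\abb{doe}^\infty(\ch{N})$ is a positive real number, the Fekete limit $I_\abb{doe}^\infty(\ch{N})$ [Eq.~\eqref{eq:doeblin-regularized}] is finite and $a_n\to+\infty$, so the additive $1$ inside both logarithms and the $\varepsilon$-dependent but $n$-independent prefactor $\varepsilon/(1-\varepsilon)$ only contribute corrections that are bounded and therefore vanish after division by $n$; hence both sides converge to
\begin{align}
	\lim_{n\to+\infty}\frac{1}{2n}\fleft(nI_{\max}\fleft(\ch{N}\fright)+I_\abb{doe}\fleft(\ch{N}^{\otimes n}\fright)\fright)&=\frac{1}{2}\fleft(I_{\max}\fleft(\ch{N}\fright)+I_\abb{doe}^\infty\fleft(\ch{N}\fright)\fright)=\frac{r}{2}.
\end{align}
Since this value does not depend on $\varepsilon$, the limit $\lim_{n\to+\infty}\frac{1}{n}Q_\abb{retro}^\varepsilon(\ch{N}^{\otimes n})$ exists for every $\varepsilon\in(0,1)$, so the infimum over $\varepsilon$ in the definition of $Q_\abb{retro}(\ch{N})$ [Eq.~\eqref{eq:quantum-asymptotic}] is redundant and $Q_\abb{retro}(\ch{N})=r/2$.

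Finally I would dispose of the two degenerate regimes. If $r=0$, then by Lemma~\ref{lem:fraction}, superadditivity of $I_\abb{doe}$, and additivity of $I_{\max}$, the exponent of $a_n$ is squeezed between $0$ and $nr=0$, so $a_n=\varepsilon/(1-\varepsilon)$ is constant in $n$ and hence $\frac{1}{n}Q_\abb{retro}^\varepsilon(\ch{N}^{\otimes n})\to 0=r/2$. If $r=+\infty$, then $I_\abb{doe}(\ch{N}^{\otimes n_0})=+\infty$ for some $n_0$, and applying Remark~\ref{rem:infinite} to $\ch{N}^{\otimes n_0}$ gives $Q_\abb{retro}^\varepsilon(\ch{N}^{\otimes kn_0})=+\infty$ for all $k\geq1$ and all $\varepsilon\in[0,1]$, so the rate is $+\infty=r/2$.

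I do not anticipate a substantive obstacle here: the only delicate points are the bookkeeping around the floor function and the additive $1$ inside the square root and the isolation of the two degenerate cases, and no new technical ingredient enters beyond Theorem~\ref{thm:quantum-oneshot}, the additivity of $I_{\max}$, and the existence of $I_\abb{doe}^\infty$; the argument is at heart a routine limit evaluation.
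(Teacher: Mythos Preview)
Your proposal is correct and takes essentially the same approach as the paper: both sandwich $Q_\abb{retro}^\varepsilon(\ch{N}^{\otimes n})$ between $\frac{1}{2}\bigl(I_{\max}(\ch{N}^{\otimes n})+I_\abb{doe}(\ch{N}^{\otimes n})\bigr)$ plus $n$-independent additive constants, then divide by $n$ and invoke Eqs.~\eqref{eq:max-regularized} and \eqref{eq:doeblin-regularized}. The only cosmetic difference is that the paper strips the floor via the inequality $\lfloor\sqrt{x+1}\rfloor\geq\frac{1}{2}\sqrt{x}$ whereas you use $x-1<\lfloor x\rfloor\leq x$, and your explicit case analysis at $r=0$ and $r=+\infty$ is somewhat more careful than what the paper writes out.
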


\begin{proof}
Since the max- and Doeblin informations are nonnegative, it follows from Theorem~\ref{thm:quantum-oneshot} that
\begin{align}
	Q_\abb{retro}^\varepsilon\fleft(\ch{N}\fright)&=\log_2\left\lfloor\sqrt{\frac{\varepsilon}{1-\varepsilon}2^{I_{\max}\fleft(\ch{N}\fright)+I_\abb{doe}\fleft(\ch{N}\fright)}+1}\right\rfloor \\
	&\leq\frac{1}{2}\log_2\left(\frac{\varepsilon}{1-\varepsilon}2^{I_{\max}\fleft(\ch{N}\fright)+I_\abb{doe}\fleft(\ch{N}\fright)}+1\right) \\
	&\leq\frac{1}{2}\log_2\left(\left(\frac{\varepsilon}{1-\varepsilon}+1\right)2^{I_{\max}\fleft(\ch{N}\fright)+I_\abb{doe}\fleft(\ch{N}\fright)}\right) \\
	&=\frac{1}{2}\left(I_{\max}\fleft(\ch{N}\fright)+I_\abb{doe}\fleft(\ch{N}\fright)+\log_2\frac{1}{1-\varepsilon}\right). \label{pf:quantum-asymptotic-1}
\end{align}
Due to the fact that $\lfloor\sqrt{x+1}\rfloor\geq\frac{1}{2}\sqrt{x}$ for every nonnegative real number $x\in[0,\infty)$, we also have that
\begin{align}
	Q_\abb{retro}^\varepsilon\fleft(\ch{N}\fright)&=\log_2\left\lfloor\sqrt{\frac{\varepsilon}{1-\varepsilon}2^{I_{\max}\fleft(\ch{N}\fright)+I_\abb{doe}\fleft(\ch{N}\fright)}+1}\right\rfloor \\
	&\geq\log_2\left(\frac{1}{2}\sqrt{\frac{\varepsilon}{1-\varepsilon}2^{I_{\max}\fleft(\ch{N}\fright)+I_\abb{doe}\fleft(\ch{N}\fright)}}\right) \\
	&=\frac{1}{2}\left(I_{\max}\fleft(\ch{N}\fright)+I_\abb{doe}\fleft(\ch{N}\fright)+\log_2\frac{\varepsilon}{4\left(1-\varepsilon\right)}\right). \label{pf:quantum-asymptotic-2}
\end{align}
Combining Eqs.~\eqref{pf:quantum-asymptotic-1} and \eqref{pf:quantum-asymptotic-2}, we have that, for every $n\in\spa{N}$,
\begin{align}
	\frac{1}{2n}\left(I_{\max}\fleft(\ch{N}^{\otimes n}\fright)+I_\abb{doe}\fleft(\ch{N}^{\otimes n}\fright)+\log_2\frac{1}{1-\varepsilon}\right)&\leq\frac{1}{n}Q_\abb{retro}^\varepsilon\fleft(\ch{N}^{\otimes n}\fright) \notag\\
	&\leq\frac{1}{2n}\left(I_{\max}\fleft(\ch{N}^{\otimes n}\fright)+I_\abb{doe}\fleft(\ch{N}^{\otimes n}\fright)+\log_2\frac{\varepsilon}{4\left(1-\varepsilon\right)}\right). \label{pf:quantum-asymptotic-3}
\end{align}
Taking the limit $n\to\infty$, it follows from the additivity of the max-information [Eq.~\eqref{eq:max-regularized}] and the definition of the regularized Doeblin information [Eq.~\eqref{eq:doeblin-regularized}] that
\begin{align}
	\lim_{n\to\infty}\frac{1}{n}Q_\abb{retro}^\varepsilon\fleft(\ch{N}^{\otimes n}\fright)&=\frac{1}{2}\left(I_{\max}\fleft(\ch{N}\fright)+I_\abb{doe}^\infty\fleft(\ch{N}\fright)\right),
\end{align}
from which and the definition of the asymptotic retrocausal quantum capacity [Eq.~\eqref{eq:quantum-asymptotic}] the desired statement follows.
\end{proof}

\begin{remark}[Efficiently computable bounds]
\label{rem:bounds}
For quantum-to-classical completely positive maps (which include measurement channels) and mictodiactic covariant completely positive maps (which include depolarizing channels), the Doeblin information is invariant under regularization~\cite[Theorem~3]{george2025QuantumDoeblinCoefficients}, and thus the asymptotic retrocausal quantum capacity has a single-letter expression and is efficiently computable via SDPs for such maps.  While it is unknown whether the asymptotic retrocausal quantum capacity has a single-letter expression for general completely positive maps, single-letter and SDP-computable bounds on it from both sides can be established.  Specifically, it follows from Theorem~\ref{thm:quantum-asymptotic} and Eq.~\eqref{eq:bounds} that
\begin{align}
	\frac{1}{2}\left(I_{\max}\fleft(\ch{N}\fright)+I_\abb{doe}\fleft(\ch{N}\fright)\right)&\leq Q_\abb{retro}\fleft(\ch{N}\fright)\leq\frac{1}{2}\left(I_{\max}\fleft(\ch{N}\fright)+I_\wang\fleft(\ch{N}\fright)\right),
\end{align}
where $I_\wang(\ch{N})$ is defined in Eq.~\eqref{eq:wang}.
\end{remark}

\begin{remark}[Second-order asymptotics]
\label{rem:second}
Equation~\eqref{pf:quantum-asymptotic-3} implies that, for a completely positive map whose regularized Doeblin information is equal to its Doeblin information (e.g., quantum-to-classical completely positive maps and mictodiactic covariant completely positive maps), the second-order asymptotics of retrocausal quantum communication becomes trivial.  However, it is unclear to us whether nontrivial second-order asymptotics exists for general completely positive maps.
\end{remark}

\subsection{Error exponent and strong-converse exponent}
\label{sec:quantum-exponent}

We now show that the error exponent and the strong-converse exponent of retrocausal quantum communication can also be determined based on the one-shot retrocausal quantum capacity.

\begin{theorem}[Error exponent and strong-converse exponent of retrocausal quantum communication]
\label{thm:quantum-exponent}
For a completely positive map $\ch{N}_{A\to B}\in\s{L}_{A\to B}$ and a real number $r\in[0,\infty)$,
\begin{align}
	E_\abb{retro,Q}^r\fleft(\ch{N}\fright)&=\max\left\{0,I_{\max}\fleft(\ch{N}\fright)+I_\abb{doe}^\infty\fleft(\ch{N}\fright)-2r\right\}, \\
	S_\abb{retro,Q}^r\fleft(\ch{N}\fright)&=\max\left\{0,2r-I_{\max}\fleft(\ch{N}\fright)-I_\abb{doe}^\infty\fleft(\ch{N}\fright)\right\}.
\end{align}
\end{theorem}

\begin{proof}
Let $(\ch{E}_{ML\to A},\ch{D}_{B\to\g{M}L})$ be a strategy.  Let $\lambda,\mu\in\spa{R}$ be two real numbers, $\sigma_B\in\s{D}_B$ a state, and $\tau_B\in\aff(\s{D}_B)$ a unit-trace Hermitian operator satisfying Eqs.~\eqref{pf:quantum-converse-2} and \eqref{pf:quantum-converse-3}.  Comparing Eqs.~\eqref{pf:quantum-converse-6} with \eqref{pf:quantum-converse-9} and Eqs.~\eqref{pf:quantum-converse-10} with \eqref{pf:quantum-converse-12}, we have that
\begin{align}
	1-q_\abb{infid}\fleft(\fleft(\ch{E}_{ML\to A},\ch{D}_{B\to\g{M}L}\fright),\ch{N}_{A\to B}\fright)&\leq\frac{\lambda}{d_M^2d_A^2\tr\fleft[\Gamma_A\fleft\{\ch{E}_{ML\to A}\circ\ch{D}_{B\to\g{M}L}\circ\ch{N}_{A\to B}\fright\}\fleft[\Phi_{M'M}\fright]\fright]}, \label{pf:quantum-exponent-1}\\
	\mu q_\abb{infid}\fleft(\fleft(\ch{E}_{ML\to A},\ch{D}_{B\to\g{M}L}\fright),\ch{N}_{A\to B}\fright)&\geq\left(1-\frac{1}{d_M^2}\right)\frac{1}{d_A^2\tr\fleft[\Gamma_A\fleft\{\ch{E}_{ML\to A}\circ\ch{D}_{B\to\g{M}L}\circ\ch{N}_{A\to B}\fright\}\fleft[\Phi_{M'M}\fright]\fright]}. \label{pf:quantum-exponent-2}
\end{align}
Note that combining Eqs.~\eqref{pf:quantum-exponent-1} and \eqref{pf:quantum-exponent-2} implies that
\begin{align}
	q_\abb{infid}\fleft(\fleft(\ch{E}_{ML\to A},\ch{D}_{B\to\g{M}L}\fright),\ch{N}_{A\to B}\fright)&\geq\frac{d_M^2-1}{\lambda\mu+d_M^2-1}.
\end{align}
For a fixed $d_M$, infimizing the left-hand side over every $d_L\in\spa{N}$ and every strategy $(\ch{E}_{ML\to A},\ch{D}_{B\to\g{M}L})$, and supremizing the right-hand side over every $\lambda,\mu\in\spa{R}$, $\sigma_B\in\s{D}_B$, and $\tau_B\in\aff(\s{D}_B)$ satisfying Eqs.~\eqref{pf:quantum-converse-2} and \eqref{pf:quantum-converse-3}, we have that
\begin{align}
	\inf_{\substack{d_L\in\spa{N}, \\
	\ch{E}_{ML\to A}\in\s{C}_{ML\to A}, \\ \ch{D}_{B\to\g{M}L}\in\s{C}_{B\to\g{M}L}}}q_\abb{infid}\fleft(\fleft(\ch{E}_{ML\to A},\ch{D}_{B\to\g{M}L}\fright),\ch{N}_{A\to B}\fright)&\geq\frac{d_M^2-1}{2^{I_{\max}\fleft(\ch{N}\fright)+I_\abb{doe}\fleft(\ch{N}\fright)}+d_M^2-1}. \label{pf:quantum-exponent-3}
\end{align}
Now consider the strategy $(\ch{E}_{ML_1L_2\to A},\ch{D}_{B\to\g{M}L_1L_2})$ defined in Eqs.~\eqref{pf:quantum-achievability-1} and \eqref{pf:quantum-achievability-2}.  It follows from Eq.~\eqref{pf:quantum-achievability-5} that, for every pure state $\psi_{RM}$,
\begin{align}
	&1-F\fleft(\psi_{R\g{M}}\middle|\psi_{RM};\fleft(\ch{E}_{ML_1L_2\to A},\ch{D}_{B\to\g{M}L_1L_2}\fright),\ch{N}_{A\to B}\fright) \notag\\
	&\leq\left(\frac{\tr\fleft[\Phi_{AA'}\left(\ch{K}_{B\to A}^{(0)}\circ\ch{N}_{A\to B}\right)\fleft[\Phi_{AA'}\fright]\fright]}{\left(d_M^2-1\right)\tr\fleft[\Phi_{AA'}\left(\ch{K}_{B\to A}^{(1)}\circ\ch{N}_{A\to B}\right)\fleft[\Phi_{AA'}\fright]\fright]}+1\right)^{-1}.
\end{align}
Supremizing the left-hand side over every $d_R\in\spa{N}$ and every pure state $\psi_{RM}$, and choosing the channels $\ch{K}_{B\to A}^{(0)}$ and $\ch{K}_{B\to A}^{(1)}$ according to Eqs.~\eqref{pf:quantum-achievability-9} and \eqref{pf:quantum-achievability-10}, it follows from Eq.~\eqref{eq:infidelity} that
\begin{align}
	&q_\abb{infid}\fleft(\fleft(\ch{E}_{ML_1L_2\to A},\ch{D}_{B\to\g{M}L_1L_2}\fright),\ch{N}_{A\to B}\fright) \notag\\
	&\leq\left(\frac{\tr\fleft[\Phi_{AA'}\left(\ch{K}_{B\to A}^{(0)}\circ\ch{N}_{A\to B}\right)\fleft[\Phi_{AA'}\fright]\fright]}{\left(d_M^2-1\right)\tr\fleft[\Phi_{AA'}\left(\ch{K}_{B\to A}^{(1)}\circ\ch{N}_{A\to B}\right)\fleft[\Phi_{AA'}\fright]\fright]}+1\right)^{-1} \\
	&=\left(\frac{\max_{\ch{K}_{B\to A}\in\s{C}_{B\to A}}\tr\fleft[\Phi_{AA'}\left(\ch{K}_{B\to A}\circ\ch{N}_{A\to B}\right)\fleft[\Phi_{AA'}\fright]\fright]}{\left(d_M^2-1\right)\min_{\ch{K}_{B\to A}\in\s{C}_{B\to A}}\tr\fleft[\Phi_{AA'}\left(\ch{K}_{B\to A}\circ\ch{N}_{A\to B}\right)\fleft[\Phi_{AA'}\fright]\fright]}+1\right)^{-1} \\
	&=\left(\frac{1}{d_M^2-1}2^{I_{\max}\fleft(\ch{N}\fright)+I_\abb{doe}\fleft(\ch{N}\fright)}+1\right)^{-1} \label{pf:quantum-exponent-4}\\
	&=\frac{d_M^2-1}{2^{I_{\max}\fleft(\ch{N}\fright)+I_\abb{doe}\fleft(\ch{N}\fright)}+d_M^2-1}, \label{pf:quantum-exponent-5}
\end{align}
where Eq.~\eqref{pf:quantum-exponent-4} follows from the interpretation of the max- and Doeblin informations in terms of the maximum and minimum achievable singlet fractions, respectively (Lemma~\ref{lem:fraction}).  Combining Eqs.~\eqref{pf:quantum-exponent-3} and \eqref{pf:quantum-exponent-5}, we have that
\begin{align}
	\inf_{\substack{d_L\in\spa{N}, \\
	\ch{E}_{ML\to A}\in\s{C}_{ML\to A}, \\ \ch{D}_{B\to\g{M}L}\in\s{C}_{B\to\g{M}L}}}q_\abb{infid}\fleft(\fleft(\ch{E}_{ML\to A},\ch{D}_{B\to\g{M}L}\fright),\ch{N}_{A\to B}\fright)&=\frac{d_M^2-1}{2^{I_{\max}\fleft(\ch{N}\fright)+I_\abb{doe}\fleft(\ch{N}\fright)}+d_M^2-1}.
\end{align}
Then it follows from the definition of the error exponent of retrocausal quantum communication [Eq.~\eqref{eq:quantum-error}] that
\begin{align}
	E_\abb{retro,Q}^r\fleft(\ch{N}\fright)&=\lim_{n\to\infty}\sup_{d_M\in\spa{N}}\left\{-\frac{1}{n}\log_2\frac{d_M^2-1}{2^{I_{\max}\fleft(\ch{N}^{\otimes n}\fright)+I_\abb{doe}\fleft(\ch{N}^{\otimes n}\fright)}+d_M^2-1}\colon d_M\geq2^{nr}\right\} \\
	&=\lim_{n\to\infty}-\frac{1}{n}\log_2\frac{2^{2nr}-1}{2^{I_{\max}\fleft(\ch{N}^{\otimes n}\fright)+I_\abb{doe}\fleft(\ch{N}^{\otimes n}\fright)}+2^{2nr}-1} \\
	&=\max\left\{0,I_{\max}\fleft(\ch{N}\fright)+I_\abb{doe}^\infty\fleft(\ch{N}\fright)-2r\right\}. \label{pf:quantum-exponent-6}
\end{align}
where Eq.~\eqref{pf:quantum-exponent-6} follows from the additivity of the max-information [Eq.~\eqref{eq:max-regularized}] and the definition of the regularized Doeblin information [Eq.~\eqref{eq:doeblin-regularized}].  Likewise, it follows from the definition of the strong-converse exponent of retrocausal quantum communication [Eq.~\eqref{eq:quantum-strong}] that
\begin{align}
	S_\abb{retro,Q}^r\fleft(\ch{N}\fright)&=\lim_{n\to\infty}\inf_{d_M\in\spa{N}}\left\{-\frac{1}{n}\log_2\left(1-\frac{d_M^2-1}{2^{I_{\max}\fleft(\ch{N}^{\otimes n}\fright)+I_\abb{doe}\fleft(\ch{N}^{\otimes n}\fright)}+d_M^2-1}\right)\colon d_M\geq2^{nr}\right\} \\
	&=\lim_{n\to\infty}-\frac{1}{n}\log_2\frac{2^{I_{\max}\fleft(\ch{N}^{\otimes n}\fright)+I_\abb{doe}\fleft(\ch{N}^{\otimes n}\fright)}}{2^{I_{\max}\fleft(\ch{N}^{\otimes n}\fright)+I_\abb{doe}\fleft(\ch{N}^{\otimes n}\fright)}+2^{2nr}-1} \\
	&=\max\left\{0,2r-I_{\max}\fleft(\ch{N}\fright)-I_\abb{doe}^\infty\fleft(\ch{N}\fright)\right\},
\end{align}
which completes the proof of the desired statement.
\end{proof}

\section{Retrocausal classical capacity}
\label{sec:classical}

In this section, we precisely determine the one-shot retrocausal classical capacity.  All proofs follow a similar idea to their quantum counterparts.

\begin{theorem}[One-shot retrocausal classical capacity]
\label{thm:classical-oneshot}
The one-shot retrocausal classical capacity has a closed-form expression in terms of the max- and Doeblin informations: for a completely positive map $\ch{N}_{A\to B}\in\s{L}_{A\to B}$ and a real number $\varepsilon\in(0,1)$,
\begin{align}
	C_\abb{retro}^\varepsilon\fleft(\ch{N}\fright)&=\log_2\left\lfloor\frac{\varepsilon}{1-\varepsilon}2^{I_{\max}\fleft(\ch{N}\fright)+I_\abb{doe}\fleft(\ch{N}\fright)}+1\right\rfloor,
\end{align}
where the max- and Doeblin informations are defined in Eqs.~\eqref{eq:max} and \eqref{eq:doeblin}, respectively.
\end{theorem}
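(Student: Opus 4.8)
The plan is to prove Theorem~\ref{thm:classical-oneshot} by the same two-step route as Theorem~\ref{thm:quantum-oneshot} — a converse bound followed by a matching achievability bound — but with the maximally entangled state $\Phi_{M'M}$ and the rank-one projector $\Phi_{M'\g{M}}$ everywhere replaced by the classically maximally correlated state $\Upsilon_{M'M}$ and the comparator $\Omega_{M'\g{M}}$; this replacement is exactly what turns the factor $d_M^2$ into $d_M$ and removes the square root. For the converse, I would fix a strategy $(\ch{E}_{ML\to A},\ch{D}_{B\to\g{M}L})$ with $p_\abb{error}\le\varepsilon$, abbreviate $X\coloneq\Gamma_A\fleft\{\ch{E}_{ML\to A}\circ\ch{D}_{B\to\g{M}L}\circ\ch{N}_{A\to B}\fright\}$ (a completely positive map from $\spa{B}_M$ to $\spa{B}_\g{M}$, by the result cited after Eq.~\eqref{eq:noiseless}), and pick $\lambda,\mu\in\spa{R}$, $\sigma_B\in\s{D}_B$, and $\tau_B\in\aff(\s{D}_B)$ satisfying Eqs.~\eqref{pf:quantum-converse-2} and \eqref{pf:quantum-converse-3}. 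Then Lemma~\ref{lem:useless}, applied just as in the proof of Proposition~\ref{prop:quantum-converse}, gives $X[\Upsilon_{M'M}]\le(\lambda/d_A^2)\,\pi_{M'}\otimes\omega_\g{M}$ for some state $\omega_\g{M}$ and $\mu\,X[\Upsilon_{M'M}]\ge(1/d_A^2)\,\pi_{M'}\otimes\eta_\g{M}$ for some unit-trace Hermitian $\eta_\g{M}$.

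The one genuinely new observation is that, writing $T_m\coloneq\tr[X[\op{m}{m}_M]]\ge0$, the classically correlated input and the comparator bookkeep the symbol-averaged statistics: $\sum_m T_m=d_M\tr[X[\Upsilon_{M'M}]]$ and $\sum_m T_m\Pr\{m|m\}=d_M\tr[\Omega_{M'\g{M}}X[\Upsilon_{M'M}]]$, with $\Pr\{m|m\}$ as in Eq.~\eqref{eq:probability}. Hence $p_\abb{error}\le\varepsilon$, i.e.\ $\Pr\{m|m\}\ge1-\varepsilon$ for every $m$, gives (using that the minimum is at most any $T_m$-weighted average) $\tr[\Omega_{M'\g{M}}X[\Upsilon_{M'M}]]\ge(1-\varepsilon)\tr[X[\Upsilon_{M'M}]]$, equivalently $\tr[(\1_{M'\g{M}}-\Omega_{M'\g{M}})X[\Upsilon_{M'M}]]\le\varepsilon\tr[X[\Upsilon_{M'M}]]$. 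Feeding the two operator inequalities into these, together with $\tr[\Omega_{M'\g{M}}(\pi_{M'}\otimes\omega_\g{M})]=1/d_M$ and $\tr[(\1_{M'\g{M}}-\Omega_{M'\g{M}})(\pi_{M'}\otimes\eta_\g{M})]=1-1/d_M$, sandwiches $\tr[X[\Upsilon_{M'M}]]$ between $(1-1/d_M)/(\mu\varepsilon d_A^2)$ and $\lambda/((1-\varepsilon)d_A^2 d_M)$, which rearranges to $d_M-1\le\varepsilon\lambda\mu/(1-\varepsilon)$. Since $d_M$ is an integer, $\log_2 d_M\le\log_2\lfloor\varepsilon\lambda\mu/(1-\varepsilon)+1\rfloor$; supremizing over strategies and infimizing over $\lambda,\mu,\sigma_B,\tau_B$ then yields the claimed upper bound via the definitions \eqref{eq:classical-oneshot}, \eqref{eq:max}, and \eqref{eq:doeblin}.

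For achievability I would run the classical counterpart of Strategy~\ref{str:quantum}: the daughter prepares $\Upsilon_{\g{M}L_1}$ instead of $\Phi_{\g{M}L_1}$, renames $B$ as $L_2$, and leaves $L_1,L_2$ for the father, who measures $ML_1$ with the coarse POVM $(\Omega_{ML_1},\1_{ML_1}-\Omega_{ML_1})$, applies $\ch{K}^{(0)}_{L_2\to A}$ or $\ch{K}^{(1)}_{L_2\to A}$ according to the outcome $k\in\{0,1\}$, and feeds $A$ into the noisy P-CTC. Evaluating the loop supermap exactly as in Eq.~\eqref{pf:quantum-achievability-3} gives, for every symbol $m$,
\begin{align}
	\Gamma_A\left\{\ch{E}\circ\ch{D}\circ\ch{N}\right\}\left[\op{m}{m}_M\right]&=\frac{c_0}{d_M}\op{m}{m}_\g{M}+\frac{c_1}{d_M}\left(\1_\g{M}-\op{m}{m}_\g{M}\right),
\end{align}
where $c_k\coloneq\tr[\Phi_{AA'}(\ch{K}^{(k)}_{B\to A}\circ\ch{N}_{A\to B})[\Phi_{AA'}]]$, so $\Pr\{m|m\}=c_0/(c_0+(d_M-1)c_1)$ is independent of $m$ and $p_\abb{error}\le\varepsilon$ holds precisely when $d_M\le\varepsilon c_0/((1-\varepsilon)c_1)+1$. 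Choosing $\ch{K}^{(0)}_{B\to A}$ and $\ch{K}^{(1)}_{B\to A}$ to respectively maximize and minimize $\tr[\Phi_{AA'}(\ch{K}_{B\to A}\circ\ch{N}_{A\to B})[\Phi_{AA'}]]$ and invoking Lemma~\ref{lem:fraction} gives $c_0/c_1=2^{I_{\max}(\ch{N})+I_\abb{doe}(\ch{N})}$, so $d_M=\lfloor\varepsilon\,2^{I_{\max}(\ch{N})+I_\abb{doe}(\ch{N})}/(1-\varepsilon)+1\rfloor$ is admissible, matching the converse and completing the proof.

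The only real difficulty, relative to the quantum case, is the converse step: the classical figure of merit is a worst case over the finite message set rather than over a continuum of quantum inputs, so one cannot simply substitute a single distinguished input into a fidelity estimate. The remedy — passing to the $T_m$-weighted average over symbols and recognizing it as $\tr[\Omega_{M'\g{M}}X[\Upsilon_{M'M}]]/\tr[X[\Upsilon_{M'M}]]$ — is precisely what makes $\Upsilon_{M'M}$ and $\Omega_{M'\g{M}}$ the correct objects to pair with Lemma~\ref{lem:useless}; once that bookkeeping is in place, everything else is a routine transcription of the quantum argument with $d_M^2\mapsto d_M$.
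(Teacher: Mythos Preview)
Your proposal is correct and follows essentially the same approach as the paper: the converse is the paper's Proposition~\ref{prop:classical-converse} (your $T_m$-weighted averaging is exactly the step the paper takes in passing to $\Upsilon_{M'M}$ and $\Omega_{M'\g{M}}$ at Eq.~\eqref{pf:classical-converse-2}), and your achievability strategy is precisely the paper's Strategy~\ref{str:classical} analyzed in Proposition~\ref{prop:classical-achievability}. The only difference is expository---you isolate and name the averaging observation, whereas the paper absorbs it into the inequality chain---but the mathematics is identical.
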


As in the quantum case, we prove Theorem~\ref{thm:classical-oneshot} by first establishing the desired converse bound on the one-shot retrocausal classical capacity and then showing that it is achievable by construction.

\subsection{Converse bound}
\label{sec:classical-converse}

We now prove the desired converse bound on the one-shot retrocausal classical capacity.

\begin{proposition}[Converse bound on retrocausal classical capacity]
\label{prop:classical-converse}
For a completely positive map $\ch{N}_{A\to B}\in\s{L}_{A\to B}$ and a real number $\varepsilon\in(0,1)$,
\begin{align}
	C_\abb{retro}^\varepsilon\fleft(\ch{N}\fright)&\leq\log_2\left\lfloor\frac{\varepsilon}{1-\varepsilon}2^{I_{\max}\fleft(\ch{N}\fright)+I_\abb{doe}\fleft(\ch{N}\fright)}+1\right\rfloor.
\end{align}
\end{proposition}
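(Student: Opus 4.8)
The plan is to transcribe the proof of Proposition~\ref{prop:quantum-converse} essentially line for line, preceded by one new reduction that turns the worst-case classical error probability into a single ratio built from the classically maximally correlated state $\Upsilon_{M'M}$ and the comparator $\Omega_{M'\g{M}}$. First I would fix a strategy $(\ch{E}_{ML\to A},\ch{D}_{B\to\g{M}L})$ with $p_\abb{error}((\ch{E}_{ML\to A},\ch{D}_{B\to\g{M}L}),\ch{N}_{A\to B})\le\varepsilon$ and write $\ch{M}_{M\to\g{M}}^{(0)}\coloneq\Gamma_A\{\ch{E}_{ML\to A}\circ\ch{D}_{B\to\g{M}L}\circ\ch{N}_{A\to B}\}$ for the associated unnormalized completely positive linear map. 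Denoting by $a_m$ and $b_m$ the numerator and denominator of $\Pr\{m\mid m;\dots\}$ in Eq.~\eqref{eq:probability}, the worst-case success probability $1-p_\abb{error}=\min_m a_m/b_m$ is at most the $b$-weighted average $\sum_m a_m/\sum_m b_m$ because a convex combination of numbers dominates their minimum; by the definitions of $\Upsilon_{M'M}$ and $\Omega_{M'\g{M}}$ this average is exactly $\tr[\Omega_{M'\g{M}}\,\ch{M}_{M\to\g{M}}^{(0)}[\Upsilon_{M'M}]]/\tr[\ch{M}_{M\to\g{M}}^{(0)}[\Upsilon_{M'M}]]$. This step replaces the choice $R=M'$, $\psi_{RM}=\Phi_{M'M}$ made in the quantum converse.

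From there the argument is the classical mirror image of Proposition~\ref{prop:quantum-converse}. I would introduce $\lambda,\mu\in\spa{R}$, $\sigma_B\in\s{D}_B$, and $\tau_B\in\aff(\s{D}_B)$ feasible for the max- and Doeblin-information semidefinite programs, i.e.\ satisfying $\Phi_{A'B}^\ch{N}\le\lambda\,\Phi_{A'B}^{\ch{R}^\sigma}$ and $\Phi_{A'B}^{\ch{R}^\tau}\le\mu\,\Phi_{A'B}^\ch{N}$, push these operator inequalities through the loop supermap using complete positivity exactly as around Eq.~\eqref{pf:quantum-converse-4}, and invoke Lemma~\ref{lem:useless} to get $\Gamma_A\{\ch{E}_{ML\to A}\circ\ch{D}_{B\to\g{M}L}\circ\ch{R}_{A\to B}^\sigma\}[\Upsilon_{M'M}]=(1/d_A^2)\,\pi_{M'}\otimes\omega_\g{M}$ for a state $\omega_\g{M}$, and likewise with $\ch{R}^\tau$ in place of $\ch{R}^\sigma$ and a unit-trace Hermitian $\eta_\g{M}$ in place of $\omega_\g{M}$. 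The one computation that differs from the quantum case is the overlap with the comparator, namely $\tr[\Omega_{M'\g{M}}(\pi_{M'}\otimes\omega_\g{M})]=1/d_M$ and $\tr[(\1_{M'\g{M}}-\Omega_{M'\g{M}})(\pi_{M'}\otimes\eta_\g{M})]=1-1/d_M$, whereas the corresponding overlaps with $\Phi_{M'\g{M}}$ were $1/d_M^2$ and $1-1/d_M^2$. Combining these with the reduction of the first paragraph and setting $Z\coloneq\tr[\ch{M}_{M\to\g{M}}^{(0)}[\Upsilon_{M'M}]]>0$, I obtain $1-p_\abb{error}\le\lambda/(d_A^2 d_M Z)$ and $\mu\,p_\abb{error}\ge(d_M-1)/(d_A^2 d_M Z)$; eliminating $Z$ gives $(d_M-1)(1-p_\abb{error})\le\lambda\mu\,p_\abb{error}$, hence $d_M\le\varepsilon\lambda\mu/(1-\varepsilon)+1$ since $p_\abb{error}\le\varepsilon$ and $t\mapsto t/(1-t)$ is increasing on $[0,1)$. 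This is the quantum bound with the square root removed, i.e.\ with $d_M^2$ replaced by $d_M$, and that single change of power is the whole distinction between the two theorems.

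To finish I would use $d_M\in\spa{N}$ to upgrade this to $\log_2 d_M\le\log_2\lfloor\varepsilon\lambda\mu/(1-\varepsilon)+1\rfloor$, then take the supremum over all $d_M,d_L\in\spa{N}$ and all strategies meeting the error constraint and the infimum over all feasible $(\lambda,\sigma_B)$ and $(\mu,\tau_B)$; since $x\mapsto\log_2\lfloor\varepsilon x/(1-\varepsilon)+1\rfloor$ is nondecreasing and $\inf\lambda=2^{I_{\max}(\ch{N})}$ and $\inf\mu=2^{I_\abb{doe}(\ch{N})}$ by Eqs.~\eqref{eq:max} and \eqref{eq:doeblin}, this yields the claimed inequality. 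The degenerate cases are immediate: $d_M=1$ makes the left-hand side $0$, which never exceeds the always-nonnegative right-hand side, and $I_\abb{doe}(\ch{N})=+\infty$ makes the right-hand side $+\infty$. I do not expect a genuine obstacle: the only new idea is the weighted-average reduction of the first paragraph, which rests solely on the elementary fact that a convex combination of the ratios $a_m/b_m$ is at least their minimum, and everything afterwards is a mechanical rewriting of Proposition~\ref{prop:quantum-converse} under $\Phi_{M'M}\mapsto\Upsilon_{M'M}$ and $\Phi_{M'\g{M}}\mapsto\Omega_{M'\g{M}}$.
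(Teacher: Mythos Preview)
Your proposal is correct and follows essentially the same route as the paper's proof: the weighted-average reduction you describe is exactly how the paper passes from the per-symbol success probabilities to the single ratio $\tr[\Omega_{M'\g{M}}\ch{M}^{(0)}_{M\to\g{M}}[\Upsilon_{M'M}]]/\tr[\ch{M}^{(0)}_{M\to\g{M}}[\Upsilon_{M'M}]]$, and from there both arguments push the $\ch{R}^\sigma$/$\ch{R}^\tau$ operator inequalities through $\Gamma_A$, invoke Lemma~\ref{lem:useless}, compute the $1/d_M$ and $1-1/d_M$ overlaps with the comparator, and eliminate the normalizer to obtain $d_M\le\varepsilon\lambda\mu/(1-\varepsilon)+1$. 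The only cosmetic difference is that the paper infimizes after writing the floored bound for fixed $\lambda,\mu$, whereas you first note monotonicity in $\lambda\mu$ and then plug in $2^{I_{\max}(\ch{N})+I_\abb{doe}(\ch{N})}$; both orderings are valid since $d_M\in\spa{N}$ lets you apply the floor after the infimum.
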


\begin{proof}
Let $(\ch{E}_{ML\to A},\ch{D}_{B\to\g{M}L})$ be a strategy such that
\begin{align}
	p_\abb{error}\fleft(\fleft(\ch{E}_{ML\to A},\ch{D}_{B\to\g{M}L}\fright),\ch{N}_{A\to B}\fright)&\leq\varepsilon. \label{pf:classical-converse-1}
\end{align}
It follows from Eqs.~\eqref{eq:error} and \eqref{eq:probability} that, for every symbol $m$ in the alphabet,
\begin{align}
	1-\varepsilon&\leq1-p_\abb{error}\fleft(\fleft(\ch{E}_{ML\to A},\ch{D}_{B\to\g{M}L}\fright),\ch{N}_{A\to B}\fright) \\
	&\leq\frac{\tr\fleft[\op{m}{m}_\g{M}\Gamma_A\fleft\{\ch{E}_{ML\to A}\circ\ch{D}_{B\to\g{M}L}\circ\ch{N}_{A\to B}\fright\}\fleft[\op{m}{m}_M\fright]\fright]}{\tr\fleft[\Gamma_A\fleft\{\ch{E}_{ML\to A}\circ\ch{D}_{B\to\g{M}L}\circ\ch{N}_{A\to B}\fright\}\fleft[\op{m}{m}_M\fright]\fright]},
\end{align}
which implies that
\begin{align}
	1-\varepsilon&\leq1-p_\abb{error}\fleft(\fleft(\ch{E}_{ML\to A},\ch{D}_{B\to\g{M}L}\fright),\ch{N}_{A\to B}\fright) \\
	&\leq\frac{\sum_{m=0}^{d_M-1}\frac{1}{d_M}\tr\fleft[\op{m}{m}_\g{M}\Gamma_A\fleft\{\ch{E}_{ML\to A}\circ\ch{D}_{B\to\g{M}L}\circ\ch{N}_{A\to B}\fright\}\fleft[\op{m}{m}_M\fright]\fright]}{\sum_{m=0}^{d_M-1}\frac{1}{d_M}\tr\fleft[\Gamma_A\fleft\{\ch{E}_{ML\to A}\circ\ch{D}_{B\to\g{M}L}\circ\ch{N}_{A\to B}\fright\}\fleft[\op{m}{m}_M\fright]\fright]} \\
	&=\frac{\tr\fleft[\Omega_{M'\g{M}}\Gamma_A\fleft\{\ch{E}_{ML\to A}\circ\ch{D}_{B\to\g{M}L}\circ\ch{N}_{A\to B}\fright\}\fleft[\Upsilon_{M'M}\fright]\fright]}{\tr\fleft[\Gamma_A\fleft\{\ch{E}_{ML\to A}\circ\ch{D}_{B\to\g{M}L}\circ\ch{N}_{A\to B}\fright\}\fleft[\Upsilon_{M'M}\fright]\fright]}, \label{pf:classical-converse-2}
\end{align}
where the standard classically maximally correlated state $\Upsilon_{M'M}$ and the classical comparator $\Omega_{M'\g{M}}$ are defined in Sec.~\ref{sec:notation}.  Let $\lambda,\mu\in\spa{R}$ be two real numbers, $\sigma_B\in\s{D}_B$ a state, and $\tau_B\in\aff(\s{D}_B)$ a unit-trace Hermitian operator, such that
\begin{align}
	\Phi_{A'B}^\ch{N}&\leq\lambda\Phi_{A'B}^{\ch{R}^\sigma}, \label{pf:classical-converse-3}\\
	\Phi_{A'B}^{\ch{R}^\tau}&\leq\mu\Phi_{A'B}^\ch{N}. \label{pf:classical-converse-4}
\end{align}
Equation~\eqref{pf:classical-converse-3} means that $\lambda\ch{R}_{A\to B}^\sigma-\ch{N}_{A\to B}$ is a completely positive map, and therefore so is $\lambda\Gamma_A\fleft\{\ch{E}_{ML\to A}\circ\ch{D}_{B\to\g{M}L}\circ\ch{R}_{A\to B}^\sigma\fright\}-\Gamma_A\fleft\{\ch{E}_{ML\to A}\circ\ch{D}_{B\to\g{M}L}\circ\ch{N}_{A\to B}\fright\}$.  This implies that
\begin{align}
	\Gamma_A\fleft\{\ch{E}_{ML\to A}\circ\ch{D}_{B\to\g{M}L}\circ\ch{N}_{A\to B}\fright\}\fleft[\Upsilon_{M'M}\fright]&\leq\lambda\Gamma_A\fleft\{\ch{E}_{ML\to A}\circ\ch{D}_{B\to\g{M}L}\circ\ch{R}_{A\to B}^\sigma\fright\}\fleft[\Upsilon_{M'M}\fright] \\
	&=\frac{\lambda}{d_A^2}\ch{R}_{M\to\g{M}}^\omega\fleft[\Upsilon_{M'M}\fright] \\
	&=\frac{\lambda}{d_A^2}\pi_{M'}\otimes\omega_\g{M}, \label{pf:classical-converse-5}
\end{align}
where $\omega_\g{M}\in\s{D}_\g{M}$ is a state, whose existence is guaranteed by Lemma~\ref{lem:useless}.  Likewise, it follows from Eq.~\eqref{pf:classical-converse-4} that
\begin{align}
	\mu\Gamma_A\fleft\{\ch{E}_{ML\to A}\circ\ch{D}_{B\to\g{M}L}\circ\ch{N}_{A\to B}\fright\}\fleft[\Upsilon_{M'M}\fright]&\geq\Gamma_A\fleft\{\ch{E}_{ML\to A}\circ\ch{D}_{B\to\g{M}L}\circ\ch{R}_{A\to B}^\tau\fright\}\fleft[\Upsilon_{M'M}\fright] \\
	&=\frac{1}{d_A^2}\ch{R}_{M\to\g{M}}^\eta\fleft[\Upsilon_{M'M}\fright] \\
	&=\frac{1}{d_A^2}\pi_{M'}\otimes\eta_\g{M}, \label{pf:classical-converse-6}
\end{align}
where $\eta_\g{M}\in\aff(\s{D}_\g{M})$ is a unit-trace Hermitian operator, whose existence is guaranteed by Lemma~\ref{lem:useless}.  Then it follows from Eq.~\eqref{pf:classical-converse-2} that
\begin{align}
	1-\varepsilon&\leq1-p_\abb{error}\fleft(\fleft(\ch{E}_{ML\to A},\ch{D}_{B\to\g{M}L}\fright),\ch{N}_{A\to B}\fright) \label{pf:classical-converse-7}\\
	&\leq\frac{\tr\fleft[\Omega_{M'\g{M}}\Gamma_A\fleft\{\ch{E}_{ML\to A}\circ\ch{D}_{B\to\g{M}L}\circ\ch{N}_{A\to B}\fright\}\fleft[\Upsilon_{M'M}\fright]\fright]}{\tr\fleft[\Gamma_A\fleft\{\ch{E}_{ML\to A}\circ\ch{D}_{B\to\g{M}L}\circ\ch{N}_{A\to B}\fright\}\fleft[\Upsilon_{M'M}\fright]\fright]} \\
	&\leq\frac{\lambda\tr\fleft[\Omega_{M'\g{M}}\left(\pi_{M'}\otimes\omega_\g{M}\right)\fright]}{d_A^2\tr\fleft[\Gamma_A\fleft\{\ch{E}_{ML\to A}\circ\ch{D}_{B\to\g{M}L}\circ\ch{N}_{A\to B}\fright\}\fleft[\Upsilon_{M'M}\fright]\fright]} \label{pf:classical-converse-8}\\
	&=\frac{\lambda}{d_Md_A^2\tr\fleft[\Gamma_A\fleft\{\ch{E}_{ML\to A}\circ\ch{D}_{B\to\g{M}L}\circ\ch{N}_{A\to B}\fright\}\fleft[\Upsilon_{M'M}\fright]\fright]}, \label{pf:classical-converse-9}
\end{align}
where Eq.~\eqref{pf:classical-converse-8} follows from Eq.~\eqref{pf:classical-converse-5}.  Similarly, it follows from Eq.~\eqref{pf:classical-converse-2} that
\begin{align}
	\mu\varepsilon&\geq\mu p_\abb{error}\fleft(\fleft(\ch{E}_{ML\to A},\ch{D}_{B\to\g{M}L}\fright),\ch{N}_{A\to B}\fright) \label{pf:classical-converse-10}\\
	&\geq\frac{\mu\tr\fleft[\left(\1_{M'\g{M}}-\Omega_{M'\g{M}}\right)\Gamma_A\fleft\{\ch{E}_{ML\to A}\circ\ch{D}_{B\to\g{M}L}\circ\ch{N}_{A\to B}\fright\}\fleft[\Upsilon_{M'M}\fright]\fright]}{\tr\fleft[\Gamma_A\fleft\{\ch{E}_{ML\to A}\circ\ch{D}_{B\to\g{M}L}\circ\ch{N}_{A\to B}\fright\}\fleft[\Upsilon_{M'M}\fright]\fright]} \\
	&\geq\frac{\tr\fleft[\left(\1_{M'\g{M}}-\Omega_{M'\g{M}}\right)\left(\pi_{M'}\otimes\eta_\g{M}\right)\fright]}{d_A^2\tr\fleft[\Gamma_A\fleft\{\ch{E}_{ML\to A}\circ\ch{D}_{B\to\g{M}L}\circ\ch{N}_{A\to B}\fright\}\fleft[\Upsilon_{M'M}\fright]\fright]} \label{pf:classical-converse-11}\\
	&=\left(1-\frac{1}{d_M}\right)\frac{1}{d_A^2\tr\fleft[\Gamma_A\fleft\{\ch{E}_{ML\to A}\circ\ch{D}_{B\to\g{M}L}\circ\ch{N}_{A\to B}\fright\}\fleft[\Upsilon_{M'M}\fright]\fright]}, \label{pf:classical-converse-12}
\end{align}
where Eq.~\eqref{pf:classical-converse-11} follows from Eq.~\eqref{pf:classical-converse-6}.  Note that combining Eqs.~\eqref{pf:classical-converse-9} and \eqref{pf:classical-converse-12} implies that
\begin{align}
	d_M\leq\frac{\varepsilon\lambda\mu}{1-\varepsilon}+1,
\end{align}
which, due to $d_M$ being an integer, further implies that
\begin{align}
	\log_2d_M\leq\log_2\left\lfloor\frac{\varepsilon\lambda\mu}{1-\varepsilon}+1\right\rfloor.
\end{align}
Supremizing the left-hand side over every $d_M,d_L\in\spa{N}$ and every strategy $(\ch{E}_{ML\to A},\ch{D}_{B\to\g{M}L})$ satisfying Eq.~\eqref{pf:classical-converse-1}, and infimizing the right-hand side over every $\lambda,\mu\in\spa{R}$, $\sigma_B\in\s{D}_B$, and $\tau_B\in\aff(\s{D}_B)$ satisfying Eqs.~\eqref{pf:classical-converse-3} and \eqref{pf:classical-converse-4}, the desired statement thus follows from the definitions of the one-shot retrocausal classical capacity and the max- and Doeblin informations [Eqs.~\eqref{eq:classical-oneshot}, \eqref{eq:max}, and \eqref{eq:doeblin}].
\end{proof}

\subsection{Achievability}
\label{sec:classical-achievability}

We now construct a strategy that saturates the converse bound derived in Proposition~\ref{prop:classical-converse} by slightly adapting Strategy~\ref{str:quantum}.

\begin{strategy}[Amplified probabilistic classical teleportation]
\label{str:classical}
The strategy operates as follows (see Fig.~\ref{fig:classical-achievability}).
\begin{itemize}
	\item \textbf{Encoding.}  In the future, the father retrieves a classical memory $L_1$ and a quantum memory $L_2$ (both of which were left by Bella in the past) with $L_1\cong M$ and $L_2\cong B$.  He reads the data $l_1$ in $L_1$ and compares it with his message $m$.  He sets $k\coloneq0$ if $m=l_1$ and sets $k\coloneq1$ otherwise.  Conditioned on $k$, he applies a channel $\ch{K}_{L_2\to A}^{(k)}\in\s{C}_{L_2\to A}$ to $L_2$.  He then feeds $A$ into the entrance of the noisy P-CTC.
	\item \textbf{Decoding.}  In the past, the daughter receives $B$ from the exit of the noisy P-CTC.  She generates a symbol $\g{m}$ from the alphabet uniformly at random, writes a copy of $l_1\coloneq\g{m}$ into $L_1$, renames $B$ as $L_2$, and leaves both $L_1$ and $L_2$ for the father to retrieve in the future.  She then declares that $\g{m}$ is her recovery of the message.
\end{itemize}
\end{strategy}

\begin{figure}[t]
\includegraphics[scale=0.3]{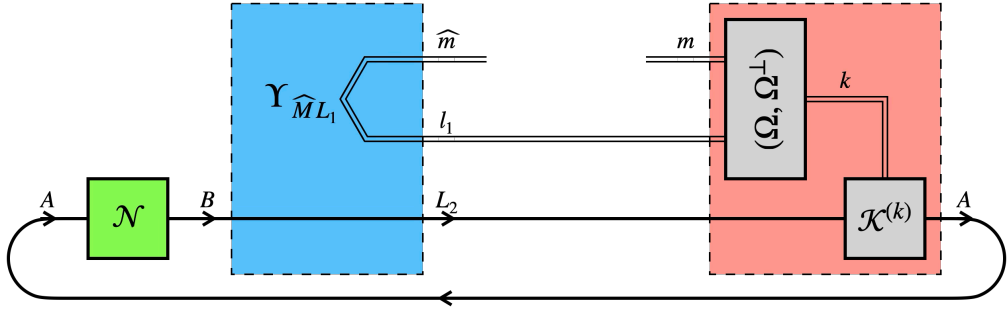}
\caption{The basic idea of amplified probabilistic classical teleportation is to first perform a classical version of probabilistic teleportation from the father to the daughter, and then engineer a causal loop involving the noisy P-CTC such that it maximally amplifies the renormalized probability of the teleportation being successful.  The red and blue boxes encapsulate the father's encoding and the daughter's decoding, respectively.}
\label{fig:classical-achievability}
\end{figure}

The intuition behind Strategy~\ref{str:classical} is the same as that behind its quantum counterpart, Strategy~\ref{str:quantum}.  Instead of trying to code his message into $A$ which enters the noisy P-CTC, the father intends to ``teleport'' the message probabilistically to the daughter via $L_1$, using shared randomness.  Indeed, whenever the father sets $k=0$, his message $m$ has been found to be identical to the data $l_1$ in $L_1$, which is in turn identical to the daughter's recovery $\g{m}$.  This occurs with probability $1/d_M$ without invoking the noisy P-CTC.  The noisy P-CTC and the channels $\ch{K}_{L_2\to A}^{(0)}$ and $\ch{K}_{L_2\to A}^{(1)}$ are then utilized in the same manner as in Strategy~\ref{str:quantum}, to maximally amplify the renormalized probability of the event $k=0$.

We now formalize this analysis and show that Strategy~\ref{str:classical} can indeed achieve the maximum possible efficiency allowed by the converse bound established in Proposition~\ref{prop:classical-converse}.

\begin{proposition}[Achievability of retrocausal classical capacity]
\label{prop:classical-achievability}
For a completely positive map $\ch{N}_{A\to B}\in\s{L}_{A\to B}$ and a real number $\varepsilon\in(0,1)$,
\begin{align}
	C_\abb{retro}^\varepsilon\fleft(\ch{N}\fright)&\geq\log_2\left\lfloor\frac{\varepsilon}{1-\varepsilon}2^{I_{\max}\fleft(\ch{N}\fright)+I_\abb{doe}\fleft(\ch{N}\fright)}+1\right\rfloor. \label{eq:classical-achievability}
\end{align}
Furthermore, the efficiency on the right-hand side of Eq.~\eqref{eq:classical-achievability} is achievable by Strategy~\ref{str:classical} for some channels $\ch{K}_{L_2\to A}^{(0)}$ and $\ch{K}_{L_2\to A}^{(1)}$.
\end{proposition}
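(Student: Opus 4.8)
The plan is to follow the proof of Proposition~\ref{prop:quantum-achievability} almost verbatim, replacing amplified probabilistic \emph{quantum} teleportation by its classical analogue (Strategy~\ref{str:classical}).  First I would write the encoding and decoding channels explicitly.  Letting $\Pi_{ML_1}\coloneq\sum_{j=0}^{d_M-1}\op{j}{j}_M\otimes\op{j}{j}_{L_1}$ be the projector implementing the father's comparison of his message with the classical data read from $L_1$, the encoding and decoding channels of Strategy~\ref{str:classical} are
\begin{align}
	\ch{E}_{ML_1L_2\to A}\fleft[\cdot\fright]&\coloneq\tr_{ML_1}\fleft[\Pi_{ML_1}\ch{K}_{L_2\to A}^{(0)}\fleft[\cdot\fright]\fright]+\tr_{ML_1}\fleft[\left(\1_{ML_1}-\Pi_{ML_1}\right)\ch{K}_{L_2\to A}^{(1)}\fleft[\cdot\fright]\fright], \\
	\ch{D}_{B\to\g{M}L_1L_2}\fleft[\cdot\fright]&\coloneq\Upsilon_{\g{M}L_1}\otimes\id_{B\to L_2}\fleft[\cdot\fright],
\end{align}
which are the exact classical counterparts of Eqs.~\eqref{pf:quantum-achievability-1} and \eqref{pf:quantum-achievability-2}, with the Bell-measurement effect $\Phi_{ML_1}$ replaced by $\Pi_{ML_1}$ and the maximally entangled state $\Phi_{\g{M}L_1}$ replaced by the classically maximally correlated state $\Upsilon_{\g{M}L_1}$.

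Next I would compute the action of $\Gamma_A\fleft\{\ch{E}_{ML_1L_2\to A}\circ\ch{D}_{B\to\g{M}L_1L_2}\circ\ch{N}_{A\to B}\fright\}$ on a classical message $\op{m}{m}_M$, following the same sequence of steps as in the derivation of Eq.~\eqref{pf:quantum-achievability-3}.  Pushing $\ch{N}_{A\to B}$ through the loop first yields its Choi operator on the $L_2$ and $A'$ legs; appending $\Upsilon_{\g{M}L_1}$ and applying the father's comparison singles out the branch $\g{m}=m$, which carries weight $1/d_M$ (this is classical ``teleportation via shared randomness,'' the classical analogue of the quantum teleportation success probability $1/d_M^2$) and on which $\ch{K}_{L_2\to A}^{(0)}$ acts, versus the complementary branch of weight $\left(d_M-1\right)/d_M$ on which $\ch{K}_{L_2\to A}^{(1)}$ acts.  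Closing the loop against $\Phi_{AA'}$ and abbreviating $f^{(k)}\coloneq\tr\fleft[\Phi_{AA'}\left(\ch{K}_{B\to A}^{(k)}\circ\ch{N}_{A\to B}\right)\fleft[\Phi_{AA'}\fright]\fright]$, I expect to obtain
\begin{align}
	\Gamma_A\fleft\{\ch{E}_{ML_1L_2\to A}\circ\ch{D}_{B\to\g{M}L_1L_2}\circ\ch{N}_{A\to B}\fright\}\fleft[\op{m}{m}_M\fright]&=\frac{f^{(0)}}{d_M}\op{m}{m}_\g{M}+\frac{f^{(1)}}{d_M}\left(\1_\g{M}-\op{m}{m}_\g{M}\right),
\end{align}
using the elementary identities $\tr_{ML_1}\fleft[\Pi_{ML_1}\left(\op{m}{m}_M\otimes\Upsilon_{\g{M}L_1}\right)\fright]=\frac{1}{d_M}\op{m}{m}_\g{M}$ and $\tr_{ML_1}\fleft[\op{m}{m}_M\otimes\Upsilon_{\g{M}L_1}\fright]=\frac{1}{d_M}\1_\g{M}$.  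A convenient simplification relative to the quantum case is that the ``failure'' term $\frac{f^{(1)}}{d_M}\left(\1_\g{M}-\op{m}{m}_\g{M}\right)$ is orthogonal to the correct recovery $\op{m}{m}_\g{M}$, so no positivity argument of the kind $d_M^2\tr_M\fleft[\psi_{RM}\fright]\otimes\pi_\g{M}-\psi_{R\g{M}}\geq0$ used in the quantum proof is needed here.

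From the display above together with Eq.~\eqref{eq:probability}, the probability of correctly recovering $m$ equals $f^{(0)}/\left(f^{(0)}+\left(d_M-1\right)f^{(1)}\right)$, which is independent of $m$, so by Eq.~\eqref{eq:error} the worst-case classical error probability equals $\left(d_M-1\right)f^{(1)}/\left(f^{(0)}+\left(d_M-1\right)f^{(1)}\right)$; this is at most $\varepsilon$ precisely when $d_M\leq\frac{\varepsilon}{1-\varepsilon}\cdot\frac{f^{(0)}}{f^{(1)}}+1$.  I would then choose $\ch{K}_{B\to A}^{(0)}$ and $\ch{K}_{B\to A}^{(1)}$ as the maximizer and minimizer in Eqs.~\eqref{pf:quantum-achievability-9} and \eqref{pf:quantum-achievability-10}, respectively; by Lemma~\ref{lem:fraction} this gives $f^{(0)}/f^{(1)}=2^{I_{\max}\fleft(\ch{N}\fright)+I_\abb{doe}\fleft(\ch{N}\fright)}$, so the largest admissible integer $d_M$ is $\left\lfloor\frac{\varepsilon}{1-\varepsilon}2^{I_{\max}\fleft(\ch{N}\fright)+I_\abb{doe}\fleft(\ch{N}\fright)}+1\right\rfloor$, which yields Eq.~\eqref{eq:classical-achievability} and exhibits Strategy~\ref{str:classical} with these channels as a witness.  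The main (and essentially only nonroutine) step is verifying the loop-supermap identity in the second display, i.e., tracking how the father's classical comparison interacts through the closed loop with the daughter's shared-randomness state $\Upsilon_{\g{M}L_1}$; but this is a direct classical transcription of the computation leading to Eq.~\eqref{pf:quantum-achievability-3}.
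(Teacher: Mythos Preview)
Your proposal is correct and essentially identical to the paper's own proof: your projector $\Pi_{ML_1}$ is precisely the paper's classical comparator $\Omega_{ML_1}$, your loop-supermap identity matches Eq.~\eqref{pf:classical-achievability-3}, and the subsequent error-probability computation, optimal choice of $\ch{K}_{B\to A}^{(0)},\ch{K}_{B\to A}^{(1)}$, and invocation of Lemma~\ref{lem:fraction} proceed exactly as in the paper. Your remark that the failure term is orthogonal to $\op{m}{m}_{\g{M}}$ (so no positivity argument is needed) is a correct and helpful observation that the paper leaves implicit.
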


\begin{proof}
The encoding and decoding of Strategy~\ref{str:classical} are represented by the following channels, respectively:
\begin{align}
	\ch{E}_{ML_1L_2\to A}\fleft[\cdot\fright]&\coloneq\tr_{ML_1}\fleft[\Omega_{ML_1}\ch{K}_{L_2\to A}^{(0)}\fleft[\cdot\fright]\fright]+\tr_{ML_1}\fleft[\left(\1_{ML_1}-\Omega_{ML_1}\right)\ch{K}_{L_2\to A}^{(1)}\fleft[\cdot\fright]\fright], \label{pf:classical-achievability-1}\\
	\ch{D}_{B\to\g{M}L_1L_2}\fleft[\cdot\fright]&\coloneq\Upsilon_{\g{M}L_1}\otimes\id_{B\to L_2}\fleft[\cdot\fright]. \label{pf:classical-achievability-2}
\end{align}
Let $m$ be a symbol in the alphabet.  Then it follows from the definition of the loop supermap [Eq.~\eqref{eq:loop}] that
\begin{align}
	&\Gamma_A\fleft\{\ch{E}_{ML_1L_2\to A}\circ\ch{D}_{B\to\g{M}L_1L_2}\circ\ch{N}_{A\to B}\fright\}\fleft[\op{m}{m}_M\fright] \notag\\
	&=\tr_{AA'}\fleft[\Phi_{AA'}\left(\ch{E}_{ML_1L_2\to A}\circ\ch{D}_{B\to\g{M}L_1L_2}\circ\ch{N}_{A\to B}\right)\fleft[\op{m}{m}_M\otimes\Phi_{AA'}\fright]\fright] \\
	&=\tr_{ML_1AA'}\fleft[\left(\Omega_{ML_1}\otimes\Phi_{AA'}\right)\left(\ch{K}_{B\to A}^{(0)}\circ\ch{N}_{A\to B}\right)\fleft[\op{m}{m}_M\otimes\Upsilon_{\g{M}L_1}\otimes\Phi_{AA'}\fright]\fright] \notag\\
	&\quad+\tr_{ML_1AA'}\fleft[\left(\left(\1_{ML_1}-\Omega_{ML_1}\right)\otimes\Phi_{AA'}\right)\left(\ch{K}_{B\to A}^{(1)}\circ\ch{N}_{A\to B}\right)\fleft[\op{m}{m}_M\otimes\Upsilon_{\g{M}L_1}\otimes\Phi_{AA'}\fright]\fright] \\
	&=\frac{1}{d_M}\tr\fleft[\Phi_{AA'}\left(\ch{K}_{B\to A}^{(0)}\circ\ch{N}_{A\to B}\right)\fleft[\Phi_{AA'}\fright]\fright]\op{m}{m}_\g{M} \notag\\
	&\quad+\frac{1}{d_M}\tr\fleft[\Phi_{AA'}\left(\ch{K}_{B\to A}^{(1)}\circ\ch{N}_{A\to B}\right)\fleft[\Phi_{AA'}\fright]\fright]\left(\1_\g{M}-\op{m}{m}_\g{M}\right). \label{pf:classical-achievability-3}
\end{align}
It follows from Eq.~\eqref{eq:probability} that
\begin{align}
	&1-\Pr\fleft\{m\middle|m;\fleft(\ch{E}_{ML_1L_2\to A},\ch{D}_{B\to\g{M}L_1L_2}\fright),\ch{N}_{A\to B}\fright\} \notag\\
	&=1-\frac{\tr\fleft[\op{m}{m}_\g{M}\Gamma_A\fleft\{\ch{E}_{ML_1L_2\to A}\circ\ch{D}_{B\to\g{M}L_1L_2}\circ\ch{N}_{A\to B}\fright\}\fleft[\op{m}{m}_M\fright]\fright]}{\tr\fleft[\Gamma_A\fleft\{\ch{E}_{ML_1L_2\to A}\circ\ch{D}_{B\to\g{M}L_1L_2}\circ\ch{N}_{A\to B}\fright\}\fleft[\op{m}{m}_M\fright]\fright]} \\
	&\leq1-\frac{\frac{1}{d_M}\tr\fleft[\Phi_{AA'}\left(\ch{K}_{B\to A}^{(0)}\circ\ch{N}_{A\to B}\right)\fleft[\Phi_{AA'}\fright]\fright]}{\frac{1}{d_M}\tr\fleft[\Phi_{AA'}\left(\ch{K}_{B\to A}^{(0)}\circ\ch{N}_{A\to B}\right)\fleft[\Phi_{AA'}\fright]\fright]+\frac{d_M-1}{d_M}\tr\fleft[\Phi_{AA'}\left(\ch{K}_{B\to A}^{(1)}\circ\ch{N}_{A\to B}\right)\fleft[\Phi_{AA'}\fright]\fright]} \label{pf:classical-achievability-4}\\
	&=\left(\frac{\tr\fleft[\Phi_{AA'}\left(\ch{K}_{B\to A}^{(0)}\circ\ch{N}_{A\to B}\right)\fleft[\Phi_{AA'}\fright]\fright]}{\left(d_M-1\right)\tr\fleft[\Phi_{AA'}\left(\ch{K}_{B\to A}^{(1)}\circ\ch{N}_{A\to B}\right)\fleft[\Phi_{AA'}\fright]\fright]}+1\right)^{-1}, \label{pf:classical-achievability-5}
\end{align}
where Eq.~\eqref{pf:classical-achievability-4} follows from Eq.~\eqref{pf:classical-achievability-3}.  Since this holds for every symbol $m$, it follows from Eq.~\eqref{eq:error} that, as long as
\begin{align}
	\left(\frac{\tr\fleft[\Phi_{AA'}\left(\ch{K}_{B\to A}^{(0)}\circ\ch{N}_{A\to B}\right)\fleft[\Phi_{AA'}\fright]\fright]}{\left(d_M-1\right)\tr\fleft[\Phi_{AA'}\left(\ch{K}_{B\to A}^{(1)}\circ\ch{N}_{A\to B}\right)\fleft[\Phi_{AA'}\fright]\fright]}+1\right)^{-1}&\leq\varepsilon, \label{pf:classical-achievability-6}
\end{align}
we have that
\begin{align}
	p_\abb{error}\fleft(\fleft(\ch{E}_{ML_1L_2\to A},\ch{D}_{B\to\g{M}L_1L_2}\fright),\ch{N}_{A\to B}\fright)&=1-\min_{m}\Pr\fleft\{m\middle|m;\fleft(\ch{E}_{ML_1L_2\to A},\ch{D}_{B\to\g{M}L_1L_2}\fright),\ch{N}_{A\to B}\fright\} \\
	&\leq\varepsilon. \label{pf:classical-achievability-7}
\end{align}
Since Eq.~\eqref{pf:classical-achievability-6} is equivalent to
\begin{align}
	d_M&\leq\frac{\varepsilon\tr\fleft[\Phi_{AA'}\left(\ch{K}_{B\to A}^{(0)}\circ\ch{N}_{A\to B}\right)\fleft[\Phi_{AA'}\fright]\fright]}{\left(1-\varepsilon\right)\tr\fleft[\Phi_{AA'}\left(\ch{K}_{B\to A}^{(1)}\circ\ch{N}_{A\to B}\right)\fleft[\Phi_{AA'}\fright]\fright]}+1, \label{pf:classical-achievability-8}
\end{align}
by choosing
\begin{align}
	\ch{K}_{B\to A}^{(0)}&\coloneq\arg\max_{\ch{K}_{B\to A}\in\s{C}_{B\to A}}\tr\fleft[\Phi_{AA'}\left(\ch{K}_{B\to A}\circ\ch{N}_{A\to B}\right)\fleft[\Phi_{AA'}\fright]\fright], \label{pf:classical-achievability-9}\\
	\ch{K}_{B\to A}^{(1)}&\coloneq\arg\min_{\ch{K}_{B\to A}\in\s{C}_{B\to A}}\tr\fleft[\Phi_{AA'}\left(\ch{K}_{B\to A}\circ\ch{N}_{A\to B}\right)\fleft[\Phi_{AA'}\fright]\fright], \label{pf:classical-achievability-10}
\end{align}
which maximize the right-hand side of Eq.~\eqref{pf:classical-achievability-8}, the following efficiency can be achieved within error $\varepsilon$ (i.e., subject to Eq.~\eqref{pf:classical-achievability-7}) and is thus a lower bound on the one-shot retrocausal classical capacity [see Eq.~\eqref{eq:classical-oneshot}]:
\begin{align}
	\log_2d_M&=\log_2\left\lfloor\frac{\varepsilon\tr\fleft[\Phi_{AA'}\left(\ch{K}_{B\to A}^{(0)}\circ\ch{N}_{A\to B}\right)\fleft[\Phi_{AA'}\fright]\fright]}{\left(1-\varepsilon\right)\tr\fleft[\Phi_{AA'}\left(\ch{K}_{B\to A}^{(1)}\circ\ch{N}_{A\to B}\right)\fleft[\Phi_{AA'}\fright]\fright]}+1\right\rfloor \\
	&=\log_2\left\lfloor\frac{\varepsilon\max_{\ch{K}_{B\to A}\in\s{C}_{B\to A}}\tr\fleft[\Phi_{AA'}\left(\ch{K}_{B\to A}\circ\ch{N}_{A\to B}\right)\fleft[\Phi_{AA'}\fright]\fright]}{\left(1-\varepsilon\right)\min_{\ch{K}_{B\to A}\in\s{C}_{B\to A}}\tr\fleft[\Phi_{AA'}\left(\ch{K}_{B\to A}\circ\ch{N}_{A\to B}\right)\fleft[\Phi_{AA'}\fright]\fright]}+1\right\rfloor \\
	&=\log_2\left\lfloor\frac{\varepsilon}{1-\varepsilon}2^{I_{\max}\fleft(\ch{N}\fright)+I_\abb{doe}\fleft(\ch{N}\fright)}+1\right\rfloor, \label{pf:classical-achievability-11}
\end{align}
where Eq.~\eqref{pf:classical-achievability-11} follows from the interpretation of the max- and Doeblin informations in terms of the maximum and minimum achievable singlet fractions, respectively (Lemma~\ref{lem:fraction}), implying the desired statement.
\end{proof}

Combining Propositions~\ref{prop:classical-converse} and \ref{prop:classical-achievability} completes the proof of Theorem~\ref{thm:classical-oneshot}.

\subsection{Asymptotics}
\label{sec:classical-asymptotics}

With the one-shot retrocausal classical capacity determined, the asymptotic retrocausal classical capacity immediately follows.

\begin{theorem}[Asymptotic retrocausal classical capacity]
\label{thm:classical-asymptotic}
The asymptotic retrocausal classical capacity is equal to the sum of the max- and regularized Doeblin informations: for a completely positive map $\ch{N}_{A\to B}\in\s{L}_{A\to B}$ and a real number $\varepsilon\in(0,1)$,
\begin{align}
	C_\abb{retro}\fleft(\ch{N}\fright)&=\lim_{n\to\infty}\frac{1}{n}Q_\abb{retro}^\varepsilon\fleft(\ch{N}^{\otimes n}\fright)=I_{\max}\fleft(\ch{N}\fright)+I_\abb{doe}^\infty\fleft(\ch{N}\fright),
\end{align}
where the regularized Doeblin information is defined in Eq.~\eqref{eq:doeblin-regularized}.
\end{theorem}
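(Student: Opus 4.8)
The plan is to obtain the asymptotic formula as a direct corollary of the one-shot closed-form expression in Theorem~\ref{thm:classical-oneshot}, following the same route as the proof of Theorem~\ref{thm:quantum-asymptotic} but with prefactor $1$ rather than $1/2$ (and no square root). First I would sandwich the one-shot classical capacity between two single-letter quantities. Write $t=I_{\max}\fleft(\ch{N}\fright)+I_\abb{doe}\fleft(\ch{N}\fright)\geq0$, which is nonnegative since the max- and Doeblin informations are nonnegative. Using $\lfloor x\rfloor\leq x$ for the upper bound, $\lfloor x+1\rfloor=\lfloor x\rfloor+1\geq x$ for the lower bound (valid for every real $x\geq0$), and the elementary estimate $\frac{\varepsilon}{1-\varepsilon}2^t+1\leq\fleft(\frac{\varepsilon}{1-\varepsilon}+1\fright)2^t=\frac{1}{1-\varepsilon}2^t$ (which uses $2^t\geq1$), Theorem~\ref{thm:classical-oneshot} yields
\begin{align}
	I_{\max}\fleft(\ch{N}\fright)+I_\abb{doe}\fleft(\ch{N}\fright)+\log_2\frac{\varepsilon}{1-\varepsilon}\leq C_\abb{retro}^\varepsilon\fleft(\ch{N}\fright)\leq I_{\max}\fleft(\ch{N}\fright)+I_\abb{doe}\fleft(\ch{N}\fright)+\log_2\frac{1}{1-\varepsilon}.
\end{align}

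Next I would apply this bound to $\ch{N}^{\otimes n}$ in place of $\ch{N}$, divide by $n$, and let $n\to+\infty$. By additivity of the max-information [Eq.~\eqref{eq:max-regularized}], $\frac{1}{n}I_{\max}\fleft(\ch{N}^{\otimes n}\fright)=I_{\max}\fleft(\ch{N}\fright)$ for every $n$; by the definition of the regularized Doeblin information [Eq.~\eqref{eq:doeblin-regularized}, whose limit exists by Fekete's lemma], $\frac{1}{n}I_\abb{doe}\fleft(\ch{N}^{\otimes n}\fright)\to I_\abb{doe}^\infty\fleft(\ch{N}\fright)$; and the residual $\frac{1}{n}\log_2$ terms vanish. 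The squeeze theorem then gives $\lim_{n\to+\infty}\frac{1}{n}C_\abb{retro}^\varepsilon\fleft(\ch{N}^{\otimes n}\fright)=I_{\max}\fleft(\ch{N}\fright)+I_\abb{doe}^\infty\fleft(\ch{N}\fright)$ for every fixed $\varepsilon\in(0,1)$. Since this value is independent of $\varepsilon$, the infimum over $\varepsilon\in(0,1)$ in the definition~\eqref{eq:classical-asymptotic} of the asymptotic retrocausal classical capacity is vacuous, and the claimed identity follows.

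I do not anticipate a genuine obstacle: the argument is mechanical once Theorem~\ref{thm:classical-oneshot} is available. The only subtlety is the degenerate case $I_\abb{doe}\fleft(\ch{N}\fright)=+\infty$ (which can occur, unlike for $I_{\max}$): there both $C_\abb{retro}^\varepsilon\fleft(\ch{N}^{\otimes n}\fright)$ and the asserted right-hand side equal $+\infty$, so the identity holds trivially, and one may freely assume $I_\abb{doe}\fleft(\ch{N}\fright)<+\infty$ when carrying out the sandwiching step. I would also remark in passing that the same two-sided estimate shows the second-order asymptotics of retrocausal classical communication to be trivial whenever $I_\abb{doe}^\infty\fleft(\ch{N}\fright)=I_\abb{doe}\fleft(\ch{N}\fright)$, in parallel with Remark~\ref{rem:second}.
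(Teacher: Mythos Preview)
Your proposal is correct and follows essentially the same route as the paper: sandwich $C_\abb{retro}^\varepsilon(\ch{N})$ between $I_{\max}(\ch{N})+I_\abb{doe}(\ch{N})+\log_2\frac{\varepsilon}{1-\varepsilon}$ and $I_{\max}(\ch{N})+I_\abb{doe}(\ch{N})+\log_2\frac{1}{1-\varepsilon}$ using Theorem~\ref{thm:classical-oneshot}, apply to $\ch{N}^{\otimes n}$, divide by $n$, and invoke additivity of $I_{\max}$ together with the definition of $I_\abb{doe}^\infty$. Your explicit handling of the $I_\abb{doe}(\ch{N})=+\infty$ case and the remark on second-order asymptotics are welcome additions that the paper leaves implicit here.
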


\begin{proof}
Since the max- and Doeblin informations are nonnegative, it follows from Theorem~\ref{thm:classical-oneshot} that
\begin{align}
	C_\abb{retro}^\varepsilon\fleft(\ch{N}\fright)&=\log_2\left\lfloor\frac{\varepsilon}{1-\varepsilon}2^{I_{\max}\fleft(\ch{N}\fright)+I_\abb{doe}\fleft(\ch{N}\fright)}+1\right\rfloor \\
	&\leq\log_2\left(\frac{\varepsilon}{1-\varepsilon}2^{I_{\max}\fleft(\ch{N}\fright)+I_\abb{doe}\fleft(\ch{N}\fright)}+1\right) \\
	&\leq\log_2\left(\left(\frac{\varepsilon}{1-\varepsilon}+1\right)2^{I_{\max}\fleft(\ch{N}\fright)+I_\abb{doe}\fleft(\ch{N}\fright)}\right) \\
	&=I_{\max}\fleft(\ch{N}\fright)+I_\abb{doe}\fleft(\ch{N}\fright)+\log_2\frac{1}{1-\varepsilon}. \label{pf:classical-asymptotic-1}
\end{align}
We also have that
\begin{align}
	C_\abb{retro}^\varepsilon\fleft(\ch{N}\fright)&=\log_2\left\lfloor\frac{\varepsilon}{1-\varepsilon}2^{I_{\max}\fleft(\ch{N}\fright)+I_\abb{doe}\fleft(\ch{N}\fright)}+1\right\rfloor \\
	&\geq\log_2\left(\frac{\varepsilon}{1-\varepsilon}2^{I_{\max}\fleft(\ch{N}\fright)+I_\abb{doe}\fleft(\ch{N}\fright)}\right) \\
	&=I_{\max}\fleft(\ch{N}\fright)+I_\abb{doe}\fleft(\ch{N}\fright)+\log_2\frac{\varepsilon}{1-\varepsilon}. \label{pf:classical-asymptotic-2}
\end{align}
Combining Eqs.~\eqref{pf:classical-asymptotic-1} and \eqref{pf:classical-asymptotic-2}, we have that, for every $n\in\spa{N}$,
\begin{align}
	\frac{1}{n}\left(I_{\max}\fleft(\ch{N}^{\otimes n}\fright)+I_\abb{doe}\fleft(\ch{N}^{\otimes n}\fright)+\log_2\frac{1}{1-\varepsilon}\right)&\leq\frac{1}{n}C_\abb{retro}^\varepsilon\fleft(\ch{N}^{\otimes n}\fright) \\
	&\leq\frac{1}{n}\left(I_{\max}\fleft(\ch{N}^{\otimes n}\fright)+I_\abb{doe}\fleft(\ch{N}^{\otimes n}\fright)+\log_2\frac{\varepsilon}{1-\varepsilon}\right).
\end{align}
Taking the limit $n\to\infty$, it follows from the additivity of the max-information [Eq.~\eqref{eq:max-regularized}] and the definition of the regularized Doeblin information [Eq.~\eqref{eq:doeblin-regularized}] that
\begin{align}
	\lim_{n\to\infty}\frac{1}{n}C_\abb{retro}^\varepsilon\fleft(\ch{N}^{\otimes n}\fright)&=I_{\max}\fleft(\ch{N}\fright)+I_\abb{doe}^\infty\fleft(\ch{N}\fright),
\end{align}
from which and the definition of the asymptotic retrocausal classical capacity [Eq.~\eqref{eq:classical-asymptotic}] the desired statement follows.
\end{proof}

\subsection{Error exponent and strong-converse exponent}
\label{sec:classical-exponent}

We now show that the error exponent and the strong-converse exponent of retrocausal classical communication can also be determined based on the one-shot retrocausal classical capacity.

\begin{theorem}[Error exponent and strong-converse exponent of retrocausal classical communication]
\label{thm:classical-exponent}
For a completely positive map $\ch{N}_{A\to B}\in\s{L}_{A\to B}$ and a real number $r\in[0,\infty)$,
\begin{align}
	E_\abb{retro,C}^r\fleft(\ch{N}\fright)&=\max\left\{0,I_{\max}\fleft(\ch{N}\fright)+I_\abb{doe}^\infty\fleft(\ch{N}\fright)-r\right\}, \\
	S_\abb{retro,C}^r\fleft(\ch{N}\fright)&=\max\left\{0,r-I_{\max}\fleft(\ch{N}\fright)-I_\abb{doe}^\infty\fleft(\ch{N}\fright)\right\}.
\end{align}
\end{theorem}

\begin{proof}
Let $(\ch{E}_{ML\to A},\ch{D}_{B\to\g{M}L})$ be a strategy.  Let $\lambda,\mu\in\spa{R}$ be two real numbers, $\sigma_B\in\s{D}_B$ a state, and $\tau_B\in\aff(\s{D}_B)$ a unit-trace Hermitian operator satisfying Eqs.~\eqref{pf:classical-converse-3} and \eqref{pf:classical-converse-4}.  Comparing Eqs.~\eqref{pf:classical-converse-7} with \eqref{pf:classical-converse-9} and Eqs.~\eqref{pf:classical-converse-10} with \eqref{pf:classical-converse-12}, we have that
\begin{align}
	1-p_\abb{error}\fleft(\fleft(\ch{E}_{ML\to A},\ch{D}_{B\to\g{M}L}\fright),\ch{N}_{A\to B}\fright)&\leq\frac{\lambda}{d_Md_A^2\tr\fleft[\Gamma_A\fleft\{\ch{E}_{ML\to A}\circ\ch{D}_{B\to\g{M}L}\circ\ch{N}_{A\to B}\fright\}\fleft[\Upsilon_{M'M}\fright]\fright]}, \label{pf:classical-exponent-1}\\
	\mu p_\abb{error}\fleft(\fleft(\ch{E}_{ML\to A},\ch{D}_{B\to\g{M}L}\fright),\ch{N}_{A\to B}\fright)&\geq\left(1-\frac{1}{d_M}\right)\frac{1}{d_A^2\tr\fleft[\Gamma_A\fleft\{\ch{E}_{ML\to A}\circ\ch{D}_{B\to\g{M}L}\circ\ch{N}_{A\to B}\fright\}\fleft[\Upsilon_{M'M}\fright]\fright]}. \label{pf:classical-exponent-2}
\end{align}
Note that combining Eqs.~\eqref{pf:classical-exponent-1} and \eqref{pf:classical-exponent-2} implies that
\begin{align}
	p_\abb{error}\fleft(\fleft(\ch{E}_{ML\to A},\ch{D}_{B\to\g{M}L}\fright),\ch{N}_{A\to B}\fright)&\geq\frac{d_M-1}{\lambda\mu+d_M-1}.
\end{align}
For a fixed $d_M$, infimizing the left-hand side over every $d_L\in\spa{N}$ and every strategy $(\ch{E}_{ML\to A},\ch{D}_{B\to\g{M}L})$, and supremizing the right-hand side over every $\lambda,\mu\in\spa{R}$, $\sigma_B\in\s{D}_B$, and $\tau_B\in\aff(\s{D}_B)$ satisfying Eqs.~\eqref{pf:classical-converse-3} and \eqref{pf:classical-converse-4}, we have that
\begin{align}
	\inf_{\substack{d_L\in\spa{N}, \\
	\ch{E}_{ML\to A}\in\s{C}_{ML\to A}, \\ \ch{D}_{B\to\g{M}L}\in\s{C}_{B\to\g{M}L}}}p_\abb{error}\fleft(\fleft(\ch{E}_{ML\to A},\ch{D}_{B\to\g{M}L}\fright),\ch{N}_{A\to B}\fright)&\geq\frac{d_M-1}{2^{I_{\max}\fleft(\ch{N}\fright)+I_\abb{doe}\fleft(\ch{N}\fright)}+d_M-1}. \label{pf:classical-exponent-3}
\end{align}
Now consider the strategy $(\ch{E}_{ML_1L_2\to A},\ch{D}_{B\to\g{M}L_1L_2})$ defined in Eqs.~\eqref{pf:classical-achievability-1} and \eqref{pf:classical-achievability-2}.  It follows from Eq.~\eqref{pf:classical-achievability-5} that, for every symbol $m$ in the alphabet,
\begin{align}
	1-\Pr\fleft\{m\middle|m;\fleft(\ch{E}_{ML_1L_2\to A},\ch{D}_{B\to\g{M}L_1L_2}\fright),\ch{N}_{A\to B}\fright\}&\leq\left(\frac{\tr\fleft[\Phi_{AA'}\left(\ch{K}_{B\to A}^{(0)}\circ\ch{N}_{A\to B}\right)\fleft[\Phi_{AA'}\fright]\fright]}{\left(d_M^2-1\right)\tr\fleft[\Phi_{AA'}\left(\ch{K}_{B\to A}^{(1)}\circ\ch{N}_{A\to B}\right)\fleft[\Phi_{AA'}\fright]\fright]}+1\right)^{-1}.
\end{align}
Supremizing the left-hand side over every $d_R\in\spa{N}$ and every symbol $m$ in the alphabet, and choosing the channels $\ch{K}_{B\to A}^{(0)}$ and $\ch{K}_{B\to A}^{(1)}$ according to Eqs.~\eqref{pf:classical-achievability-9} and \eqref{pf:classical-achievability-10}, it follows from Eq.~\eqref{eq:error} that
\begin{align}
	&p_\abb{error}\fleft(\fleft(\ch{E}_{ML_1L_2\to A},\ch{D}_{B\to\g{M}L_1L_2}\fright),\ch{N}_{A\to B}\fright) \notag\\
	&=\left(\frac{\tr\fleft[\Phi_{AA'}\left(\ch{K}_{B\to A}^{(0)}\circ\ch{N}_{A\to B}\right)\fleft[\Phi_{AA'}\fright]\fright]}{\left(d_M-1\right)\tr\fleft[\Phi_{AA'}\left(\ch{K}_{B\to A}^{(1)}\circ\ch{N}_{A\to B}\right)\fleft[\Phi_{AA'}\fright]\fright]}+1\right)^{-1} \\
	&=\left(\frac{\max_{\ch{K}_{B\to A}\in\s{C}_{B\to A}}\tr\fleft[\Phi_{AA'}\left(\ch{K}_{B\to A}\circ\ch{N}_{A\to B}\right)\fleft[\Phi_{AA'}\fright]\fright]}{\left(d_M-1\right)\min_{\ch{K}_{B\to A}\in\s{C}_{B\to A}}\tr\fleft[\Phi_{AA'}\left(\ch{K}_{B\to A}\circ\ch{N}_{A\to B}\right)\fleft[\Phi_{AA'}\fright]\fright]}+1\right)^{-1} \\
	&=\left(\frac{1}{d_M-1}2^{I_{\max}\fleft(\ch{N}\fright)+I_\abb{doe}\fleft(\ch{N}\fright)}+1\right)^{-1} \label{pf:classical-exponent-4}\\
	&=\frac{d_M-1}{2^{I_{\max}\fleft(\ch{N}\fright)+I_\abb{doe}\fleft(\ch{N}\fright)}+d_M-1}, \label{pf:classical-exponent-5}
\end{align}
where Eq.~\eqref{pf:classical-exponent-4} follows from the interpretation of the max- and Doeblin informations in terms of the maximum and minimum achievable singlet fractions, respectively (Lemma~\ref{lem:fraction}).  Combining Eqs.~\eqref{pf:classical-exponent-3} and \eqref{pf:classical-exponent-5}, we have that
\begin{align}
	\inf_{\substack{d_L\in\spa{N}, \\
	\ch{E}_{ML\to A}\in\s{C}_{ML\to A}, \\ \ch{D}_{B\to\g{M}L}\in\s{C}_{B\to\g{M}L}}}p_\abb{error}\fleft(\fleft(\ch{E}_{ML\to A},\ch{D}_{B\to\g{M}L}\fright),\ch{N}_{A\to B}\fright)&=\frac{d_M-1}{2^{I_{\max}\fleft(\ch{N}\fright)+I_\abb{doe}\fleft(\ch{N}\fright)}+d_M-1}.
\end{align}
Then it follows from the definition of the error exponent of retrocausal classical communication [Eq.~\eqref{eq:classical-error}] that
\begin{align}
	E_\abb{retro,C}^r\fleft(\ch{N}\fright)&=\lim_{n\to\infty}\sup_{d_M\in\spa{N}}\left\{-\frac{1}{n}\log_2\frac{d_M-1}{2^{I_{\max}\fleft(\ch{N}^{\otimes n}\fright)+I_\abb{doe}\fleft(\ch{N}^{\otimes n}\fright)}+d_M-1}\colon d_M\geq2^{nr}\right\} \\
	&=\lim_{n\to\infty}-\frac{1}{n}\log_2\frac{2^{nr}-1}{2^{I_{\max}\fleft(\ch{N}^{\otimes n}\fright)+I_\abb{doe}\fleft(\ch{N}^{\otimes n}\fright)}+2^{nr}-1} \\
	&=\max\left\{0,I_{\max}\fleft(\ch{N}\fright)+I_\abb{doe}^\infty\fleft(\ch{N}\fright)-r\right\}. \label{pf:classical-exponent-6}
\end{align}
where Eq.~\eqref{pf:classical-exponent-6} follows from the additivity of the max-information [Eq.~\eqref{eq:max-regularized}] and the definition of the regularized Doeblin information [Eq.~\eqref{eq:doeblin-regularized}].  Likewise, it follows from the definition of the strong-converse exponent of retrocausal classical communication [Eq.~\eqref{eq:classical-strong}] that
\begin{align}
	S_\abb{retro,C}^r\fleft(\ch{N}\fright)&=\lim_{n\to\infty}\inf_{d_M\in\spa{N}}\left\{-\frac{1}{n}\log_2\left(1-\frac{d_M-1}{2^{I_{\max}\fleft(\ch{N}^{\otimes n}\fright)+I_\abb{doe}\fleft(\ch{N}^{\otimes n}\fright)}+d_M-1}\right)\colon d_M\geq2^{nr}\right\} \\
	&=\lim_{n\to\infty}-\frac{1}{n}\log_2\frac{2^{I_{\max}\fleft(\ch{N}^{\otimes n}\fright)+I_\abb{doe}\fleft(\ch{N}^{\otimes n}\fright)}}{2^{I_{\max}\fleft(\ch{N}^{\otimes n}\fright)+I_\abb{doe}\fleft(\ch{N}^{\otimes n}\fright)}+2^{nr}-1} \\
	&=\max\left\{0,r-I_{\max}\fleft(\ch{N}\fright)-I_\abb{doe}^\infty\fleft(\ch{N}\fright)\right\},
\end{align}
which completes the proof of the desired statement.
\end{proof}

\section{Examples}
\label{sec:example}

In this section, we apply our results to some special classes of channels, and we numerically compare the retrocausal capacity of these channels to their entanglement-assisted and unassisted capacities.

\subsection{Depolarizing channels}
\label{sec:depolarizing}

A depolarizing channel $\ch{D}_{A\to A}^p\in\s{C}_{A\to A}$ with parameter $p\in[0,1]$ is defined as
\begin{align}
	\ch{D}_{A\to A}^p&\equiv\left(1-p\right)\id_{A\to A}+p\ch{R}_{A\to A}^\pi. \label{eq:depolarizing}
\end{align}
Since $\ch{D}_{A\to A}^p$ is a mictodiactic covariant channel, its regularized Doeblin information is equal to its Doeblin information~\cite[Theorem~4]{george2025QuantumDoeblinCoefficients}, both of which are given by
\begin{align}
	I_\abb{doe}^\infty\fleft(\ch{D}^p\fright)&=I_\abb{doe}\fleft(\ch{D}^p\fright) \\
	&=\inf_{\lambda\in\spa{R}}\left\{\log_2\lambda\colon\pi_{A'}\otimes\pi_A\leq\lambda\left(\left(1-p\right)\Phi_{A'A}+p\pi_{A'}\otimes\pi_A\right)\right\} \label{pf:depolarizing-1}\\
	&=-\log_2p,
\end{align}
where Eq.~\eqref{pf:depolarizing-1} follows from Ref.~\cite[Lemma~3]{george2025QuantumDoeblinCoefficients} and Eq.~\eqref{eq:depolarizing}.  The max-information of $\ch{D}_{A\to A}^p$ is equal to~\cite[Eq.~(66)]{fang2020QuantumChannelSimulation}
\begin{align}
	I_{\max}\fleft(\ch{D}^p\fright)&=\log_2\left(\left(1-p\right)d_A^2+p\right).
\end{align}
By Theorem~\ref{thm:classical-asymptotic}, the asymptotic retrocausal classical capacity of $\ch{D}_{A\to A}^p$ is thus given by
\begin{align}
	C_\abb{retro}\fleft(\ch{D}^p\fright)&=I_{\max}\fleft(\ch{D}^p\fright)+I_\abb{doe}^\infty\fleft(\ch{D}^p\fright) \\
	&=\log_2\left(\left(1-p\right)d_A^2+p\right)-\log_2p.
\end{align}
The asymptotic entanglement-assisted classical capacity of $\ch{D}_{A\to A}^p$ is equal to~\cite[Eq.~(3)]{bennett1999EntanglementassistedClassicalCapacity}
\begin{align}
	C_\abb{EA}\fleft(\ch{D}^p\fright)&=2\log_2d_A-h_2\fleft(\frac{p\left(d_A^2-1\right)}{d_A^2}\fright)-\frac{p\left(d_A^2-1\right)}{d_A^2}\log_2\left(d_A^2-1\right),
\end{align}
where $h_2(x)\equiv-x\log_2x-(1-x)\log_2(1-x)$ is the binary entropy.  The asymptotic (unassisted) classical capacity of $\ch{D}_{A\to A}^p$ is equal to~\cite[Theorem~1]{king2003CapacityQuantumDepolarizing}
\begin{align}
	C\fleft(\ch{D}^p\fright)&=\log_2d_A-h_2\fleft(\frac{pd_A}{d_A+1}\fright)-\frac{pd_A}{d_A+1}\log_2\left(d_A-1\right).
\end{align}
See Fig.~\ref{fig:depolarizing} for a numerical comparison between the asymptotic retrocausal classical capacity, the asymptotic entanglement-assisted classical capacity, and the asymptotic (unassisted) classical capacity of depolarizing channels for when $d_A=2$.

\begin{figure}[t]
\includegraphics[scale=0.9]{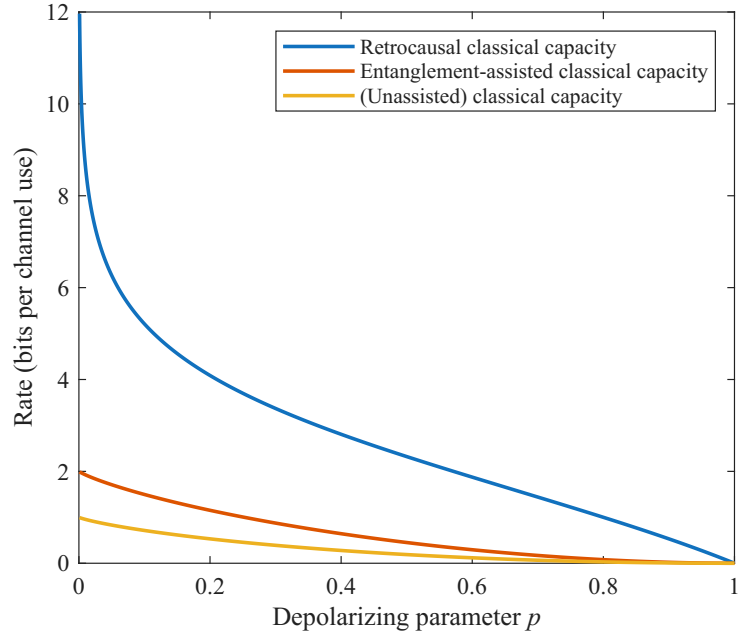}
\caption{For qubit depolarizing channels, the asymptotic retrocausal classical capacity, the asymptotic entanglement-assisted classical capacity, and the asymptotic (unassisted) classical capacity are compared numerically as functions of the depolarizing parameter $p\in[0,1]$.  The asymptotic retrocausal classical capacity approaches $\infty$ as $p\searrow0$.}
\label{fig:depolarizing}
\end{figure}

\subsection{Erasure channels}
\label{sec:erasure}

An erasure channel $\ch{E}_{A\to B}^p\in\s{C}_{A\to B}$ with $\spa{H}_B=\spa{H}_A\oplus\spn(\ket{e})$ and parameter $p\in[0,1]$ is defined as
\begin{align}
	\ch{E}_{A\to B}^p&\equiv\left(1-p\right)\overline{\id}_{A\to B}+p\ch{R}_{A\to B}^\op{e}{e},
\end{align}
where $\overline{\id}_{A\to B}[\cdot]\equiv\sum_{i,j=0}^{d_A-1}\tr[\op{j}{i}_A(\cdot)_A]\op{i}{j}_B$ with $\{\ket{i}\}_{i=0}^{d_A-1}$ an orthonormal basis of $\spa{H}_A$.  It follows from Eq.~\eqref{eq:wang} that
\begin{align}
	I_\wang\fleft(\ch{E}^p\fright)&=\inf_{\substack{\lambda\in\spa{R}, \\ \tau_B\in\aff\fleft(\s{D}_B\fright)}}\left\{\log_2\lambda\colon-\lambda\Phi_{A'B}^{\ch{E}^p}\leq\pi_{A'}\otimes\tau_B\leq\lambda\Phi_{A'B}^{\ch{E}^p}\right\} \\
	&\leq\inf_{\lambda\in\spa{R}}\left\{\log_2\lambda\colon-\lambda\Phi_{A'B}^{\ch{E}^p}\leq\pi_{A'}\otimes\op{e}{e}_B\leq\lambda\Phi_{A'B}^{\ch{E}^p}\right\} \\
	&=\inf_{\lambda\in\spa{R}}\left\{\log_2\lambda\colon\pi_{A'}\otimes\op{e}{e}_B\leq\lambda\left(\left(1-p\right)\overline{\Phi}_{A'B}+p\pi_{A'}\otimes\op{e}{e}_B\right)\right\} \\
	&=-\log_2 p, \label{pf:erasure-1}
\end{align}
where $\overline{\Phi}_{A'B}\equiv(1/d_A)\sum_{i,j=0}^{d_A-1}\op{i}{j}_{A'}\otimes\op{i}{j}_B$.  Now define the following channel:
\begin{align}
	\ch{P}_{B\to A}\fleft[\cdot\fright]&\coloneq\sum_{i,j=0}^{d_A-1}\tr\fleft[\op{j}{i}_B\left(\cdot\right)_B\fright]\op{\left(i+1\right)\bmod d_A}{\left(j+1\right)\bmod d_A}_A+\tr\fleft[\op{e}{e}_B\left(\cdot\right)_B\fright]\pi_A.
\end{align}
It follows from the interpretation of the Doeblin information in terms of the minimum achievable singlet fraction (Lemma~\ref{lem:fraction}) that
\begin{align}
	I_\abb{doe}\fleft(\ch{E}^p\fright)&=-2\log_2d_A-\log_2\min_{\ch{K}_{B\to A}\in\s{C}_{B\to A}}\tr\fleft[\Phi_{AA'}\left(\ch{K}_{B\to A}\circ\ch{E}_{A\to B}^p\right)\fleft[\Phi_{AA'}\fright]\fright] \\
	&\geq-2\log_2d_A-\log_2\tr\fleft[\Phi_{AA'}\left(\ch{P}_{B\to A}\circ\ch{E}_{A\to B}^p\right)\fleft[\Phi_{AA'}\fright]\fright] \\
	&=-2\log_2d_A-\log_2\left(p\tr\fleft[\Phi_{AA'}\left(\pi_A\otimes\pi_{A'}\right)\fright]\right) \\
	&=-\log_2p. \label{pf:erasure-2}
\end{align}
Combining Eqs.~\eqref{pf:erasure-1} and \eqref{pf:erasure-2}, it follows from Eq.~\eqref{eq:bounds} that
\begin{align}
	-\log_2p&\leq I_\abb{doe}\fleft(\ch{E}^p\fright)\leq I_\abb{doe}^\infty\fleft(\ch{E}^p\fright)\leq I_\wang\fleft(\ch{E}^p\fright)\leq-\log_2p,
\end{align}
which implies that
\begin{align}
	I_\abb{doe}\fleft(\ch{E}^p\fright)&=I_\abb{doe}^\infty\fleft(\ch{E}^p\fright)=I_\wang\fleft(\ch{E}^p\fright)=-\log_2p.
\end{align}
The max-information of $\ch{E}_{A\to B}^p$ is equal to~\cite[Eq.~(75)]{fang2020QuantumChannelSimulation}
\begin{align}
	I_{\max}\fleft(\ch{E}^p\fright)&=\log_2\left(\left(1-p\right)d_A^2+p\right).
\end{align}
By Theorem~\ref{thm:classical-asymptotic}, the asymptotic retrocausal classical capacity of $\ch{E}_{A\to B}^p$ is thus given by
\begin{align}
	C_\abb{retro}\fleft(\ch{E}^p\fright)&=I_{\max}\fleft(\ch{E}^p\fright)+I_\abb{doe}^\infty\fleft(\ch{E}^p\fright) \\
	&=\log_2\left(\left(1-p\right)d_A^2+p\right)-\log_2p.
\end{align}
The asymptotic entanglement-assisted classical capacity of $\ch{E}_{A\to B}^p$ is equal to~\cite[p.~3083]{bennett1999EntanglementassistedClassicalCapacity}
\begin{align}
	C_\abb{EA}\fleft(\ch{E}^p\fright)&=2\left(1-p\right)\log_2d_A.
\end{align}
The asymptotic (unassisted) classical capacity of $\ch{E}_{A\to B}^p$ is equal to~\cite{bennett1997CapacitiesQuantumErasure}
\begin{align}
	C\fleft(\ch{E}^p\fright)&=\left(1-p\right)\log_2d_A.
\end{align}
See Fig.~\ref{fig:erasure} for a numerical comparison between the asymptotic retrocausal classical capacity, the asymptotic entanglement-assisted classical capacity, and the asymptotic (unassisted) classical capacity of erasure channels for when $d_A=2$.

\begin{figure}[t]
\includegraphics[scale=0.9]{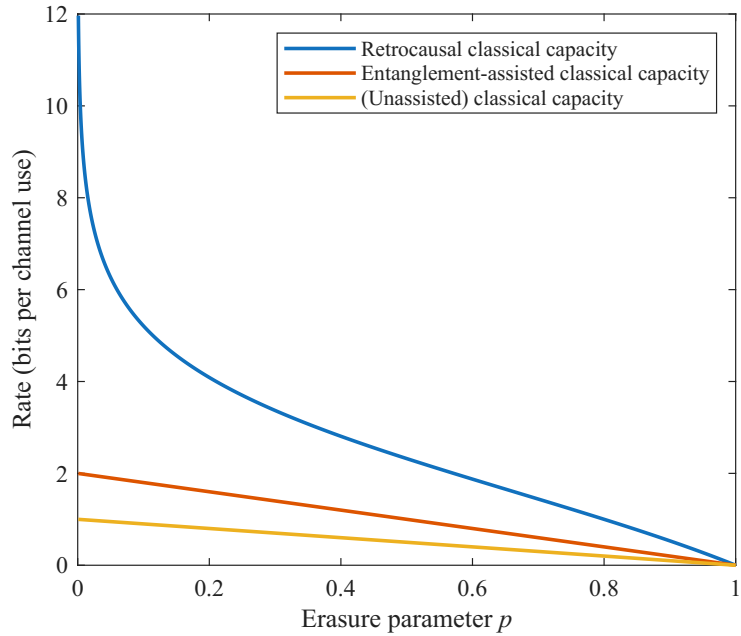}
\caption{For qubit erasure channels, the asymptotic retrocausal classical capacity, the asymptotic entanglement-assisted classical capacity, and the asymptotic (unassisted) classical capacity are compared numerically as functions of the erasure parameter $p\in[0,1]$.  The asymptotic retrocausal classical capacity approaches $\infty$ as $p\searrow0$.}
\label{fig:erasure}
\end{figure}

\subsection{Amplitude-damping channels}
\label{sec:damping}

An amplitude-damping channel $\ch{A}_{A\to A}^\gamma\in\s{C}_{A\to A}$ with $d_A=2$ and parameter $\gamma\in[0,1]$ is defined as
\begin{align}
	\ch{A}_{A\to A}^\gamma\fleft[\cdot\fright]&\equiv\left(\op{0}{0}_A+\sqrt{1-\gamma}\op{1}{1}_A\right)\left(\cdot\right)_A\left(\op{0}{0}_A+\sqrt{1-\gamma}\op{1}{1}_A\right)+\gamma\op{0}{1}_A\left(\cdot\right)_A\op{1}{0}_A.
\end{align}
The Doeblin information of $\ch{A}_{A\to A}^\gamma$ is equal to~\cite[Lemma~9]{george2025QuantumDoeblinCoefficients}
\begin{align}
	I_\abb{doe}\fleft(\ch{A}^\gamma\fright)&=-2\log_2\left(1-\sqrt{1-\gamma}\right).
\end{align}
The max-information of $\ch{A}_{A\to A}^\gamma$ is given by~\cite[Eq.~(69)]{fang2020QuantumChannelSimulation}
\begin{align}
	I_{\max}\fleft(\ch{A}^\gamma\fright)&=\log_2\left(2\left(1+\sqrt{1-\gamma}\right)-\gamma\right).
\end{align}
By Theorem~\ref{thm:quantum-asymptotic}, the asymptotic retrocausal quantum capacity of $\ch{A}_{A\to A}^\gamma$ has the following lower bound:
\begin{align}
	Q_\abb{retro}\fleft(\ch{A}^\gamma\fright)&=\frac{1}{2}\left(I_{\max}\fleft(\ch{A}^\gamma\fright)+I_\abb{doe}^\infty\fleft(\ch{A}^\gamma\fright)\right) \\
	&\geq\frac{1}{2}\left(I_{\max}\fleft(\ch{A}^\gamma\fright)+I_\abb{doe}\fleft(\ch{A}^\gamma\fright)\right) \\
	&=\frac{1}{2}\left(\log_2\left(2\left(1+\sqrt{1-\gamma}\right)-\gamma\right)-2\log_2\left(1-\sqrt{1-\gamma}\right)\right). \label{eq:damping}
\end{align}
The asymptotic entanglement-assisted quantum capacity of $\ch{A}_{A\to A}^\gamma$ is equal to~\cite[Eq.~(38)]{giovannetti2005InformationcapacityDescriptionSpinchain}
\begin{align}
	Q_\abb{EA}\fleft(\ch{A}^\gamma\fright)&=\max_{\lambda\in[0,1]}\left\{h_2\fleft(\lambda\fright)+h_2\fleft(\left(1-\gamma\right)\lambda\fright)-h_2\fleft(\gamma\lambda\fright)\right\}.
\end{align}
The asymptotic (unassisted) quantum capacity of $\ch{A}_{A\to A}^\gamma$ is equal to~\cite[Eq.~(36)]{giovannetti2005InformationcapacityDescriptionSpinchain}
\begin{align}
	Q\fleft(\ch{A}^\gamma\fright)&=\max_{\lambda\in[0,1]}\left\{h_2\fleft(\left(1-\gamma\right)\lambda\fright)-h_2\fleft(\gamma\lambda\fright)\right\}.
\end{align}
See Fig.~\ref{fig:damping} for a numerical comparison between the achievable asymptotic rate for retrocausal quantum communication in Eq.~\eqref{eq:damping}, the asymptotic entanglement-assisted quantum capacity, and the asymptotic (unassisted) quantum capacity of amplitude-damping channels.

\begin{figure}[t]
\includegraphics[scale=0.9]{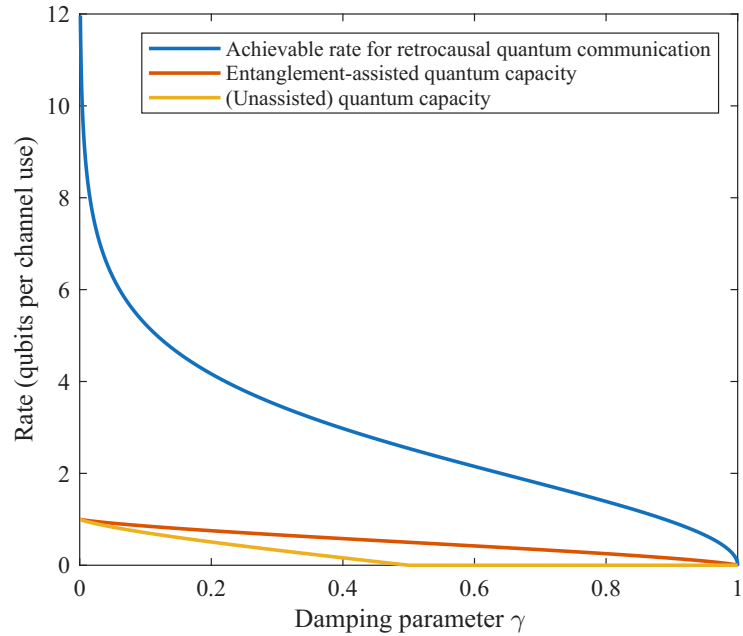}
\caption{For amplitude-damping channels, an achievable asymptotic rate for retrocausal quantum communication, the asymptotic entanglement-assisted quantum capacity, and the asymptotic (unassisted) quantum capacity are compared numerically as functions of the damping parameter $\gamma\in[0,1]$.  The asymptotic retrocausal quantum capacity approaches $\infty$ as $\gamma\searrow0$.}
\label{fig:damping}
\end{figure}

\end{document}